\DeclarePairedDelimiter\ceil{\lceil}{\rceil}
\DeclarePairedDelimiter\floor{\lfloor}{\rfloor}
\newcommand{\proofonlyif}{\smallskip\textit{Only if:\quad}}
\newcommand{\proofif}{\smallskip\textit{If:\quad}}
\newcommand{\PenroseBanzhaf}{\beta}
\newcommand{\ShapleyShubik}{\varphi}
\newcommand{\PI}{\textsc{PI}}
\newcommand{\p}{\ensuremath{\mathrm{P}}}
\newcommand{\np}{\ensuremath{\mathrm{NP}}}
\newcommand{\conp}{\ensuremath{\mathrm{coNP}}}
\newcommand{\PP}{\ensuremath{\mathrm{PP}}}
\newcommand{\CE}{\ensuremath{\mathrm{C_{=}}}}
\newcommand{\CC}{\ensuremath{\mathrm{C}}}
\newcommand{\OMIT}[1]{}
\newcommand{\littlep}{{\rm p}}
\newcommand{\manyone}{\ensuremath{\mbox{$\,\leq_{\rm m}^{{\littlep}}$\,}}}
\newcommand{\condition}{\,|\:}
\newcommand{\EP}[3]{
	\begin{center}
		\smallskip
		{\small 
			\begin{tabularx}{\columnwidth}{@{}l@{\hspace*{2mm}}l@{}}
				\toprule
				\multicolumn{2}{c}{\sc{#1}} \\
				\midrule
				{\bf Given:}& \parbox[t]{0.80\columnwidth}{#2\vspace*{1mm}} \\
				{\bf Question:}& \parbox[t]{0.80\columnwidth}{#3\vspace*{.5mm}} \\ 
				\bottomrule
			\end{tabularx}
		}
		\smallskip
	\end{center}
}
\newenvironment{proofsketch}{\noindent{\textit{Proof Sketch.}}}{\literalqed\bigskip}
\def\literalqed{{\ \nolinebreak\hfill\mbox{\qed\quad}}}
\newcommand{\sproofof}[1]{\noindent{\textit{Proof of~{#1}.}}}
\newcommand{\eproofof}[1]{\noindent{\hspace*{0.1in} \hfill \literalqed~{\scriptsize #1}}\bigskip}
\newtheorem{theorem}{Theorem}
\newtheorem{lemma}[theorem]{Lemma}
\newtheorem{definition}{Definition}
\newcommand{\BibTeX}{B\kern-.05em{\sc i\kern-.025em b}\kern-.08em\TeX}
\begin{document}

\begin{frontmatter}

\paperid{1869}

\title{Control by Adding Players to Change or Maintain the Shapley--Shubik or
	the Penrose--Banzhaf Power Index in Weighted Voting Games Is Complete
	for NP$^{\textbf{PP}}$}

\author[A]{\fnms{Joanna}~\snm{Kaczmarek}\orcid{0000-0001-6652-6433}
\thanks{Corresponding Author. Email: joanna.kaczmarek@hhu.de}}
\author[A]{\fnms{J\"{o}rg}~\snm{Rothe}\orcid{0000-0002-0589-3616}}

\address[A]{Institut f\"{u}r Informatik, MNF, Heinrich-Heine-Universität D\"{u}sseldorf, D\"{u}sseldorf, Germany}

\begin{abstract}
Weighted voting games are a well-known and useful class of succinctly representable simple games that have many real-world applications, e.g., to model collective decision-making in legislative bodies or shareholder voting.
Among the structural control types being analyzing, one is control by adding players to weighted voting games, so as to either change or to maintain a player's power in the sense of the (probabilistic) Penrose--Banzhaf power index
or the Shapley--Shubik power index.
For the
problems related to this control, the best known lower bound is $\PP$-hardness, where $\PP$ is ``probabilistic polynomial time,'' and the best known upper bound is the class $\np^{\PP}$, i.e., the class $\np$ with a $\PP$ oracle.
We optimally raise this lower bound by showing $\np^{\PP}$-hardness of all 
these
problems 
for 
the Penrose--Banzhaf and the Shapley--Shubik 
indices, 
thus establishing completeness for them in that class.
Our proof technique may turn out to be useful for solving other open problems related to weighted voting games with such a complexity gap as well.
\end{abstract}

\end{frontmatter}

\section{Introduction}

Weighted voting games (WVGs) are a central, very popular class of simple coalitional games with many real-world applications.
They can be used to model and analyze collective decision-making in legislative bodies and in parliamentary voting~\cite{sha:b:polsci:power}, such as the European Union or the International Monetary Fund~\cite{gam:j:power-indices-for-political-and-financial-decision-making-review}, in joint stock companies, etc.
For more information, we refer to the books by Chalkiadakis \emph{et al.}~\citep{cha-elk-woo:b:computational-aspects-of-cooperative-game-theory}, Taylor and Zwicker~\citep{tay-zwi:b:simple-games}, and Peleg and Sudh{\"{o}}lter~\citep{pel-sud:b:cooperative-games} and the book chapters by 
Chalkiadakis and Wooldridge~\citep{cha-woo:b:handbook-comsoc-weighted-voting-games} and 
Bullinger \emph{et al.}~\citep{bul-elk-rot:b-2nd-edition:economics-and-computation-cooperative-game-theory}.
Especially important is the analysis of how significant players are in 
WVGs,
i.e., what they contribute to forming winning coalitions.
Their influence
can be measured by so-called power indices among which 
some well-known examples are:
the \emph{Shapley--Shubik index} due to
Shapley and Shubik~\citep{sha-shu:j:shapley-shubik-index},
the \emph{probabilistic Penrose--Banzhaf index} due to
Dubey and Shapley~\citep{dub-sha:j:banzhaf}, 
and also the \emph{normalized Penrose--Banzhaf
	index} due to Penrose~\citep{pen:j:banzhaf-index} and
Banzhaf~\citep{ban:j:weighted-voting-doesnt-work}.
We are concerned with the former two.

Much work has been done on how one can tamper with a given player's power in a
WVG.
For example, the effect of merging or splitting players (the latter a.k.a.\ ``false-name manipulation'')  was studied by Aziz \emph{et al.}~\citep{azi-bac-elk-pat:j:false-name-manipulations-wvg} and later on by Rey and Rothe~\citep{rey-rot:j:false-name-manipulation-PP-hard}.
Zuckerman \emph{et al.}~\citep{zuc-fal-bac-elk:j:manipulating-quote-in-wvg} studied the impact of manipulating the quota in 
WVGs
on the power of players.
Another way of tampering with the players' power was introduced by Rey and Rothe~\citep{rey-rot:j:structural-control-in-weighted-voting-games} who studied control problems by adding players to or by deleting players from a 
WVG;
their results have recently been improved 
by Kaczmarek and Rothe~\citep{kac-rot:j:controlling-weighted-voting-games-by-deleting-or-adding-players-with-or-without-changing-the-quota}.

Control attempts in voting (e.g., by adding or deleting either voters or
candidates) have been studied in
depth~\cite{fal-rot:b:handbook-comsoc-control-and-bribery}.
Surprisingly, however, much less work has been done on control
attempts in cooperative game theory, such as for 
WVGs
(e.g., by adding or deleting players).
Control by adding players to
WVGs is inspired by the analogous notion of control by adding either
candidates or voters to elections in voting.  There are
many real-world scenarios where WVGs and power indices are used to
analyze the power of agents and where there is an incentive to change
the power in the situation to somebody's advantage (e.g., in politics
or to measure control in corporate structures).  Concretely, WVGs are
the typical way to model decision-making in the EU, as countries can
be assigned a weight 
(essentially related to their population size).  
The EU is
constantly expanding: New members join in (or, rarely, they leave),
which is exactly
control by adding players,
raising the question of if and how the power of old EU members is
changed by adding new ones to the EU---just one clear-cut case of
motivation among various others.  If new members join, an old one may
insist on having the same power afterwards (motivating the goal of
``maintaining one's power''), or at least not lose power
(``nondecreasing one's power''), or Poland may insist that Germany's
power does not increase when Ukraine joins (``nonincreasing one's
power'').
We continue the work on the computational complexity of structural control by adding players to a weighted voting game started by Rey and Rothe~\citep{rey-rot:j:structural-control-in-weighted-voting-games}.
They showed $\PP$-hardness for the related problems and an upper bound of $\np^{\PP}$.  We optimally improve their results by showing $\np^{\PP}$-completeness for these problems.

Many of the problems related to 
WVGs
are computationally hard.
For instance, under suitable functional reducibilities, computing the Shapley--Shubik power index~\cite{den-pap:j:comparative-solution-concepts} and the Penrose--Banzhaf power indices~\cite{pra-kel:j:voting} is $\#\p$-complete, where $\#\p$ is the \emph{counting version of the class~$\np$}~\cite{val:j:permanent}.
This is employed by Faliszewski and Hemaspaandra~\cite{fal-hem:j:power-index-comparison} in their result that comparing a given player's probabilistic Penrose--Banzhaf index or a given player's Shapley--Shubik index in two given 
WVGs
is $\PP$-complete.
$\PP$ is \emph{probabilistic polynomial time}~\cite{gil:j:probabilistic-tms}, a complexity class that is presumably
larger than the class~$\np$.

Adding players is just one possibility to change the outcome of a WVG; as mentioned above, Aziz \emph{et al.}~\citep{azi-bac-elk-pat:j:false-name-manipulations-wvg} proposed merging or splitting players so as to change their power.
The problems related to merging players in 
WVGs
were later proven to be $\PP$-complete~\cite{rey-rot:j:false-name-manipulation-PP-hard}.
However, interestingly, the same complexity gap we are concerned with here---$\PP$-hardness versus membership in $\np^{\PP}$---is also persistent for false-name manipulation, i.e., for the problems related to splitting players~\cite{rey-rot:j:false-name-manipulation-PP-hard}.
The novel proof techniques developed in the current paper may thus turn out to be useful for closing this 
huge
complexity gap as well, which provides another strong motivation of our work.
There are many interesting open problems in the literature on WVGs---another one is control by adding or deleting edges in graph-restricted
WVGs, again with a complexity gap between $\PP$-hardness and membership in $\np^{\PP}$~\cite{kac-rot-tal:c:complexity-of-control-by-adding-or-deleting-edges-in-graph-restricted-weighted-voting-games}---and our novel
approach might be useful to settle them as well.

We start with providing the needed notions from cooperative game theory and computational complexity in Section~\ref{sec:preliminaries},
and introduce a new $\np^\PP$-complete problem which is used in some of our reductions. 
In Section~\ref{sec:prereduction}, we prepare some tools and show their properties that are needed in our proofs.
Finally in Section~\ref{sec:control-problems}, we present our results.
Due to space limitations, some of our proofs are moved to the technical appendix.

\section{Preliminaries}
\label{sec:preliminaries}

We start by
recalling some notions from cooperative game theory.
Let $N = \{1, \ldots , n\}$ be a set of players.
For $v : 2^N \rightarrow \mathbb{R}_{\ge 0}$, where
$\mathbb{R}_{\ge 0}$ denotes the set of nonnegative real numbers, a \emph{coalitional game} is a pair $(N,v)$ and each subset of $N$ is called a \emph{coalition}.
$(N,v)$ is a \emph{simple} coalitional game if it is \emph{monotonic} (i.e., $v(T) \leq v(T')$ for any $T,T'$ with $T \subseteq T' \subseteq N$), and $v(S) \in \{0,1\}$ for each coalition $S \subseteq N$.
We focus on the following type of simple coalitional games.

\begin{definition}
	A \emph{weighted voting game}
	$\mathcal{G} = (w_{1}, \ldots, w_{n};q)$
	is a simple coalitional game with player set $N$ that
	consists of a natural number
	$q$ called the \emph{quota} and nonnegative integer
	weights, where $w_{i}$ is the \emph{weight of player~$i \in N$}.
	For each coalition $S \subseteq N$, let  $w_{S} = \sum_{i \in S} w_{i}$
	and define the \emph{characteristic function
		$v : 2^N \rightarrow \{0,1\}$ of $\mathcal{G}$} as
	$v(S) = 1$ if $w_{S} \ge q$, and $v(S) = 0$ otherwise.
	We say that $S$ is a \emph{winning coalition} if $v(S)=1$,
	and it is a \emph{losing coalition} if $v(S)=0$.
	Moreover, we call a player~$i$ \emph{pivotal for coalition
		$S\subseteq N\setminus\{i\}$} if $v(S\cup \{i\})-v(S)=1$. 
\end{definition}

One of the things we want to know about players is how significant they are in a given game.
We usually measure this by so-called \emph{power indices}.
The main information used in determining the power index of a player~$i$ is the number of coalitions $i$ is pivotal for.
We study two of the most popular and well-known power indices.
One of them is the \emph{probabilistic Penrose--Banzhaf power index},
which was introduced by Dubey and Shapley~\citep{dub-sha:j:banzhaf} as an
alternative to the original \emph{normalized Penrose--Banzhaf
	index}~\cite{pen:j:banzhaf-index,ban:j:weighted-voting-doesnt-work}.

\begin{definition}\label{PBI}
	Let $\mathcal{G}$ be a WVG.
	The \emph{probabilistic Penrose--Banzhaf power index of a player~$i$ in
		$\mathcal{G}$} is defined by
	\[
	\PenroseBanzhaf(\mathcal{G},i) =
	\frac{1}{2^{n-1}}\sum\limits_{S \subseteq N \setminus \{i\}}(v(S \cup \{i\})-v(S)).
	\]
\end{definition}

The other index we will study is the \emph{Shapley--Shubik power index}, introduced by Shapley and Shubik~\citep{sha-shu:j:shapley-shubik-index} as follows:

\begin{definition}\label{SSI}
	Let $\mathcal{G}$ be a WVG.
	The \emph{Shapley--Shubik power index of a player~$i$ in $\mathcal{G}$}
	is defined by
	\[
	\ShapleyShubik(\mathcal{G},i) =
	\frac{1}{n!}\sum\limits_{S \subseteq N \setminus \{i\}}|S|!(n-1-|S|)!(v(S \cup \{i\})-v(S)).
	\]
\end{definition}

We assume familiarity with the basic concepts of computational
complexity theory, such as the well-known complexity classes $\p$
(\emph{deterministic polynomial time}), $\np$ (\emph{nondeterministic
	polynomial time}), and $\PP$ (\emph{probabilistic polynomial
	time}~\cite{gil:j:probabilistic-tms}).
$\np^{\PP}$ is the class of problems that can be solved by an $\np$
oracle Turing machine accessing a $\PP$ oracle.
It is a very large complexity class containing even
the entire polynomial hierarchy by Toda's result~\cite{tod:j:pp-ph}.

We will use the notions of
completeness and hardness for a complexity class based on the
polynomial-time many-one reducibility: A problem $X$ \emph{(polynomial-time
	many-one) reduces to a problem $Y$} ($X \manyone Y$) if
there is a polynomial-time computable function $\rho$ such that
for each input~$x$, $x \in X \Longleftrightarrow \rho(x) \in Y$; $Y$ is
hard for a complexity class $\mathcal{C}$ if $C \manyone Y$ for each
$C \in \mathcal{C}$; 
and $Y$ is complete for $\mathcal{C}$ if $Y$ is
$\mathcal{C}$-hard and $Y \in \mathcal{C}$.  For more background on
complexity theory, we refer to some of the common text
books~\cite{gar-joh:b:int,pap:b:complexity,rot:b:cryptocomplexity}.

Valiant~\citep{val:j:permanent} introduced $\#\p$ as the class of
functions that give the number of solutions of $\np$ problems.
$\#\p$ is a.k.a.\ the \emph{``counting version of $~\np$''}:
For every $\np$ problem $X$, $\#X$ denotes the function that maps each
instance of $X$ to the number of its solutions.  For example, for
the problem
$\textsc{SAT} = \{\phi \condition \phi$ is a boolean
formula satisfied by at least one truth assignment$\}$,
which is $\np$-complete~\cite{coo:c:theorem-proving},
$\#\textsc{SAT}$ maps each boolean formula to the number of its
satisfying assignments.  
Clearly, any $\np$ problem $X$ is
closely related to its counting version $\#X$ because if we can efficiently
count the number of solutions of an instance~$x$, we can immediately
tell whether $x$ is a yes- or a no-instance of~$X$: $x \in X$ exactly
if the number of solutions of $x$ is positive.

Deng and Papadimitriou~\citep{den-pap:j:comparative-solution-concepts}
showed that computing the Shapley--Shubik index of a player in a given
WVG is complete for~$\#\p$ via \emph{functional}
many-one reductions.
Prasad and Kelly~\citep{pra-kel:j:voting} proved that computing the
probabilistic Penrose--Banzhaf index is parsimoniously complete
for~$\#\p$.  
$\#\p$ and $\PP$, even though the former is a class of functions and
the latter a class of decision problems, are closely related by the
well-known result that $\p^\PP = \p^{\#\p}$.
For more complexity-theoretic background on the
\emph{counting (polynomial-time) hierarchy}, which contains~$\np^\PP$,
we refer to \cite{wag:j:succinct,par-sch:j:parallel-computation-with-threshold-functions,tor:j:quantifiers,tod:j:pp-ph,rot:b:cryptocomplexity}. 
Using the standard problem complete for $\PP$ due to
Gill~\citep{gil:j:probabilistic-tms}, i.e., $\textsc{MajSAT}
= \{\phi \condition \phi$ is a boolean formula satisfied by a majority
of truth assignments$\}$,
Littman \emph{et al.}~\citep{lit-gol-mun:j:complexity-probabilistic-planning}
introduced and studied the following problem that they proved to be
$\np^\PP$-complete:

\EP{\textsc{Exist-Majority-SAT} (\textsc{E-MajSAT})}
{A boolean formula $\phi$ with $n$ variables $x_1,\ldots,x_n$ and an
	integer~$k$, $1\le k\le n$.}
{Is there an assignment to the first $k$ variables $x_1,\ldots,x_k$
	such that a majority of assignments to the remaining $n-k$
	variables $x_{k+1},\ldots,x_n$ satisfies~$\phi$?}

Another 
closely related $\np^\PP$-complete decision problem was
introduced by de Campos et
al.~\citep{cam-sta-wey:c:complexity-stochastic-optimization}:

\EP{\textsc{Exist-Minority-SAT} (\textsc{E-MinSAT})}
{A boolean formula $\phi$ with $n$ variables $x_1,\ldots,x_n$ and an
	integer~$k$, $1\le k\le n$.}
{Is there an assignment to the first $k$ variables $x_1,\ldots,x_k$
	such that at most half of the assignments to the remaining $n-k$
	variables $x_{k+1},\ldots,x_n$ satisfies $\phi$?}

Note that if $k=0$, \textsc{E-MajSAT} is equivalent to the
$\PP$-complete problem \textsc{MajSAT}, and \textsc{E-MinSAT} is
equivalent to the complement of \textsc{MajSAT}, which is also
$\PP$-complete since the class $\PP$ is closed under
complement~\cite{gil:j:probabilistic-tms}.  If $k=n$,
\textsc{E-MajSAT} is equivalent to the $\np$-complete problem
\textsc{SAT}, and \textsc{E-MinSAT} is equivalent to the
complement of \textsc{SAT}, i.e., it is $\conp$-complete.
Therefore, we can omit these cases ($k=0$ and $k=n$) when proving
$\np^{\PP}$-hardness of our problems.  Moreover, we can also assume
that a given formula in CNF does not contain any variable $x$ in both
forms, $x$ and $\neg x$, in any of its clause (which can be checked in
polynomial time) 
because then the clause would be true for any possible
truth assignment.  Also, we will assume that our inputs for these
problems contain only those variables that actually occur in the
given boolean formula.

Rey and Rothe~\citep{rey-rot:j:structural-control-in-weighted-voting-games} defined problems capturing
control by adding players to a given WVG so as to change a given
player's power in the modified game.
To increase this power for an index \textsc{PI}, 
the control
problem is defined as follows:

\EP{$\textsc{Control-by-Adding-Players-to-Increase-PI}$}
{A WVG $\mathcal{G}$ with a set $N$ of players, a set $M$ of players (given by their weights) that can be added to~$\mathcal{G}$,
	a distinguished player~$p \in N$, and a positive integer
	$k\le \|M\|$.}
{Can at most $k$ players $M' \subseteq M$
	be added to $\mathcal{G}$ such that for the new game
	$\mathcal{G}_{\cup M'}$, it holds that
	$\PI(\mathcal{G}_{\cup M'},p) > \PI(\mathcal{G},p)$?}
The corresponding control problems for decreasing, nonincreasing, nondecreasing, and maintaining \textsc{PI} are defined analogously, by changing the relation sign in the question to ``$<$,'' ``$\le$,'' ``$\ge$,'' and ``$=$,''
respectively.
Additionally, we assume that we add at least one new player 
in case of nondecreasing, nonincreasing, or maintaining \textsc{PI}
(otherwise, the control problems would be trivial).

For both the Penrose--Banzhaf and the Shapley--Shubik power index, Rey and Rothe~\citep{rey-rot:j:structural-control-in-weighted-voting-games} showed that these five control problems are $\PP$-hard, and they observed that $\np^{\PP}$ is the best known upper bound for them.
Our goal in this paper is to raise the $\PP$-hardness lower bound of these problems to $\np^{\PP}$-hardness, thus establishing their completeness 
for this class.
We now introduce another problem that will be used in some of our
proofs and state its $\np^{\PP}$-completeness:

\EP{\textsc{Exist-Exact-SAT} (\textsc{E-ExaSAT})}
{A boolean formula $\phi$ with $n$ variables $x_1,\ldots,x_n$, an
	integer~$k$, $1\le k\le n$, and an integer~$\ell$.}
{Is there an assignment to the first $k$ variables $x_1,\ldots,x_k$
	such that \emph{exactly} $\ell$ assignments to the remaining
	$n-k$ variables $x_{k+1},\ldots,x_n$ 
	satisfy~$\phi$?}

\begin{lemma}\label{lem:e-exact-sat}
	\label{lem:E-Exact-SAT}
	\textsc{E-ExaSAT} is $\np^{\PP}$-complete.
\end{lemma}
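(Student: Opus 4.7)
\bigskip
\noindent\textbf{Proof Plan.}
I would prove membership and hardness separately, reusing only notions from the preliminaries.

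For membership in $\np^{\PP}$, the natural $\np$ machine nondeterministically guesses an assignment $a$ to the first $k$ variables, substitutes it into $\phi$ to obtain $\phi|_a$, and then must decide whether $\#\phi|_a = \ell$. Since $\PP$ is closed under complement, both of the questions ``is $\#\phi|_a \geq \ell$?'' and ``is $\#\phi|_a \leq \ell$?'' are in $\PP$, so the $\np$ machine can settle the equality using two $\PP$ oracle queries; the problem therefore lies in $\np^{\PP}$.

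For $\np^{\PP}$-hardness I would reduce from \textsc{E-MajSAT}. Let $(\phi,k)$ be an instance with $n$ variables $x_1,\ldots,x_n$; by the observation in the preliminaries we may assume $1 \leq k \leq n-1$. Writing $m_a$ for the number of assignments to $x_{k+1},\ldots,x_n$ that satisfy $\phi|_a$ and setting $T = 2^{n-k-1}$, the instance is a YES iff some $a$ has $m_a \geq T+1$. I introduce fresh existential variables $y_1,\ldots,y_{n-k-1}$ (interpreted as a binary integer $b \in \{0,\ldots,T-1\}$) together with one fresh counting variable $t$, and define
\[
\phi' \;=\; (t \wedge \phi) \;\vee\; \bigl(\neg t \wedge (\neg x_{k+1} \vee C)\bigr),
\]
where $C$ is a polynomial-size Boolean formula expressing the integer comparison $\langle x_{k+2},\ldots,x_n\rangle \geq \langle y_1,\ldots,y_{n-k-1}\rangle$. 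I order the variables so that the first $k' := n-1$ are $x_1,\ldots,x_k,y_1,\ldots,y_{n-k-1}$ and the remaining $n-k+1$ are $x_{k+1},\ldots,x_n,t$, and I set $\ell := 2^{n-k} + T + 1$.

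For fixed $(a,b)$, the $t=1$ branch of $\phi'$ contributes $m_a$ satisfying assignments of the counting variables, while the $t=0$ branch contributes $2^{n-k-1}$ assignments (from $x_{k+1}=0$) plus $2^{n-k-1} - b$ (from $x_{k+1}=1$ combined with $\langle x_{k+2},\ldots\rangle \geq b$), for a total of $m_a + 2^{n-k} - b$. This total equals $\ell$ iff $m_a = b + T + 1$, and as $b$ ranges over $\{0,\ldots,T-1\}$, such a witness exists iff $m_a \in \{T+1,\ldots,2T\}$, which is exactly the strict-majority condition on the $n-k$ original counting variables. The main delicate step I foresee is writing $C$ as a polynomial-size Boolean formula and carefully handling the boundary case $k = n-1$ (where $C$ compares empty bit-strings and evaluates vacuously to true); once those are in place, correctness is routine arithmetic and polynomial-time computability of the reduction is immediate.
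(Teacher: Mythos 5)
Your proposal is correct, but it takes a genuinely different route from the paper. The paper gives no explicit reduction at all: it invokes the counting-hierarchy machinery---Wagner's quantifier characterization together with Tor\'{a}n's identities $\exists\CC\mathcal{K}=\exists\CE\mathcal{K}$ and $\np^{\CC\mathcal{K}}=\exists\CC\mathcal{K}$, which for $\mathcal{K}=\p$ yield $\np^{\PP}=\exists\CE\p$---and then observes that completeness of \textsc{E-ExaSAT} follows by rerunning Littman et al.'s argument for \textsc{E-MajSAT} with the $\CE\p$-complete problem \textsc{Exact-SAT} in place of \textsc{MajSAT}. You instead prove membership directly (guess the prefix $a$, then settle $\#\phi|_a=\ell$ with two $\PP$ oracle queries, using closure of $\PP$ under complement) and give an explicit polynomial-time many-one reduction from the known $\np^{\PP}$-complete problem \textsc{E-MajSAT}, converting the threshold condition $m_a\ge T+1$ (with $T=2^{n-k-1}$) into an exact count by existentially guessing the surplus $b$ and padding with a controllable block of $2^{n-k}-b$ extra satisfying assignments via a polynomial-size comparator formula; the arithmetic checks out ($m_a+2^{n-k}-b=\ell$ iff $m_a=b+T+1$, and such a $b\in\{0,\ldots,T-1\}$ exists iff $m_a\ge T+1$, since always $m_a\le 2T$), including the boundary case $k=n-1$, and your strict-majority convention matches the one the paper uses for \textsc{E-MajSAT}. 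What each approach buys: the paper's argument is very short given the structural results on the counting hierarchy and exhibits the lemma as an instance of a general quantifier characterization, whereas yours is self-contained and elementary, relying only on the quoted completeness of \textsc{E-MajSAT} and closure of $\PP$ under complement. One minor remark: your $\phi'$ is not in CNF, while the paper later feeds CNF instances of \textsc{E-ExaSAT} into its control reductions; this is not a gap in the lemma as stated, and it could be repaired by a parsimonious Tseitin-style translation whose auxiliary variables are placed among the counting variables.
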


The proof of Lemma~\ref{lem:e-exact-sat} can be found in the technical
appendix.

\OMIT{
	Finally, note that for $\ell=0$, \textsc{E-ExaSAT} is in $\np^{\conp}$;
	therefore, we can omit this case when proving $\np^{\PP}$-completeness.
} %

\section{%
	Transforming Value Assignments of Boolean Formulas to Weight Vectors}
\label{sec:prereduction}

First, let us define a transformation from a value assignment for a given boolean formula to vectors of weights to be used for some players in our reductions later on.

\begin{definition}
	\label{def:prereduction}
	Let $\phi$ be given boolean formula in CNF
	with variables $x_1,\dots,x_n$ and $m$ clauses.
	Let $k\in\mathbb{N}$ with $k\le n$ and $r = \ceil{\log_{2} n}-1$.
	Let us define the following two sets of weight vectors which are going to be assigned as weights to players divided either into three sets---$M$, $A$, and $C$---or into four sets---$M$, $A$, $C$, and $C'$---in our proofs later on:
	\begin{description}
		\item[Set 1:] For some $t \in\mathbb{N}\setminus\{0\}$ such that
		$10^{t} > 2^{\ceil{\log_2 n}+1}$, and
		for $i \in \{1,\ldots,n\}$, define
		\begin{eqnarray*}
			a_i & = & 10^{t(m+1)+i}+\sum_{\substack{j\,:\ \textrm{clause $j$} \\ \textrm{ contains $x_i$}}} 10^{tj}
			\text{ and } \\
			b_i & = & 10^{t(m+1)+i}+\sum_{\substack{j\,:\ \textrm{clause $j$} \\ \textrm{ contains $\neg x_i$}}} 10^{tj},
		\end{eqnarray*}
		and for $j\in\{1,\ldots,m\}$ and $s\in\{0,\ldots,r\}$, define
		\begin{eqnarray*}
			c_{j,s} & = & 2^{s}\cdot 10^{tj}.
		\end{eqnarray*}
		Define the following three weight vectors:
		\begin{eqnarray*}
			W_M & = & (a_1,\ldots,a_k,b_1,\ldots,b_k), \\
			W_A & = & (a_{k+1},\ldots,a_n,b_{k+1},\ldots,b_n),\\
			W_C & = & (c_{1,0},\ldots,c_{m,r}).
		\end{eqnarray*}
		
		\item[Set 2:] For some $t,t' \in\mathbb{N}\setminus\{0\}$ such that
		$10^{t'} > 2^{\ceil{\log_2 n}+1}$ 
		and 
		$10^t> 10^{t'} + 2^{\ceil{\log_2 n}+1}\sum_{l=1}^{m}10^{lt'}$, 
		and
		for $i \in \{1,\ldots,n\}$, define $a_i$ and $b_i$
		as in Set~1,
		\OMIT{
			\begin{eqnarray*}
				a_i & = & 10^{t(m+1)+i}+\sum_{\substack{j\,:\ \textrm{clause $j$} \\ \textrm{ contains $x_i$}}} 10^{tj}
				\text{ and } \\
				b_i & = & 10^{t(m+1)+i}+\sum_{\substack{j\,:\ \textrm{clause $j$} \\ \textrm{ contains $\neg x_i$}}} 10^{tj},
			\end{eqnarray*}
		} %
		and for $j\in\{1,\ldots,m\}$ and $s\in\{0,\ldots,r\}$, let
		\[
		c_{j,s}'  =  2^{s}\cdot 10^{t'j}
		\text{ and } 
		c_{j,s}  =  2^{s}\cdot 10^{tj}+c_{j,s}'.
		\]
		In addition to $W_M$ and $W_A$ defined as in Set~1,
		define the following two weight vectors: 
		\[
		W_{C'}  =  (c_{1,0}',\ldots,c_{m,r}')
		\text{ and }
		W_C  =  (c_{1,0},\ldots,c_{m,r}).
		\]
	\end{description}	
	Additionally, let
	\begin{align*}
		q_{1}  ~=~ & \sum_{i=1}^{n} 10^{t(m+1)+i} + 2^{\ceil{\log_{2} n}}\sum_{j=1}^{m} 10^{tj} \textrm{ and }\\
		q_{2}  ~=~ & \sum_{i=1}^{n} 10^{t(m+1)+i} + 2^{\ceil{\log_{2} n}}\sum_{j=1}^{m} 10^{tj}\\
		& + \left(2^{\ceil{\log_{2} n}}-1 \right)\sum_{j=1}^{m} 10^{t'j}.
	\end{align*}
\end{definition}

\begin{lemma}
	\label{lem:correspondence-assignments-weights}
	Let $i\in\{1,2\}$. There exists a bijective transformation from the set of value assignments satisfying a boolean formula $\phi$ to the family of subsets of players with weights defined in Set~$i$ of Definition~\ref{def:prereduction} whose total weight equals $q_{i}$.  
\end{lemma}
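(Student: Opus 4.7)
The plan is to read each weight in positional notation with a very large base and use the conditions on $t$ (and $t'$ in Set~2) to show that no inter-block carries ever occur when summing weights of any subset. This reduces the lemma to a position-by-position matching and makes the bijection essentially mechanical.

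For Set~1 I would work in base $10^t$. Then $a_i$ places a $1$ at the ``variable block'' $m+i$ and a $1$ at each ``clause block'' $j$ with $x_i \in C_j$; symmetrically $b_i$ puts a $1$ at block $m+i$ and at the clause blocks containing $\neg x_i$; and $c_{j,s}$ puts $2^s$ at clause block $j$ only. The quota $q_1$ has digit $1$ at each variable block and digit $2^{\ceil{\log_2 n}}$ at each clause block. For any subset the digit at any block is bounded by $n+(2^{\ceil{\log_2 n}}-1) < 2^{\ceil{\log_2 n}+1} < 10^t$, so the subset sum is carry-free, and it equals $q_1$ iff its block-digits match those of $q_1$ exactly.

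Given this, the forward map sends a satisfying assignment $\alpha$ to the selection that takes $a_i$ if $\alpha(x_i)=1$ and $b_i$ otherwise and, for each clause~$j$, includes those $c_{j,s}$ whose index $s$ appears in the binary expansion of $2^{\ceil{\log_2 n}}-k_j$, where $k_j\ge 1$ is the number of literals of $C_j$ satisfied by~$\alpha$; this lies in $\{0,\ldots,2^{\ceil{\log_2 n}}-1\}$ because $k_j\le n\le 2^{\ceil{\log_2 n}}$, and a direct digit check shows the total equals $q_1$. For the inverse, I would read off $\alpha$ from the variable blocks (each forcing exactly one of $a_i,b_i$) and then note that at each clause block~$j$ the variable selections contribute some $k_j\in\{0,\ldots,n\}$ and the $c_{j,s}$'s contribute some value in $\{0,\ldots,2^{\ceil{\log_2 n}}-1\}$, so matching $2^{\ceil{\log_2 n}}$ forces $k_j\ge 1$ (so every clause is satisfied by $\alpha$) and uniquely determines the $c_{j,s}$ selection via the binary expansion of $2^{\ceil{\log_2 n}}-k_j$. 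The two maps are inverse by construction.

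The Set~2 argument layers the same idea once more. Besides the ``primary'' clause block at $10^{tj}$, each clause has a ``secondary'' block at $10^{t'j}$ that receives $2^s$ from every selected $c_{j,s}$ \emph{and} from every selected $c'_{j,s}$. The bound $10^{t'}>2^{\ceil{\log_2 n}+1}$ prevents carries between secondary blocks, and $10^t>10^{t'}+2^{\ceil{\log_2 n}+1}\sum_{l=1}^m 10^{lt'}$ prevents the entire secondary layer from bleeding into any primary block; so positions remain separable. The primary equations proceed exactly as in Set~1 and fix $k_j$ and the $c_{j,s}$ selection, which contributes $2^{\ceil{\log_2 n}}-k_j$ to the secondary block. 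Matching the secondary digit $2^{\ceil{\log_2 n}}-1$ of $q_2$ then forces the $c'_{j,s}$'s to contribute $k_j-1\in\{0,\ldots,n-1\}\subseteq\{0,\ldots,2^{\ceil{\log_2 n}}-1\}$, uniquely determining the $c'_{j,s}$ selection from the binary expansion of $k_j-1$. The main obstacle I anticipate is the careful verification of the digit bounds in Set~2, because two interleaved base representations must be tracked simultaneously; once the no-carry claim is nailed down, the bijection falls out as in Set~1.
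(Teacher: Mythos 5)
Your proposal is correct and takes essentially the same route as the paper's proof: the paper's explicit chains of inequalities (showing that all lighter players together cannot reach the next variable marker or clause block, and that subsets of $C$, $C'$ only hit multiples of $10^{tj}$, $10^{t'j}$) are exactly your no-carry/positional-separation claims, and the rest—exactly one of $\{a_i,b_i\}$ per variable, the unique completion in $C$ (resp.\ $C\cup C'$) via the binary representation of $2^{\ceil{\log_2 n}}-k_j$ (resp.\ $k_j-1$), and $k_j\ge 1$ precisely when clause $j$ is satisfied—matches the paper's bijection argument step for step. Your mixed-radix bookkeeping is just a more systematic packaging of the same separation the paper verifies by hand.
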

\begin{proofsketch}
	It can be shown that for each set $S$ of weight~$q_i$, for $i\in\{1,2\}$, $S$ has to
	contain exactly $n$ players from $M\cup A$ (namely, $n$ players, each
	with exactly one weight from $\{a_j,b_j\}$, $j\in\{1,\ldots,n\}$), and
	for each $S\cap (M\cup A)$, there exists exactly one set of
	weight~$q_1$ with players from $C$ for Set~$1$ and~$q_2$ from $C\cup C'$ for Set~$2$ (but there can exist subsets of $M\cup A$ of the
	mentioned form that are not contained in any set of weight~$q_i$). 
	We present the details in the technical appendix.
	
	Let us prove that there exists a bijection between the sets of
	weight~$q_i$ and the set of value assignments to the variables
	$x_1,\ldots,x_n$ satisfying the given formula~$\phi$.
	
	For each value
	assignment to the variables $x_1,\ldots,x_n$, let
	$1$ represent \texttt{true} and $0$ \texttt{false}, and let
	\begin{equation}
		d_l = \left\{
		\begin{array}{cc}
			a_l & \textrm{ if } x_l = 1, \\
			b_l & \textrm{ if } x_l = 0.
		\end{array}
		\right. 
	\end{equation}
	The resulting weight vector $\vec{d} = (d_1,\ldots,d_n)$ is unique for
	each assignment to $x_1,\ldots,x_n$ (from the previously mentioned
	assumption that no clause contains both a variable and its negation,
	so $a_l \neq b_l$ for any $l\in\{1,\ldots,n\}$).  Also, if this
	vector~$\vec{d}$ corresponds to a satisfying assignment of~$\phi$, the
	total weight of the players' subset in both cases of Set~1 and Set~2 equals
	\[
	\sum_{l=1}^{n}d_l = \sum_{l=1}^{n} 10^{t(m+1)+l} + \sum_{j=1}^{m}p_{j} 10^{tj},
	\]
	where $p_j$, $1\le p_j \le n$, is at least $1$ since each clause is
	satisfied by our fixed assignment: For each clause~$j$, there exists
	some $x_l$ making it true (i.e., either $x_l=1$ and the clause~$j$
	contains~$x_l$, or $x_l=0$ and~$j$ contains~$\neg x_l$), which implies
	that the corresponding $d_l$ has $10^{tj}$ as one of its summands
	(i.e., either $d_l=a_l$ if $x_l$ is contained in clause~$j$, or
	$d_l=b_l$ if $\neg x_l$ is contained in~$j$).  
	From the fact that
	$p_j\neq 0$ for
	all $j\in\{1,\ldots,m\}$~%
	and 
	the previous analysis, there exists
	exactly one subset of 
	$C$ when $i=1$ or exactly one subset of
	$C\cup C'$ when $i=2$ such that the players with the
	corresponding weights together with the players whose weights
	correspond to $\vec{d}$ form a coalition of weight~$q_i$. Therefore,
	for each value assignment satisfying~$\phi$, there exists a unique set
	of players from
	$A\cup M \cup C$ (respectively, $A\cup M \cup C \cup C'$)
	with total weight~$q_i$.
	
	Conversely, let $S\subseteq M \cup A \cup C$ for $i=1$, and
	$S\subseteq M \cup A \cup C \cup C'$ for $i=2$, be a coalition
	of players whose total weight is~$q_i$.  From the previous analysis,
	$S$ can contain exactly one player with weight from $\{a_j, b_j\}$ for
	$j\in\{1,\ldots,n\}$, and for $S\cap (M\cup A)$, there exists exactly
	one subset of $C$ for $i=1$, and exactly one subset of $C\cup
	C'$ for $i=2$, which creates with the former a coalition of
	players with total weight~$q_i$, i.e., there exist no two different
	sets $S$ and $S'$ both with $w_S = w_{S'} = q_i$ such that
	$S\cap(M\cup A) = S' \cap (M\cup A)$.
	
	For the set $S\cap (M\cup A)$ with the weight vector
	$(d_1,\ldots,d_n)$, set
	\begin{equation}\label{addingplayers-increasePBI:def:x_i}
		x_\ell = \left\{
		\begin{array}{ll}
			1 & \textrm{ if } d_\ell = a_\ell \\
			0 & \textrm{ if } d_\ell = b_\ell
		\end{array}
		\right. 
	\end{equation}
	for $\ell\in\{1,\ldots,n\}$.
	For each clause $j\in \{1,\ldots,m\}$,
	there exists some $d_\ell$ corresponding to the 
	player whose weight's 
	part is equal to $10^{tj}$; and if the
	weight
	is~$a_\ell$, clause $j$ contains~$x_\ell$, so assigning
	\texttt{true} to $x_\ell$ makes clause $j$ true; otherwise, the 
	player's weight
	is $b_\ell$ and the clause $j$ contains~$\neg x_\ell$, so assigning
	\texttt{false} to $x_\ell$ makes $j$ true.  Hence, this is a unique value
	assignment to the variables $x_1,\ldots,x_n$ that satisfies~$\phi$ and
	is obtained by the described transformation from the set~$S$.~\end{proofsketch}

The full proof of Lemma~\ref{lem:correspondence-assignments-weights}
can be found in the technical appendix.

\section{NP$^{\text{PP}}$-Hardness of Control by Adding Players to a Weighted
	Voting Game}
\label{sec:control-problems}

In this section, we show our results, i.e., we prove $\np^{\PP}$-hardness of the control problems by adding players to a given WVG.
Specifically, we will present full proofs of $\np^{\PP}$-hardness for
three of the problems.
The remaining proofs
(see Theorem~\ref{restofresults}) can be found in the
appendix.

\begin{theorem}\label{adding-PBI-increase}
	\textsc{Control-by-Adding-Players-to-In\-crease-$\PenroseBanzhaf$}
	is $\np^{\PP}$-complete.
\end{theorem}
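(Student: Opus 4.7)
The plan is first to certify the upper bound: the problem lies in $\np^{\PP}$ because one can nondeterministically guess a subset $M'\subseteq M$ with $|M'|\le k$ and then invoke a $\PP$ oracle to compare $\PenroseBanzhaf(\mathcal{G}_{\cup M'},p)$ with $\PenroseBanzhaf(\mathcal{G},p)$, using Faliszewski and Hemaspaandra's $\PP$-completeness of power-index comparison. The substantive task is $\np^{\PP}$-hardness, which I would establish by a polynomial-time many-one reduction from \textsc{E-MajSAT} (or, if the calibration below forces exactness, from \textsc{E-ExaSAT}, which is $\np^{\PP}$-complete by Lemma~\ref{lem:e-exact-sat}).

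Given an \textsc{E-MajSAT} instance $(\phi,k)$ with $\phi$ in CNF over variables $x_1,\ldots,x_n$ and $m$ clauses, I would invoke Set~1 of Definition~\ref{def:prereduction} to obtain the weight vectors $W_M,W_A,W_C$ and the quota $q_1$. The addable set $M$ would consist of the $2k$ players carrying the weights of $W_M$ (one per literal over $x_1,\ldots,x_k$), while the WVG $\mathcal{G}$ would contain the players of $A$ (weights $W_A$), $C$ (weights $W_C$), the distinguished player $p$ of weight $1$, and a carefully engineered set of auxiliary players whose purpose is to seed a controllable number of baseline pivotal coalitions for $p$. Setting the quota to $q:=q_1+1$ ensures that $p$ is pivotal exactly for coalitions of total weight $q_1$, so by Lemma~\ref{lem:correspondence-assignments-weights} any pivotal coalition that uses at least one player of $M'$ corresponds bijectively to a satisfying assignment of $\phi$ whose restriction to $x_1,\ldots,x_k$ is the partial assignment $\sigma$ encoded by~$M'$.

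The main obstacle, and the crux of the proof, is the denominator of the Penrose--Banzhaf index: adding $|M'|$ players multiplies the denominator $2^{n-1}$ by $2^{|M'|}$, so the strict inequality $\PenroseBanzhaf(\mathcal{G}_{\cup M'},p)>\PenroseBanzhaf(\mathcal{G},p)$ unpacks into a comparison of the form $\textrm{(new pivotal count)}>2^{|M'|}\cdot\textrm{(old pivotal count)}$. Two things must therefore be arranged simultaneously. First, the weight gadget must force any useful $M'$ to contain exactly one weight from each pair $\{a_i,b_i\}$ for $i\le k$, so that $|M'|=k$ and the induced $\sigma$ is well defined; other shapes of $M'$ should contribute no new pivotal coalitions beyond the baseline and therefore cannot cause an increase. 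Second, the auxiliary baseline gadget must be calibrated so that the inequality collapses precisely to the majority condition ``$\#\{\textrm{satisfying extensions of }\sigma\}>2^{n-k-1}$'' of \textsc{E-MajSAT}. I expect this second calibration to be the delicate step: it will require introducing bonus pivotal structures (for instance, a heavy player of weight $q_1$ paired with a bank of zero-weight free players, possibly augmented by padding variables in $\phi$ to align the thresholds) whose contributions scale in tandem with both the $2^{|M'|}$ blow-up in the denominator and the $2^{n-k-1}$ majority threshold, so that the two effects cancel in exactly the right way. Once these calibrations are in place, correctness in both directions follows routinely from Lemma~\ref{lem:correspondence-assignments-weights}, and polynomial-time computability of the reduction is immediate since all weights in Definition~\ref{def:prereduction} have polynomial encoding length.
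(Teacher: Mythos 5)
Your skeleton matches the paper's (a polynomial-time reduction from \textsc{E-MajSAT}, the weight encoding of Set~1 of Definition~\ref{def:prereduction} together with Lemma~\ref{lem:correspondence-assignments-weights}, and the observation that the strict inequality unpacks into ``new pivotal count $> 2^{|M'|}\cdot$ old pivotal count''), but the step you yourself flag as delicate---the calibration gadget---is precisely the heart of the hardness proof, and you have not supplied it. Identifying that one needs baseline pivotal coalitions and bonus structures that ``cancel in exactly the right way'' is a statement of the problem, not a solution; without a concrete construction and the accompanying case analysis showing that no unintended coalitions of weight $q-1$ (resp.\ $q_1$ in your normalization) arise, neither direction of the correctness proof can be carried out. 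Moreover, the one concrete suggestion you make does not work: a heavy player of weight $q_1$ together with a bank of zero-weight players multiplies the baseline count and the per-assignment count by the \emph{same} factor $2^z$, so with $|M'|=k$ you would get an increase iff the number of satisfying extensions is at least $2^{k}$, which is not the \textsc{E-MajSAT} threshold $2^{n-k-1}$; uniform padding cannot create the asymmetry you need, and padding $\phi$ with dummy variables changes the instance in ways you would have to re-analyze.

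For comparison, the paper resolves the calibration with two interlocking gadgets: a group $Y$ of $n-k$ players with weights $q-1-\ell(k+1)$ plus a group $Z$ of $n-k-1$ players of weight $k+1$, which give $p$ exactly $2^{n-k-1}$ pivotal coalitions already in $\mathcal{G}$ (the baseline), and a group $W$ of $k$ players with weights $q-q_1-\ell-1$ plus $k$ weight-$1$ players in $X$, so that each weight-$q_1$ subset of $A\cup C\cup M'$ spawns exactly $\sum_{\ell=1}^{k}{k \choose \ell}=2^{k}-1$ new pivotal coalitions. Then the new index compares to the old one as $2^{n-k-1}+s_\sigma(2^{k}-1)$ versus $2^{k}\cdot 2^{n-k-1}$, which is an increase iff $s_\sigma>2^{n-k-1}$, exactly the majority condition; the choice $10^{t}>k+(n-k-1)(k+1)$ and the divisibility-by-$10^t$ structure are what guarantee the light players in $X\cup Z$ cannot interact with $A\cup C\cup M$ to form stray weight-$(q-1)$ coalitions. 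Your proposal would need an analogous explicit construction (and the corresponding non-interference argument) before the routine parts you defer to Lemma~\ref{lem:correspondence-assignments-weights} can actually be invoked.
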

\begin{proof}
	We will prove $\np^{\PP}$-hardness by using a reduction from
	\textsc{E-MajSAT}.  Let $(\phi, k)$ be a given instance of
	\textsc{E-MajSAT}, where $\phi$ is a boolean formula in CNF
	with variables $x_1,\dots,x_n$ and $m$ clauses,
	and   
	$1\le k<n$.  
	Before we construct
	an instance of our control problem from $(\phi, k)$, we need to
	choose some numbers and introduce some notation.
	
	Let $t \in \mathbb{N}$ be such that 
	\begin{equation}\label{addingplayers-increasePBI:def:t}
		10^t > \max\left\{2^{\ceil{\log_{2} n}+1}, k+(n-k-1)(k+1)\right\},
	\end{equation}
	and for that $t$, given $\phi$ and~$k$, we define $q_1$ and $W_A$, $W_C$, and $W_M$ as in Set~1 of Definition~\ref{def:prereduction} for player sets~$A$, $C$, and~$M$.

	Now, we construct an instance of
	\textsc{Control-by-Adding-Players-to-Increase~$\PenroseBanzhaf$}:
	Let $k$ be the limit for the number of players that can be added, and
	let $M$ be the set of $2k$ players that can be added with the list of
	weights~$W_M$.
	Further, we define the quota of the WVG $\mathcal{G}$ by
	\begin{equation}\label{addingplayers-increasePBI:def:q}
		q=2\cdot \left(w_A + w_M + w_C + (n-k)(k+1)\right) +1,
	\end{equation}
	and we let $N$ be the set of $4n-2k+m(r+1)$ players in $\mathcal{G}$,
	subdivided into the following seven groups:
	\begin{itemize}
		\item player~$p$ with weight $1$ will be our
		distinguished player,
		\item group $A$ contains $2(n-k)$ players with weight list~$W_A$,
		\item group $C$ contains $m(r+1)$ players with weight list~$W_C$,
		\item group $W$ contains $k$ players with weight list
		\[
		(q-q_1-2, q-q_1-3, \ldots, q-q_1-(k+1)),
		\]
		\item group $X$ contains $k$ players with weight~$1$ each,
		\item group 
		$Y$ contains $n-k$ players with weight list
		\[
		(q-1, q-1 -(k+1), \ldots, q-1 -(n-k-1)(k+1)),\quad\text{ and}
		\]
		\item group $Z$ contains $n-k-1$ players with weight~$k+1$ each.
	\end{itemize}
	
	This concludes the description of how to construct the instance
	$(\mathcal{G}, M, p, k)$ of our control problem from the given
	instance $(\phi, k)$ of \textsc{E-MajSAT}.
	Obviously, this can be done in polynomial time.
	
	Let us first discuss which coalitions player~$p$ can be pivotal for
	in any of the games $\mathcal{G}_{\cup M'}$ for some
	$M' \subseteq M$.\footnote{This also includes the case of the
		unchanged game~$\mathcal{G}$ itself, namely for
		$M' = \emptyset$.}
	Player~$p$ is pivotal for those coalitions of players in
	$(N\setminus \{p\}) \cup M'$ whose total weight is $q-1$.
	First, note that any two players from $W\cup Y$
	together have a weight larger than~$q$.
	Therefore, at most one player from $W\cup Y$
	can be in any coalition player~$p$ is pivotal for.  Moreover,
	by~(\ref{addingplayers-increasePBI:def:q}), all players from $A 
	\cup C
	\cup M \cup X \cup Z$
	together have a total weight smaller than $q-1$.
	This means that any coalition $S \subseteq (N\setminus \{p\}) \cup
	M'$ with a total weight of $q-1$ has to contain \emph{exactly} one of
	the players in $W\cup Y$.  Now, whether this player is in $W$ or 
	$Y$
	has consequences as to which other players will also be in such a
	weight-$(q-1)$ coalition~$S$:
	
	\begin{description}
		\item[Case 1:] If $S$ contains a player from~$W$ with weight, say,
		$q-q_1-\ell-1$ for some~$\ell$, $1 \leq \ell \leq k$, $S$ also has to
		contain those players from $A\cup 
		C\cup
		M$ whose weights sum up 
		to $q_1$
		and $j$ players from~$X$. 
		Indeed, $w_{X\cup Z}< 10^t$, so players from $A\cup C\cup M$
		are needed to achieve $q_1+\ell$. Moreover, they are able to achieve only the value $q_1$ because any subset of  $A\cup C\cup M$ is divisible by $10^t$. At the same time, each player in $Z$ has weight $k+1>\ell$, so no coalition with them achieves $q_1+\ell$. 
		Also, recall that    
		$q_1$ can be achieved only by a set
		of players whose weights take exactly one of the values from
		$\{a_i,b_i\}$ for each $i\in\{1,\ldots,n\}$,
		so $S$ must contain exactly $n-k$ players from  
		$A$ that
		already are in~$\mathcal{G}$ (either $a_i$ or $b_i$, for $k+1 \leq
		i \leq n$) and exactly $k$ players from $M$ (either $a_i$ or
		$b_i$, for $1 \leq i \leq k$); these $k$ players must have been
		added to the game, i.e., $\|M'\|=k$.
		
		\item[Case 2:] If $S$ contains a player from 
		$Y$ with weight, say, $q-1-\ell(k+1)$
		for some~$\ell$, $0 \leq \ell \leq n-k-1$, then either $S$ already
		achieves weight $q-1$ for $\ell=0$, or $S$ has to contain $\ell>0$
		players from~$Z$. The players
		from $X$ are not heavy enough and since each player from
		$A\cup C\cup M$ has a weight larger than $w_{X\cup Z}$    
		(which, together with any player from~$S$, gives a total weight
		exceeding the quota).
	\end{description}

	Since there are no players with weights $a_i$ or $b_i$ for
	$i\in\{1,\ldots,k\}$ in the game $\mathcal{G}$, player~$p$ can be
	pivotal only for the coalitions described in the second case above,
	and therefore,
	\[
	\PenroseBanzhaf(\mathcal{G},p)=\frac{\sum_{j=0}^{n-k-1}{n-k-1 \choose j}}{2^{\|N\|-1}}=\frac{2^{n-k-1}}{2^{\|N\|-1}}.
	\]
	
	We now show the correctness of our reduction: $(\phi, k)$ is a
	yes-instance of \textsc{E-MajSAT} if and only if
	$(\mathcal{G}, M, p, k)$ as defined above is a yes-instance of
	\textsc{Control-by-Adding-Players-to-Increase-$\PenroseBanzhaf$}.

	\proofonlyif
	Suppose that $(\phi, k)$ is a yes-instance of \textsc{E-MajSAT},
	i.e., there exists an assignment to $x_1,\ldots,x_k$ such that a
	majority of assignments to the remaining $n-k$ variables
	yields a satisfying assignment for the boolean formula~$\phi$.
	Let us fix one of these
	satisfying assignments to $x_1,\ldots,x_n$.
	From this fixed assignment, we get the
	vector 
	$(d_1,\ldots,d_n)$
	as defined in the proof of
	Lemma~\ref{lem:correspondence-assignments-weights}, where the first
	$k$ positions correspond to the players $M'\subseteq M$, $\|M'\|=k$,
	which we add to the game~$\mathcal{G}$.
	
	Since there are more than $2^{n-k-1}$ assignments to
	$x_{n-k},\ldots,x_n$ which---together with the fixed assignments to
	$x_1,\ldots,x_k$---satisfy~$\phi$, by
	Lemma~\ref{lem:correspondence-assignments-weights} there are more
	than $2^{n-k-1}$ subsets of $A\cup C \cup M'$ such that the players' weights
	in each subset sum up to~$q_1$. 
	Each of these subsets with total weight $q_1$   
	can form 
	coalitions of weight $q-1$
	with each player from $W$ having weight
	$q-q_1-(\ell+1)$, $\ell\in\{1,\ldots,k\}$, and $\ell$ weight-$1$
	players from $X$---and there are ${k \choose \ell}$ such coalitions.
	Therefore, recalling from Case~2 above that $Y \cup Z$ already
	contains $2^{n-k-1}$ coalitions of weight $q-1$, we have
	\begin{align*}
		\PenroseBanzhaf(\mathcal{G}_{\cup M'},p) &
		> \frac{2^{n-k-1}+2^{n-k-1} \sum_{\ell=1}^{k}{k \choose \ell}}{2^{\|N\|+k-1}} \\
		& = \frac{2^{n-k-1}+(2^{k}-1)\cdot 2^{n-k-1}}{2^{\|N\|+k-1}} \\
		& = \frac{2^{k}\cdot 2^{n-k-1}}{2^{\|N\|+k-1}}
		~=~ \frac{2^{n-k-1}}{2^{\|N\|-1}}
		~=~
		\PenroseBanzhaf(\mathcal{G},p),          
	\end{align*}
	so player~$p$'s Penrose--Banzhaf index is strictly larger in the new
	game~$\mathcal{G}_{\cup M'}$ than in the old game~$\mathcal{G}$,
	i.e., we have constructed a yes-instance of our control problem.

	\proofif
	Assume now that $(\phi, k)$ is a no-instance of \textsc{E-MajSAT},
	i.e., there does not exist any
	assignment to the variables $x_1,\ldots,x_k$ such that a majority
	of assignments to the remaining $n-k$ variables satisfies the boolean
	formula~$\phi$.  In other words, for each assignment to $x_1,\ldots,x_k$,
	there exist at most $2^{n-k-1}$ assignments to $x_{k+1},\ldots,x_n$
	that yield a satisfying assignment for~$\phi$.
	Again, we consider subsets $M'\subseteq M$ of players that uniquely
	correspond to the assignments of $x_1,\ldots,x_k$ according
	to Lemma~\ref{lem:correspondence-assignments-weights}.
	Note that
	any other possible subset will not
	allow to form new coalitions for which player~$p$ could be pivotal in the new game, i.e., $p$'s Penrose--Banzhaf index will 
	not increase 
	unless we add any player with weight
	either $a_i$ or $b_i$ for each $i\in\{1,\ldots,k\}$.

	By
	Lemma~\ref{lem:correspondence-assignments-weights} and our assumption,
	there are at most $2^{n-k-1}$ subsets of $A\cup C \cup M'$ such that
	the players' weights in each subset sum up to~$q_1$.  As in the proof
	of the ``Only if'' direction, for each $\ell\in\{1,\ldots,k\}$, each
	of these subsets of $A\cup C \cup M'$ forms 
    ${k \choose \ell}$
	coalitions of weight $q-1$
	with a player in $W$ having weight $q-q_1-(\ell+1)$ and $\ell$ players
	in~$X$.
	Again recalling from Case~2 above that $Y \cup Z$ 
	already contains
	$2^{n-k-1}$ coalitions of weight $q-1$, we have
	\begin{align*}
		\PenroseBanzhaf(\mathcal{G}_{\cup M'},p)
		& \le \frac{2^{n-k-1}+(2^{k}-1)\cdot 2^{n-k-1}}{2^{\|N\|+k-1}} \\
		& = \frac{2^{k}\cdot 2^{n-k-1}}{2^{\|N\|+k-1}} 
		~=~ \frac{2^{n-k-1}}{2^{\|N\|-1}} 
		~=~
		\PenroseBanzhaf(\mathcal{G},p).
	\end{align*}
	Thus
	player~$p$'s Penrose--Banzhaf index
	cannot increase by adding up to $k$ players from $M$ to the
	game~$\mathcal{G}$, and we have a no-instance of our control
	problem.~\end{proof}

\begin{theorem}
	\label{thm:adding-increase-nondecrease-SSI} 
	\textsc{Control-by-Adding-Players-to-Increase-$\ShapleyShubik$} and
	\textsc{Control-by-Adding-Players-to-Nondecrease-$\ShapleyShubik$}
	are $\np^{\PP}$-complete.
\end{theorem}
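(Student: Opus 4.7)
The plan is to reduce from \textsc{E-MajSAT}, following the template of the proof of Theorem~\ref{adding-PBI-increase}. The backbone player groups $M$, $A$, $C$ will reuse the weight vectors $W_M$, $W_A$, $W_C$ from Set~1 of Definition~\ref{def:prereduction}, so that Lemma~\ref{lem:correspondence-assignments-weights} continues to provide a bijection between satisfying assignments of $\phi$ and weight-$q_1$ subsets of $M \cup A \cup C$. As before, the distinguished player $p$ and the quota $q$ will be chosen so that $p$ is pivotal for a coalition $S$ iff $w_S = q-1$.

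The essential new difficulty is that the Shapley--Shubik index weights each pivotal coalition $S$ by $|S|!(n-1-|S|)!/n!$ rather than uniformly, and moreover the normalization factor changes from $|N|!$ to $(|N|+|M'|)!$ when new players are added. I address this by designing the auxiliary gadget, i.e., the analogues of $W, X, Y, Z$ from the Penrose--Banzhaf construction, so that two properties hold. First, any $M' \subseteq M$ that does \emph{not} correspond to an assignment of $x_1,\dots,x_k$ via Lemma~\ref{lem:correspondence-assignments-weights} produces no new pivotal coalitions, and the enlarged normalization strictly \emph{dilutes} the index, forcing $\ShapleyShubik(\mathcal{G}_{\cup M'},p)<\ShapleyShubik(\mathcal{G},p)$. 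Second, when $M'$ does correspond to an assignment with $j$ satisfying completions of $x_{k+1},\dots,x_n$, the $j \cdot \sum_{\ell=0}^{k}\binom{k}{\ell}$ new pivotal coalitions (with sizes fixed by the $W$--$X$-analogue gadget) contribute a binomial--factorial sum that, added to the diluted baseline, exceeds (respectively, at least equals) $\ShapleyShubik(\mathcal{G},p)$ precisely when $j$ crosses a chosen threshold. The sizes in the baseline $Y$--$Z$-analogue gadget and in the $W$--$X$-analogue gadget are tuned so that this threshold coincides with the \textsc{E-MajSAT} majority condition $j > 2^{n-k-1}$ for the increase variant, and with the shifted-by-one condition for the nondecrease variant.

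With the alignment in place, the correctness argument follows the same two-directional template as Theorem~\ref{adding-PBI-increase}: a yes-instance of \textsc{E-MajSAT} yields a witnessing $M'$ that pushes the index strictly above $\ShapleyShubik(\mathcal{G},p)$ (which simultaneously satisfies the nondecrease condition), while a no-instance caps $j$ below the threshold for every choice of $M'$, so the index either strictly decreases (for non-assignment additions) or fails to reach $\ShapleyShubik(\mathcal{G},p)$ (for assignment additions). Membership of both control problems in $\np^{\PP}$ is as observed by Rey and Rothe~\citep{rey-rot:j:structural-control-in-weighted-voting-games}, which together with the hardness reduction yields completeness. The main obstacle is the algebraic alignment step: choosing the auxiliary gadget sizes so that the binomial--factorial sum contributed by the satisfying-assignment coalitions cancels the dilution at exactly the right threshold value of $j$, for both the strict and the non-strict variants. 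This is a careful---but in principle direct---calculation on top of the base-$10^t$ weight analysis already developed in Section~\ref{sec:prereduction}.
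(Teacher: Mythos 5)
Your plan correctly isolates the two obstacles (the size-dependent factorial weighting and the change of normalization from $P!$ to $(P+k)!$), but the ``algebraic alignment step'' you defer to the end is exactly where the construction, as you set it up, breaks down. If you keep the Banzhaf-style $W$--$X$ gadget, every weight-$q_1$ subset produces $\sum_{\ell=1}^{k}\binom{k}{\ell}=2^{k}-1$ new pivotal coalitions---a count you cannot adjust---and these coalitions have \emph{different} sizes (they contain different numbers of weight-one players from $X$), so each carries a different factor $|S|!\,(P+k-1-|S|)!/(P+k)!$. At the same time, the dilution incurred by adding $k$ players is, per coalition of size $s$, the ratio $\frac{(P+1)\cdots(P+k)}{(P-s)\cdots(P+k-1-s)}$, which---unlike the Banzhaf factor $2^{k}$---is not a power of two and varies with $s$. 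With the per-assignment count pinned at $2^{k}-1$ and non-uniform coalition weights, no choice of ``gadget sizes'' makes the contribution of the new coalitions straddle the integer threshold $j>2^{n-k-1}$ exactly; you are missing a free parameter.

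The paper's proof supplies precisely the two ideas your proposal lacks. First, it pads \emph{all} pivotal coalitions to one common size $x=(P-1)/2$, using the (size)-category groups and a large block $D$ of weight-one players (Table~\ref{tab:adding-increase-nondecrease-SSI}), so that $\ShapleyShubik(\cdot,p)$ becomes a pure coalition count times the common factor $x!(P-x-1)!/P!$. Second, it decouples the number of new coalitions per satisfying assignment from $2^{k}-1$: it sets $z=\ceil{2^{n-k+1}(k'-1)}-1$, where $k'$ is the size-$x$ dilution ratio, and realizes exactly $z$ coalitions per weight-$q_2$ subset via the binary decomposition $z=2^{y_1}+\cdots+2^{y_u}$ and the groups $S$, $V_i$, $V_i'$; correctness then follows from two-sided inequalities using $2\le k'\le 2^{k}$, not from exact cancellation. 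Two further deviations from your sketch: the paper must use Set~2 of Definition~\ref{def:prereduction} (with $C'$ and quota part $q_2$) rather than Set~1, to keep the CNF-encoding weights arithmetically separated from the many small gadget weights; and a single reduction handles both variants, since in the no-case the index \emph{strictly decreases} for every admissible $M'$, so no ``shifted-by-one'' second tuning is needed. Without the size-equalization and the adjustable count $z$, your reduction cannot be completed as described.
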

\begin{proof}
	We prove $\np^{\PP}$-hardness of both control problems using  one and the same reduction from $\textsc{E-MajSAT}$ (and argue slightly differently for them).
	Let $(\phi, k)$ be a given instance of $\textsc{E-MajSAT}$, where $\phi$ is a boolean formula in CNF with variables $x_1,\dots,x_n$ and $m$ clauses, and let $k<n$.

	\begin{table*}[h!]
		\caption{\label{tab:adding-increase-nondecrease-SSI}
			Groups of players in the proof of
			Theorem~\ref{thm:adding-increase-nondecrease-SSI}, with their categories,
			numbers, and weights
			(note that, e.g., the sum $\sum_{j=0}^{i-1}\beta_{j}v_{j}$ in the first (size) row has value~$0$ for $i=0$)
		}
		\begin{center}
			\renewcommand{\arraystretch}{1.5}
			\begin{tabular}{c|c|c|c}
				\toprule
				\textbf{Category} & \textbf{Group} & \textbf{Number of Players} & \textbf{Weights} \\
				\midrule
				& distinguished player~$p$ & $1$ & $1$ \\
				\midrule
				(ms) & $A$ & $2n-2k$ & $W_A$ \\
				\midrule
				(ms) & $C$ & $m(r+1)$ & $W_C$ \\
				\midrule 
				(ms) & $C'$ & $m(r+1)$ & $W_{C'}$ \\
				\midrule
				(size) & $D$ & $\delta$ & $1$ \\
				\midrule
				(def) & $S$ & $\sum_{i=1}^{u}(y_i + 1)$ & \makecell{$q-q_{2} - \beta_{j_i} v_{j_i} - j_i v_{i} ' -\delta - 1$ \\  for $i \in\{1,\ldots,u\}$ and $j_i \in\{0,\ldots,y_i\}$} \\ 
				\midrule
				(size) & $V_i$ \par for $i\in\{0,\ldots,y_1\}$ & $\beta_{i}$ & $v_{i}=
				1 + \delta + \sum_{j=0}^{i-1}\beta_{j}v_{j}$  \\
				\midrule
				(num) & $V_{i}'$  \par for $i\in\{1,\ldots,u\}$ & $y_i$ & $v_i ' = (\beta_{y_1}+1)v_{y_1} + \sum_{i'=1}^{i-1} y_{i'}v_{i'}'$ \\
				\midrule
				(def) & $T$ & $2n-2k+1$ & \makecell{$q-\alpha_{i}w^{*}_{i} - iw'-\delta-1$ \\ for $i\in\{0,\ldots,2n-2k\} $} \\
				\midrule
				(size) & $W^{*}_i$ \par for $i\in\{0,\ldots,2n-2k\}$ & $\alpha_i$ &
				$w^{*}_i = (y_u+1)v'_u + \sum_{i'=0}^{i-1}\alpha_{i'}w^{*}_{i'}$ \\
				\midrule
				(num) & $W'$ & $2n-2k$ & $w' = (\alpha_{2n-2k}+1) w^{*}_{2n-2k}$ \\
				\midrule
				& $Z$ & remaining players & $q$ \\
				\bottomrule
			\end{tabular}
		\end{center}
	\end{table*}

	Before we construct an instance of our control problems from $(\phi, k)$, we need to choose some numbers and introduce some notation.
	Let
	\[
	P=6n^2 m + 26n^2 + 8k^2 + 8nm + 18n + 4k - 2m - 3 
	\]
	be the number of players in our game
	(note that $P$ is an odd number).  The numbers
	\begin{eqnarray*}
		\delta & = & 3n^2 m  + 13n^2 + 4k^2 + 3nm + 5n + 4k - 2m - 5, \\
		x & = & \delta + nm + 4n - 2k + m + 3 = \frac{P-1}{2},
		\quad
		\text{ and} \\
		k' & = & \Big(1+\frac{x+1}{P-x}\Big)\cdot\ldots\cdot \Big(1+\frac{x+1}{P-x+k-1}\Big)\le 2^{k}
	\end{eqnarray*}
	with $k' \ge 2$,
	will be used in our calculations later in the proof.  Finally, let
	\[
	z = \ceil{2^{n-k+1}(k'-1)}-1<2^{n+1}
	\]
	and choose $y_1,\ldots,y_u$ with $y_1 > \cdots > y_u$ such that
	\[
	z = 2^{y_1}+\cdots+ 2^{y_u}
	\]
	is satisfied.  Note that $y_1 \le n$ and $u \le n$.

	To make the calculations in our proof simpler, we want all coalitions counted for computing the Shapley--Shubik indices to be equally large (to be more specific, we want these coalitions to have size~$x$).
	Therefore, we define the following values.
	For $i\in \{0,1,\ldots,2n-2k\}$, let
	\[\alpha_i = nm + 4n -2k  + m +2 -i,
	\]
	and for $i\in\{0,\ldots,y_1\}$, let
	\[
	\beta_{i} = (n-r)m + 3n -2k +2 - i.
	\] 
	Finally, let $t' \in \mathbb{N}$ be such that
	\begin{equation*}
		10^{t'} > \max\Big\{2^{\ceil{\log_{2} n}+1}, (2n-2k+1)w' \Big\}
	\end{equation*}
	for $w' = (\alpha_{2n-2k}+1) w^{*}_{2n-2k}$ as defined in Table~\ref{tab:adding-increase-nondecrease-SSI}. 
	For $\phi$, $k$, and $t'$, let $t$, $q_{2}$, $M$, $A$, $C$, and $C'$ with weight lists $W_M$, $W_A$, $W_C$, and $W_{C'}$ be defined as in
	Set~2
	of Definition~\ref{def:prereduction}.
	
	Now, we are ready to construct the instance of our two control problems by adding players to increase or to nondecrease a given player's Shapley--Shubik power index as follows:
	Let $k$ be the limit for the number of players that can be added,
	let $M$ be the set of $2k$ players that can be added and let $W_M$ be the list of their weights,
	let 
	\begin{align*}
		q  = 2 \cdot & \left(w_A + w_M + w_C + w_{C'} 
		+ 10^{t'} + 1\right) 
	\end{align*}
	be the quota of~$\mathcal{G}$, and
	let $N$ be the set of $P$ players in game~$\mathcal{G}$, subdivided into groups
	as presented in Table~\ref{tab:adding-increase-nondecrease-SSI}.

	Note that each group of players in Table~\ref{tab:adding-increase-nondecrease-SSI} (except the distinguished player~$p$
	and group $Z$ whose players are not part of any coalition for which $p$ is pivotal) 
	belongs to some category:
	We categorize players by their function, i.e., there are groups of players who are responsible for defining coalitions that are counted when computing the Shapley--Shubik indices; other groups of players are responsible for the size of the coalition they are in (again, when counted in these indices); and there are players who are responsible for the number of coalitions.  
	Some of these players are defined by setting their weights to the quota minus some values that have to be satisfied by other players (for a sufficiently large quota, so as to make it impossible for the distinguished player to be pivotal for any coalition containing more than one of these players). 
	For the remaining players, we define their weights in such a way that they are not interchangeable.
	
	In more detail, the players with category (def) ``define'' which other players are needed to create a coalition of weight~$q-1$, among the players with category (ms) and the players in~$M$, we will focus on those coalitions whose total weight is $q_{2}$.
	The main purpose of the players from the groups marked (num) is to specify the number of coalitions for which player~$p$ can be pivotal.
	The players from groups with category (size) are used to make all these coalitions of equal size
	(among these players, the players with the same weight are together part of the same coalitions).
	Now, we will discuss the coalitions counted in our proof in detail.
	
	Let us analyze for what coalitions player~$p$ can be pivotal in $\mathcal{G}$ or any new game resulting from $\mathcal{G}$ by adding players from~$M$.
	Player~$p$ is pivotal for coalitions of weight~$q-1$.
	First, note that any two players from $S\cup T$ together have a total weight larger than~$q$.
	Next, the total weight of $N\setminus(\{p\}\cup S \cup T \cup Z)$ 
	is smaller than $q-1$.
	Therefore, a coalition with a total weight of $q-1$ has to contain exactly one of the players in $S\cup T$ and whether this player is in $S$ or $T$ has consequences as to which other players have to be in such a coalition: 
	\begin{description}
		\item[Case 1:] If the coalition contains a player from $S$, it also has to contain the players from
		$M \cup A \cup C \cup C'$ whose weights sum up to $q_{2}$, 
		some players from $V_i \cup V_{i}'$
		(for $i$ defined as in Table~\ref{tab:adding-increase-nondecrease-SSI}),
		and all players from~$D$---the players from
                \[
                \bigcup_{i=0}^{y_1}V_i \cup \bigcup_{i=1}^{u}V_i '\cup \bigcup_{i=0}^{2n-2k} W^{*}_i \cup W' \cup D   
		\]
                have total weight smaller than $10^{t'}$.
		Therefore, $q_{2}$ can be achieved only by the players from $M \cup A \cup C \cup C'$.  Recalling that $q_{2}$ can be achieved by a set consisting of those players whose weights take exactly one value in $\{a_i,b_i\}$ for each $i\in\{1,\ldots,n\}$,
		we have to add a set $M' \subseteq M$ with $\|M'\|=k$ to~$\mathcal{G}$.
		But weights of players from $M \cup A \cup C \cup C'$ can sum up only to values which are divisible by $10^{t'}$ therefore they can achieve only the $q_{2}$-part. Each player from $\bigcup_{i=0}^{2n-2k} W^{*}_i \cup W'$ also
		is too heavy to achieve the
		required value. 
		
		\item[Case 2:] If the coalition contains a player from $T$, the coalition also has to contain
		some of the players from $W^{*}_i \cup W'$
		and all players from $D$.
		Also here, we do not find any other combination of players which could form a weight-$(q-1)$ coalition with a player in~$T$---all players in
                \[
                \bigcup_{i=0}^{y_1}V_i \cup \bigcup_{i=1}^{u}V_i ' \cup D   
		\]
                have a total weight too small to be able to replace even one player from
		$\bigcup_{i=0}^{2n-2k} W^{*}_i\cup W'$ and (as mentioned in Case~1) any player in $M\cup A \cup C \cup C'$ together with any player from $T$ has total weight larger than $q-1$. 
	\end{description}

	In both cases, each coalition has the same size of 
	\[
	1+\delta + n+m(r+1)+\beta_{j} + j = 1+\delta +\alpha_i + i = x  
	\]
	for any $i\in \{0,\ldots,2n-2k\}$
	and $j\in\{0,\ldots,y_1\}$.
	
	Since there are no players with weights $a_i$ or $b_i$ for $i\in\{1,\ldots,k\}$ in game~$\mathcal{G}$, player~$p$ can be pivotal only for the coalitions described in the second case above and therefore,
	\[
	\ShapleyShubik(\mathcal{G},p)=2^{2n-2k}\frac{x!(P-x-1)!}{P!}.
	\]

	To prove the correctness of the reduction, we show that the following
	three statements are pairwise equivalent:
	\begin{itemize}
		\item $(\phi, k)$ is a yes-instance of $\textsc{E-MajSAT}$;
		\item $(\mathcal{G}, M, p, k)$ is a yes-instance of
		\textsc{Control-by-Adding-Players-to-Increase-}$\ShapleyShubik$;
		\item $(\mathcal{G}, M, p, k)$ is a yes-instance of
		\textsc{Control-by-Adding-Players-to-Non\-de\-crease-}$\ShapleyShubik$.
	\end{itemize}

	Suppose $(\phi, k)$ is a yes-instance of $\textsc{E-MajSAT}$,
	i.e., there exists an assignment to $x_1,\ldots,x_k$ such that a (strict)
	majority of assignments of the remaining $n-k$ variables satisfies
	the boolean formula~$\phi$.
	Let us fix one of these satisfying assignments.
	From this fixed assignment, we get the
	vector $\vec{d} = (d_1,\ldots,d_n)$ as defined in the proof of
	Lemma~\ref{lem:correspondence-assignments-weights}, where the first
	$k$ positions correspond to the players in $M'\subseteq M$, $\|M'\|=k$,
	which we add to the game~$\mathcal{G}$.
	
	Since there are at least $2^{n-k-1}+1$ assignments for
	$x_{n-k},\ldots,x_n$ which---together with the fixed assignments for
	$x_1,\ldots,x_k$---satisfy~$\phi$, by
	Lemma~\ref{lem:correspondence-assignments-weights} there are more than $2^{n-k-1}$ subsets of $M' \cup A \cup C \cup  C'$ such that the players' weights
	in each subset sum up to~$q_{2}$. 
	Now, each of these subsets can form $2^{y_1}+\cdots+2^{y_u}=z$ coalitions with the players from
        \[
        S \cup \bigcup_{i=0}^{y_1}V_i \cup \bigcup_{i=1}^{u}V_i ' \cup D
	\]
        for which player~$p$ is pivotal in the new game~$\mathcal{G}_{\cup M'}$.
	Therefore, 
	\begin{align*}
		\ShapleyShubik(& \mathcal{G}_{\cup M'},p) \\ \ge & ~ \Big(2^{2n-2k}+z\cdot (2^{n-k-1}+1)\Big)\frac{x!(P+k-1-x)!}{(P+k)!} \\
		= & ~ \Big(2^{2n-2k}+\left(\ceil{2^{n-k+1}(k'-1)}-1\right)\cdot \left(2^{n-k-1}+1\right)\Big) \\
		& ~ \cdot\frac{x!(P-1-x)!}{P!} \cdot \frac{(P-x) \cdots (P+k-1-x)}{(P+1) \cdots (P+k)} \\ 
		\ge & ~ \Big(2^{2n-2k}+\left(2^{n-k+1}(k'-1)-1\right)\cdot \left(2^{n-k-1}+1\right)\Big) \\
		& ~ \cdot\frac{1}{k'}\frac{x!(P-1-x)!}{P!} \\
		= & ~ \Big(2^{2n-2k}k'- 2^{n-k-1}+2^{n-k+1}(k'-1)-1\Big) \\
		& ~ \cdot\frac{1}{k'}\cdot \frac{x!(P-1-x)!}{P!} \\
		> & ~ \ShapleyShubik(\mathcal{G},p),
	\end{align*}
	so player~$p$'s Shapley--Shubik power index is strictly larger in the new
	game~$\mathcal{G}_{\cup M'}$ than in the old game~$\mathcal{G}$,
	i.e., we have constructed a yes-instance of both our control problems.

	Conversely, suppose now that $(\phi, k)$ is a no-instance of
	$\textsc{E-MajSAT}$,
	i.e., for each assignment to $x_1,\ldots,x_k$,
	there exist at most $2^{n-k-1}$ assignments of $x_{k+1},\ldots,x_n$
	which satisfy~$\phi$.
	It is enough to consider subsets $M'\subseteq M$ of players that uniquely
	correspond to the assignments of $x_1,\ldots,x_k$ according
	to Lemma~\ref{lem:correspondence-assignments-weights}, because any other possible subset will not allow to form new coalitions for which player~$p$ could be pivotal in the new game, i.e., $p$'s Shapley--Shubik index will only decrease if we do not add any player with weight either $a_i$ or $b_i$ for each $i\in\{1,\ldots,k\}$.
	
	Now let $M'\subseteq M$ be any subset of players that corresponds to
	some assignment to $x_1,\ldots,x_k$.  By
	Lemma~\ref{lem:correspondence-assignments-weights} and our assumption,
	there are at most $2^{n-k-1}$ subsets of $M' \cup A \cup C \cup C'$ such that
	the players' weights in each subset sum up to~$q_{2}$.
	For each of these sets, there are exactly $z$ new coalitions 
	described in Case~$1$
	for which $p$ is pivotal after adding the new players from~$M'$.
	Therefore, 
	\begin{align*}
		\ShapleyShubik( & \mathcal{G}_{\cup M'},p) \\ 
		\le 
		& ~ \Big(2^{2n-2k}+\left(\ceil{2^{n-k+1}(k'-1)}-1 \right)\cdot 2^{n-k-1}\Big) \\ 
		& ~ \cdot\frac{x!(P-1-x)!}{P!}\cdot \frac{(P-x)\cdots (P+k-1-x)}{(P+1)\cdots (P+k)}\\
		< & ~ \Big(2^{2n-2k}+2^{n-k+1}(k'-1)\cdot 2^{n-k-1}\Big) \\ 
		& ~ \cdot\frac{1}{k'}\cdot \frac{x!(P-1-x)!}{P!} \\
		= & ~ \frac{2^{2n-2k}k'}{k'}\cdot \frac{x!(P-1-x)!}{P!} %
		~=~  \ShapleyShubik(\mathcal{G},p),
	\end{align*}
	which means that the Shapley--Shubik index of player~$p$ decreases.
	Thus the Shapley--Shubik index of player~$p$ can neither increase nor nondecrease by adding up to $k$ players from $M$ to the game~$\mathcal{G}$, and we have a no-instance of both our control problems.
\end{proof}

\begin{theorem}\label{restofresults}
	The following problems are $\np^{\PP}$-complete:
\begin{description}
  \item[(a)]\label{nondecrease-PBI}
\textsc{Control-by-Adding-Players-to-Nondecrease-$\PenroseBanzhaf$}. \\
And for $\gamma\in\{\PenroseBanzhaf,\ShapleyShubik\}$,
  \item[(b)]\label{decrease-PBI} 
	\textsc{Control-by-Adding-Players-to-Decrease-$\gamma$},
  \item[(c)]\label{nonincrease-PBI} 
	\textsc{Control-by-Adding-Players-to-Nonincrease-$\gamma$}, and
  \item[(d)]\label{maintain-PBI} 
	\textsc{Control-by-Adding-Players-to-Maintain-$\gamma$}.
\end{description}
\end{theorem}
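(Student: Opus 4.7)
For each of the five control variants, an $\np^{\PP}$ algorithm guesses a set $M'\subseteq M$ with $\|M'\|\le k$ (and $\|M'\|\ge 1$ for the nondecrease, nonincrease, and maintain cases) and then makes a single $\PP$-oracle call to compare $\PI(\mathcal{G}_{\cup M'},p)$ with $\PI(\mathcal{G},p)$ in the required sense. Because two-game power-index comparison is $\PP$-complete by Faliszewski and Hemaspaandra, all parts of the theorem lie in $\np^{\PP}$; it remains to prove $\np^{\PP}$-hardness.

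\textbf{Hardness for $\PenroseBanzhaf$.} The plan is to re-use the seven-group gadget from the proof of Theorem~\ref{adding-PBI-increase}, so that adding $M'\subseteq M$ (corresponding via Lemma~\ref{lem:correspondence-assignments-weights} to an assignment of $x_1,\ldots,x_k$) makes player~$p$ pivotal for exactly $2^{n-k-1}+N_{M'}(2^{k}-1)$ coalitions of weight $q-1$, where $N_{M'}$ counts the satisfying extensions to $x_{k+1},\ldots,x_n$. For part~(a), I would inflate the baseline subgadget $Y\cup Z$ by a single additional guaranteed-pivotal coalition so that the no-instance case $N_{M'}\le 2^{n-k-1}$ yields a \emph{strict} drop in $\PenroseBanzhaf$; the bijection argument of Theorem~\ref{adding-PBI-increase} then gives the equivalence with \textsc{E-MajSAT}. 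For parts~(b) and~(c), I would reduce from \textsc{E-MinSAT}: the same gadget shows $\PenroseBanzhaf(\mathcal{G}_{\cup M'},p)\le\PenroseBanzhaf(\mathcal{G},p)$ iff $N_{M'}\le 2^{n-k-1}$, and the same one-coalition adjustment supplies the strict inequality for~(b). For part~(d), I would reduce from \textsc{E-ExaSAT} (Lemma~\ref{lem:e-exact-sat}) and choose the baseline pivotal count in $\mathcal{G}$ so that equality of the two indices is equivalent to $N_{M'}=\ell$; the baseline can be tuned because adding dummy weight-$1$ players affects the normalised baseline in a transparent way without disturbing the gadget.

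\textbf{Hardness for $\ShapleyShubik$.} I would carry out the analogous three modifications on top of the uniform-size gadget from the proof of Theorem~\ref{thm:adding-increase-nondecrease-SSI}. That construction already forces every pivotal coalition of $p$ to have size $x$, so for any valid $M'$ of size $k$,
\[
\ShapleyShubik(\mathcal{G}_{\cup M'},p) = (2^{2n-2k}+z\cdot N_{M'})\cdot\frac{x!(P+k-1-x)!}{(P+k)!},
\]
and comparing this with $\ShapleyShubik(\mathcal{G},p)=2^{2n-2k}\,x!(P-1-x)!/P!$ yields equivalence with \textsc{E-MinSAT} for~(b) and~(c) and with \textsc{E-ExaSAT} for~(d) after recalibrating $\delta$, $\alpha_i$, and $\beta_i$.

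\textbf{Main obstacle.} The delicate case is maintain-$\ShapleyShubik$: because the Shapley--Shubik weights $|S|!(n-1-|S|)!$ depend on the exact coalition size, the uniform-size property of Theorem~\ref{thm:adding-increase-nondecrease-SSI} must be preserved while simultaneously forcing the rational factor $(P-x)\cdots(P+k-1-x)/((P+1)\cdots(P+k))$ to line up so that the equation $(2^{2n-2k}+z\ell)\,x!(P+k-1-x)!/(P+k)! = \ShapleyShubik(\mathcal{G},p)$ is achieved exactly at $N_{M'}=\ell$. The free parameters $\delta$, the $y_i$'s, and the size of $Z$ already present in the construction give just enough slack to carry out this calibration by elementary arithmetic, after which the remainder of the correctness argument is routine bookkeeping adapted from the two proofs already presented.
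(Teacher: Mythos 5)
Your membership argument and your part~(a) plan are fine: the paper proves (a) exactly as you suggest, by adding one extra weight-$(q-1)$ player to the $Y$ group of the Theorem~\ref{adding-PBI-increase} gadget so that the baseline count becomes $2^{n-k-1}+1$ and a no-instance forces a strict drop.

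The genuine gap is in parts~(b) and~(c). You cannot reuse the increase gadget (even with a one-coalition adjustment) for \textsc{Decrease}/\textsc{Nonincrease}, because there the quantification over $M'$ works against you: the controller may add \emph{any} nonempty set of at most $k$ players, including sets that complete no $q_1$-coalition (e.g., a single player from $M$, or a set containing both $a_j$ and $b_j$). In the Theorem~\ref{adding-PBI-increase} construction such an addition creates no new pivotal coalitions, so the Penrose--Banzhaf index automatically and strictly decreases simply because the denominator grows from $2^{\|N\|-1}$ to $2^{\|N\|+\|M'\|-1}$; the same happens for Shapley--Shubik with the Theorem~\ref{thm:adding-increase-nondecrease-SSI} gadget, since each size-$x$ coalition's contribution shrinks by the factor $(P-x)/(P+1)<1$ per added player. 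Hence every instance you construct would be a yes-instance of \textsc{Decrease}-$\gamma$ and \textsc{Nonincrease}-$\gamma$, independently of whether $(\phi,k)\in\textsc{E-MinSAT}$, and your claimed equivalence fails. This is precisely why the paper's proofs of (b) and (c) use entirely new, much larger constructions (the $14$-group game with groups $D,E,F,G,H,H',U,V,X,Y,Z$ for $\PenroseBanzhaf$, and the long-table construction for $\ShapleyShubik$): their whole purpose is to guarantee that \emph{every} possible nonempty addition --- fewer than $k$ players, a complementary pair $a_i,b_i$, or a full assignment set --- generates enough compensating new pivotal coalitions that the index strictly \emph{increases} unless the added set corresponds to an assignment witnessing the \textsc{E-MinSAT} condition. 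Your proposal contains no mechanism of this kind, and the ``recalibrating $\delta$, $\alpha_i$, $\beta_i$'' remark does not supply one. For part~(d) your calibration idea (baseline chosen so that equality holds exactly at $N_{M'}=\ell$) is in the right spirit and close to the paper's reduction from \textsc{E-ExaSAT}, but the tuning is not done with dummy weight-$1$ players: the paper builds groups $Y_i,Z_i$ from the binary expansion of $\ell$ so that the baseline count is exactly $\ell$ (and, for $\ShapleyShubik$, an elaborate equal-size construction making the factor $(P+1)\cdots(P+k)/\bigl((P-s)\cdots(P+k-1-s)\bigr)$ cancel exactly), and it still must check separately that every invalid $M'$ changes the index.
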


\section{Conclusions}

We have shown that control by adding players to WVGs so as to
change or maintain a given player's Shapley--Shubik or
Penrose--Banzhaf index is $\np^{\PP}$-complete, thus settling the
complexity of these problems by raising their lower bounds so as to
match their upper bound.  Compared with the eminently rich body of
results on control attacks in
voting~\cite{fal-rot:b:handbook-comsoc-control-and-bribery}, these
results fill a glaring gap in the literature on 
WVGs
which---perhaps due to the immense hardness of these problems that is
proven here---fairly much has neglected issues of control attacks to
date.

For future work, we propose to study the corresponding problems for
deleting players from WVGs.
Further, it would be interesting to study these problems in the model
proposed by Kaczmarek and
Rothe~\citep{kac-rot:j:controlling-weighted-voting-games-by-deleting-or-adding-players-with-or-without-changing-the-quota}
in which the quota is indirectly changed when players are added or deleted.
Our techniques may also turn out to be
useful for closing the complexity gap for other problems in
$\np^{\PP}$ only known to be $\PP$-hard, such as false-name
manipulation~\cite{azi-bac-elk-pat:j:false-name-manipulations-wvg,rey-rot:j:false-name-manipulation-PP-hard}
and control by adding or deleting edges in graph-restricted
WVGs~\cite{kac-rot-tal:c:complexity-of-control-by-adding-or-deleting-edges-in-graph-restricted-weighted-voting-games}.

\begin{ack}
  We thank the reviewers for their helpful comments.
  This work was supported in part by Deutsche Forschungsgemeinschaft under grants RO~1202/21-1 and RO~1202/21-2 (project number 438204498).
\end{ack}

\bibliography{joannabibliography}

\clearpage

\appendix

\section*{Technical Appendix}

\medskip
\sproofof{Lemma~\ref{lem:e-exact-sat}}
The \emph{counting (polynomial-time) hierarchy} was introduced by
Wagner~\citep{wag:j:succinct} and, independently, by Parberry and
Schnitger~\citep{par-sch:j:parallel-computation-with-threshold-functions}.
Wagner~\citep{wag:j:succinct} characterized the levels of this
hierarchy via the \emph{counting quantifier $\CC$} and the \emph{exact
	counting quantifier~$\CE$}.  Applying the former to a $\p$ predicate
yields the complexity class $\PP$, and applying the latter to a $\p$
predicate yields the complexity class~$\CE \p$:
$A\in \CE \p$ if and only if there exists a set $B \in \p$, a
polynomial-time computable function~$f$, and a polynomial $p$ such
that
\[
x \in A \iff \|\{y \condition |y|\le p(|x|) \wedge
(x,y)\in B\}\|=f(x).
\]

$\textsc{Exact-SAT} = \{(\phi,n) \condition \phi$ is a boolean	formula with exactly $n$ satisfying truth assignments$\}$ is a
typical $\CE \p$-complete problem.

Continuing the research on the counting hierarchy,
Tor\'{a}n~\citep{tor:j:quantifiers} showed that
for any class $\mathcal{K}$ in the counting hierarchy,
\begin{enumerate}
	\item $\exists \CC \mathcal{K} = \exists \CE \mathcal{K}$,
	\item $\PP^{\mathcal{K}} = \CC \mathcal{K}$, and
	\item $\np^{\CC \mathcal{K}} = \exists \CC \mathcal{K}$.
\end{enumerate}
In particular, for $\mathcal{K} = \p$, we have $\exists \CC \p =
\exists \CE \p$ and $\np^{\PP} = \exists \CE \p$. 

The proof 
of the $\np^\PP$-completeness of $\textsc{E-ExaSAT}$
is analogous to the
proof of $\np^{\PP}$-completeness of \textsc{E-MajSAT} (and of
\textsc{E-MinSAT}), but instead of using the $\PP$-complete problem
\textsc{MajSat} (or its complement), we use the $\CE \p$-complete
problem \textsc{Exact-SAT}.~\eproofof{Lemma~\ref{lem:e-exact-sat}}

In some of our proofs, we will use slightly different sets of weight
vectors than Sets~1 and~2 from Definition~\ref{def:prereduction}, so
we now slightly modify Definition~\ref{def:prereduction} by defining
Sets~3 and~4 instead of Sets~1 and~2.

\begin{definition}
	\label{def:prereduction2}
	Let $\phi$ be given boolean formula in CNF
	with variables $x_1,\dots,x_n$ and $m$ clauses.
	Let $k\in\mathbb{N}$ with $k\le n$ and $r = \ceil{\log_{2} n}-1$.
	Let us define the following two sets of weight vectors which are going to be assigned as weights to players divided either into three sets---$M$, $A$, and $C$---or into four sets---$M$, $A$, $C$, and $C'$---in our proofs later on:
	\begin{description}
		\item[Set 3:] For some $t \in\mathbb{N}\setminus\{0\}$ such that
		$10^{t} > 2^{\ceil{\log_2 n}+1}$, 
		and
		for $i \in \{1,\ldots,n\}$, define
		\begin{eqnarray*}
			a_i & = & 1 + 10^{t(m+1)+2i}+\sum_{\substack{j\,:\ \textrm{clause $j$} \\ \textrm{ contains $x_i$}}} 10^{tj}
			\text{ and } \\
			b_i & = & 1 + 10^{t(m+1)+2i}+\sum_{\substack{j\,:\ \textrm{clause $j$} \\ \textrm{ contains $\neg x_i$}}} 10^{tj},
		\end{eqnarray*}
		and for $j\in\{1,\ldots,m\}$ and $s\in\{0,\ldots,r\}$, define
		\begin{eqnarray*}
			c_{j,s} & = & 2^{s}\cdot 10^{tj}.
		\end{eqnarray*}
		Define the following three weight vectors:
		\begin{eqnarray*}
			W_M & = & (a_1,\ldots,a_k,b_1,\ldots,b_k), \\
			W_A & = & (a_{k+1},\ldots,a_n,b_{k+1},\ldots,b_n),\\
			W_C & = & (c_{1,0},\ldots,c_{m,r}).
		\end{eqnarray*}
		
		\item[Set 4:] For some $t,t' \in\mathbb{N}\setminus\{0\}$ such that
		$10^{t'} > 2^{\ceil{\log_2 n}+1}$ 
		and 
		$10^t> 10^{t'} + 2^{\ceil{\log_2 n}+1}\sum_{l=1}^{m}10^{lt'}$, 
		and
		for $i \in \{1,\ldots,n\}$, define $a_i$ and $b_i$ as in Set~3,
		\OMIT{
			\begin{eqnarray*}
				a_i & = & 1 + 10^{t(m+1)+2i}+\sum_{\substack{j\,:\ \textrm{clause $j$} \\ \textrm{ contains $x_i$}}} 10^{tj}
				\text{ and } \\
				b_i & = & 1 + 10^{t(m+1)+2i}+\sum_{\substack{j\,:\ \textrm{clause $j$} \\ \textrm{ contains $\neg x_i$}}} 10^{tj},
			\end{eqnarray*}
		} %
		and for $j\in\{1,\ldots,m\}$ and $s\in\{0,\ldots,r\}$, let
		\[
		c_{j,s}'  =  2^{s}\cdot 10^{t'j}
		\text{ and } 
		c_{j,s}  =  2^{s}\cdot 10^{tj}+c_{j,s}'.
		\]
		In addition to $W_M$ and $W_A$ defined as in Set~3,
		define the following two weight vectors: 
		\[
		W_{C'}  =  (c_{1,0}',\ldots,c_{m,r}')
		\text{ and }
		W_C  =  (c_{1,0},\ldots,c_{m,r}).
		\]
	\end{description}	
	Additionally, let
	\begin{align*}
		q_{3}  ~=~ & n + \sum_{i=1}^{n} 10^{t(m+1)+2i} + 2^{\ceil{\log_{2} n}}\sum_{j=1}^{m} 10^{tj} \textrm{ and }\\
		q_{4}  ~=~ & n + \sum_{i=1}^{n} 10^{t(m+1)+2i} + 2^{\ceil{\log_{2} n}}\sum_{j=1}^{m} 10^{tj}\\
		& + \left(2^{\ceil{\log_{2} n}}-1 \right)\sum_{j=1}^{m} 10^{t'j}.
	\end{align*}	
\end{definition}

We now provide a detailed proof of
Lemma~\ref{lem:correspondence-assignments-weights} (which was only
sketched in the main paper).  Note that the analogue of
Lemma~\ref{lem:correspondence-assignments-weights} with Sets~3 and~4
from Definition~\ref{def:prereduction2} replacing Sets~1 and~2 from
Definition~\ref{def:prereduction} can be shown analogously to the
proof of Lemma~\ref{lem:correspondence-assignments-weights}.  So,
while proving Lemma~\ref{lem:correspondence-assignments-weights}, we
also provide these analogous proof details in parallel.
\medskip

\sproofof{Lemma~\ref{lem:correspondence-assignments-weights}}
Let us start with analyzing which subsets of $M\cup A\cup C$
for $i\in\{1,3\}$ or $M \cup A \cup C \cup C'$ for
$i\in\{2,4\}$ can achieve a total weight of~$q_i$.
The summand of~$q_i$ from the interval
\begin{itemize}
	\item for $i\in\{1,2\}$:
	\[
	\Big[\sum_{j=1}^{n} 10^{t(m+1)+j}, q_i \Big];
	\]
	\item for $i\in\{3,4\}$:
	\[
	\Big[n + \sum_{j=1}^{n} 10^{t(m+1)+2j}, q_i \Big]
	\]
\end{itemize}
can be achieved only by player sets containing some players from
$M\cup A$, because all players with a smaller weight together are
not heavy enough: For $i\in\{1,3\}$, we have
\begin{align*}
	& w_C + 10^t \\
	& = (2^{\ceil{\log_2 n}}-1)\sum_{j=1}^m 10^{t j}  + 10^t \\
	& < (2^{\ceil{\log_2 n}}-1)10^{t} + (2^{\ceil{\log_2 n}}-1)\sum_{j=2}^m 10^{t j} + 10^t \\
	& = 2^{\ceil{\log_2 n}} \cdot 10^{t} + (2^{\ceil{\log_2 n}}-1) 10^{2t} \\
	& ~~~~ +  (2^{\ceil{\log_2 n}}-1)\sum_{j=3}^m 10^{t j} \\
	& < 2^{\ceil{\log_2 n}} \cdot 10^{2t} +  (2^{\ceil{\log_2 n}}-1)\sum_{j=3}^m 10^{t j}
\end{align*}
\begin{align*}
	& < 2^{\ceil{\log_2 n}} \cdot  10^{t(m-1)} +  (2^{\ceil{\log_2 n}}-1) 10^{tm} \\
	& < 2^{\ceil{\log_2 n}} \cdot 10^{tm}  \\
	& <  10^{t(m+1)+1},
\end{align*}
and for $i\in\{2,4\}$, we have
\begin{align*}
	& w_C + w_{C'} + 10^{t'} \\ 
	& < w_C + 10^t \\
	& = (2^{\ceil{\log_2 n}}-1)\sum_{j=1}^m 10^{t j}  + w_{C'} + 10^t \\
	& < (2^{\ceil{\log_2 n}}-1)\sum_{j=1}^m 10^{t j} + 2 \cdot 10^t \\
	& = (2^{\ceil{\log_2 n}} + 1) \cdot 10^{t} + (2^{\ceil{\log_2 n}}-1) 10^{2t} \\
	& ~~~~ +  (2^{\ceil{\log_2 n}}-1)\sum_{j=3}^m 10^{t j} \\
	& < 2^{\ceil{\log_2 n}} \cdot 10^{2t} +  (2^{\ceil{\log_2 n}}-1)\sum_{j=3}^m 10^{t j} \\
	& < 2^{\ceil{\log_2 n}} \cdot 10^{tm}  \\
	& <  10^{t(m+1)+1}.
\end{align*}

Moreover, $q_i$ can be achieved only by subsets containing \emph{exactly}
$n$ players from $M\cup A$, exactly one weight from each pair $\{a_j, b_j\}$,
$j\in\{1,\ldots,n\}$, because for each~$j\ge 2$, if $i\in\{1,2\}$, we have
\begin{align*}
	& 10^{t(m+1)+1} + \sum_{l=1}^{j-1} (a_l + b_l) \\
	& < 4\cdot 10^{t(m+1)+1} + 3 \cdot \sum_{l=2}^{j-1} 10^{t(m+1)+l} \\
	& < 10^{t(m+1)+2} + 3 \cdot 10^{t(m+1)+2}  + 3 \cdot \sum_{l=3}^{j-1} 10^{t(m+1)+l} \\
	& < 4 \cdot 10^{t(m+1)+j-2} + 3 \cdot 10^{t(m+1)+j-1} \\
	& < 4 \cdot 10^{t(m+1)+j-1} \\
	& < 10^{t(m+1)+j},
\end{align*}
and if $i\in\{3,4\}$, we have
\begin{align*}
	& 10^{t(m+1)+1} + \sum_{l=1}^{j-1} \Big(a_l + b_l + 9\cdot 10^{t(m+1)+2l}\Big) \\
	& < 2\cdot 10^{t(m+1)+1} + 11\cdot 10^{t(m+1)+2}  + 12 \cdot \sum_{l=2}^{j-1} 10^{t(m+1)+2l} \\
	& < 10^{t(m+1)+4} + 12 \cdot 10^{t(m+1)+4}  + 12 \cdot \sum_{l=3}^{j-1} 10^{t(m+1)+2l} \\
	&  < 13 \cdot 10^{t(m+1)+2j-4} + 12 \cdot 10^{t(m+1)+2j-2} \\
	&  < 13 \cdot 10^{t(m+1)+2j-2} \\
	& < 10^{t(m+1)+2j},
\end{align*}
and both $a_l$ and $b_l$ are together too large to
satisfy $10^{t(m+1)+l}$ and $10^{t(m+1)+2l}$, respectively, 
with 
any other
smaller part of this sum.
Therefore, there are exactly $2^{n}$
subsets of $M\cup A$ which---jointly with some players from $C$ or
$C\cup C'$---can achieve the value of~$q_i$, for $i\in\{1,2,3,4\}$.

Now, let us fix one of the subsets of $M\cup A$ mentioned above. Its
weight
\begin{itemize}
	\item for $i\in\{1,2\}$ is:
	\[
	\sum_{j=1}^{n} 10^{t(m+1)+j} + \sum_{j=1}^{m}p_{j} 10^{tj} 
	\]
	\item and for $i\in\{3,4\}$ it is:
	\[
	\sum_{j=1}^{n} 10^{t(m+1)+2j} + n + \sum_{j=1}^{m}p_{j} 10^{tj}
	\]
\end{itemize}
for some $p_j \in \{0,\ldots,n\}$ for each $j\in\{1,\ldots,m\}$. 
\OMIT{
	(the
	fact that $p_j$ cannot be larger than $n$ comes from the assumption
	that there is no clause in $\phi$ such that it contains both a
	variable and its negation, i.e., for each $j\in\{1,\ldots,m\}$ and
	each $l\in\{1,\ldots,n\}$, the value $10^{tj}$ cannot be a part of
	both $a_l$ and $b_l$ at the same time).  
} %
To achieve~$q_i$, we still
need players whose total weight
\begin{itemize}
	\item for $i\in\{1,3\}$ is:
	\[
	\sum_{j=1}^{m}(2^{\ceil{\log_{2} n}}-p_{j}) \cdot 10^{tj} 
	\]
	\item and for $i\in\{2,4\}$ it is:
	\[
	\sum_{j=1}^{m}(2^{\ceil{\log_{2} n}}-p_{j}) \cdot 10^{tj} + \sum_{j=1}^{m}(2^{\ceil{\log_{2} n}}-1) \cdot 10^{t'j}.
	\]
\end{itemize}
In the case of~$i\in\{2,4\}$, if $p_j<2^{\ceil{\log_{2} n}}$, the value from the interval
\begin{align*}
	\Big[ & \sum_{j=1}^{m}(2^{\ceil{\log_{2} n}}-p_{j}) \cdot 10^{tj}, \\ & \sum_{j=1}^{m}(2^{\ceil{\log_{2} n}}-p_{j}) \cdot 10^{tj} + \sum_{j=1}^{m}(2^{\ceil{\log_{2} n}}-1) \cdot 10^{t'j}\Big]
\end{align*}
can be achieved only by those subsets that contain some players
from~$C$, since
\[
w_{C'} + 10^{t'} < 10^{t}.
\] 
So, let us consider the players from~$C$ now. For any $j\in\{1,\ldots,m\}$,
\begin{itemize}
	\item the value $\left(2^{\ceil{\log_{2} n}}-p_{j}\right) \cdot 10^{tj}$ for
	$i\in\{1,3\}$ and
	
	\item the value $\left(2^{\ceil{\log_{2} n}}-p_{j}\right) \cdot
	10^{tj} + \left(2^{\ceil{\log_{2} n}}-p_{j}\right) \cdot 10^{t'j}$
	for $i\in\{2,4\}$ 
\end{itemize}
can be achieved only by players from
$\{c_{j,0},\ldots,c_{j,r}\}$.\footnote{Note that in the latter case,
	another possibly smaller summand with the unchanged larger summand
	cannot be achieved without players from $C'$, i.e., for any subset
	of~$C$, its weight is also in the form $\sum_{l=1}^{m}
	h_l \cdot \left(10^{tl} + 10^{t'l}\right)$.}
This is true because any player with weight
\begin{align*}
	c_{j+1,l} & \ge 10^{t(j+1)} \\
	& = 10^{t}\cdot 10^{tj} \\ 
	& > 2^{\ceil{\log_2 n}+1}\cdot 10^{tj} \\ 
	& = (2^{\ceil{\log_2 n}}-1) 10^{tj} + (2^{\ceil{\log_2 n}}+1) 10^{tj} \\
	& \ge (2^{\ceil{\log_2 n}}-1) 10^{tj} + (2^{\ceil{\log_2 n}}-1) 10^{tj} + 2\cdot 10^{t}
\end{align*}    
has greater weight than all players from $\{c_{j,0},\ldots,c_{j,r}\}$
with all players of smaller weight together.  Analogously, all players
of weight smaller than $10^{tj}$ together have total weight smaller
than this value, i.e., for each~$j$,
\begin{equation}\label{addingplayers-increasePBI:sum:c_js}
	\sum_{s=0}^{r} 2^{s}\cdot 10^{tj} = \left(2^{\ceil{\log_{2} n}}-1 \right)\cdot 10^{tj}<2^{\ceil{\log_{2} n}}\cdot 10^{tj},
\end{equation}
so for any~$z < 10^t$, we have for $j\ge 2$
(the case of~$j=1$ is straightforward from the definition of~$t$) that
\begin{align*}
	z + \sum_{l=1}^{j-1}\sum_{s=0}^{r} c_{l,s} & < z + \sum_{l=1}^{j-1}2^{\ceil{\log_{2} n}}\cdot 10^{tl} \\
	& < \sum_{l=1}^{j-1}\left(2^{\ceil{\log_{2} n}}+1\right)\cdot 10^{tl} \\
	& \le 10^{tj} 
\end{align*} 
if $i\in\{1,3\}$;
and
if $i\in\{2,4\}$, we have
\begin{align*}
	z + \sum_{l=1}^{j-1}\sum_{s=0}^{r} c_{l,s} & < z + \sum_{l=1}^{j-1}2^{\ceil{\log_{2} n}} \left(10^{tl}+10^{t'l}\right) \\
	& < 2 \cdot 10^t + 2^{\ceil{\log_{2} n}} \sum_{l=1}^{j-1} 10^{tl} \\
	& \le 10^{tj}. 
\end{align*}
Moreover, each subset of $\{c_{j,0},\ldots,c_{j,r}\}$ adds up to a value
that is divisible by~$10^{tj}$, so they cannot achieve any 
value that is not divisible by~$10^{tj}$.
Finally, note that for $p_j = 0$, there exists no
subset with this weight, and for all other possible~$p_j$, there
exists a unique subset of $C$ achieving the value (divided by
$10^{tj}$, the weights or their larger summands correspond to the
binary representation of the number $2^{\ceil{\log_{2} n}}-p_j$). So,
for the fixed subset of $M\cup A$, there exists at most one subset of
$C$ that can be part of the set with weight~$q_i$.

For $i\in\{1,3\}$, we obtain subsets with weight~$q_i$.  For
$i\in\{2,4\}$, the subset (if there exists any) of $M\cup A \cup C$
for the fixed subset of $M\cup A$ has weight
\[
\sum_{j=1}^{n} 10^{t(m+1)+j} + 2^{\ceil{\log_{2} n}} \sum_{j=1}^{m} 10^{tj} + \sum_{j=1}^{m}\left(2^{\ceil{\log_{2} n}}-p_{j}\right) \cdot 10^{t'j}
\] 
for $i=2$, and it has weight
\[
\sum_{j=1}^{n} 10^{t(m+1)+2j} + 2^{\ceil{\log_{2} n}} \sum_{j=1}^{m} 10^{tj} + \sum_{j=1}^{m}\left(2^{\ceil{\log_{2} n}}-p_{j}\right) \cdot 10^{t'j}
\] 
for $i=4$, i.e., we need some players from $C'$ with a total weight of
$\sum_{j=1}^{m}(p_{j}-1) \cdot 10^{t'j}$. Analogously to the case of
players from~$C$ (but always unlike the previous case), there exists a
unique subset of $C'$ with that weight.

To sum up, each set $S$ of weight~$q_i$, for $i\in\{1,2,3,4\}$, has to
contain exactly $n$ players from $M\cup A$ (namely, $n$ players, each
with exactly one weight from $\{a_j,b_j\}$, $j\in\{1,\ldots,n\}$), and
for each $S\cap (M\cup A)$, there exists exactly one set of
weight~$q_i$ (but there can exist subsets of $M\cup A$ of the
mentioned form that are not contained in any set of weight~$q_i$).

Let us now prove that there exists a bijection between the sets of
weight~$q_i$ and the set of value assignments to the variables
$x_1,\ldots,x_n$ satisfying the given formula~$\phi$.

For each value
assignment to the variables $x_1,\ldots,x_n$, let
$1$ represent \texttt{true} and $0$ \texttt{false}, and let
\begin{equation}
	d_l = \left\{
	\begin{array}{cc}
		a_l & \textrm{ if } x_l = 1, \\
		b_l & \textrm{ if } x_l = 0.
	\end{array}
	\right. 
\end{equation}
The resulting weight vector $\vec{d} = (d_1,\ldots,d_n)$ is unique for
each assignment to $x_1,\ldots,x_n$ (from the previously mentioned
assumption that no clause contains both a variable and its negation,
so $a_l \neq b_l$ for any $l\in\{1,\ldots,n\}$).  Also, if this
vector~$\vec{d}$ corresponds to a satisfying assignment of~$\phi$, the
total weight of the players' subset equals
\begin{itemize}
	\item for Set~1 and Set~2:
	\[
	\sum_{l=1}^{n}d_l = \sum_{l=1}^{n} 10^{t(m+1)+l} + \sum_{j=1}^{m}p_{j} 10^{tj},
	\]
	\item and for Set~3 and Set~4:
	\[
	\sum_{l=1}^{n}d_l = n + \sum_{l=1}^{n} 10^{t(m+1)+2l} + \sum_{j=1}^{m}p_{j} 10^{tj},
	\]
\end{itemize}
where $p_j$, $1\le p_j \le n$, is at least $1$ since each clause is
satisfied by our fixed assignment: For each clause~$j$, there exists
some $x_l$ making it true (i.e., either $x_l=1$ and the clause~$j$
contains~$x_l$, or $x_l=0$ and~$j$ contains~$\neg x_l$), which implies
that the corresponding $d_l$ has $10^{tj}$ as one of its summands
(i.e., either $d_l=a_l$ if $x_l$ is contained in clause~$j$, or
$d_l=b_l$ if $\neg x_l$ is contained in~$j$).  Because $p_j\neq 0$ for
all $j\in\{1,\ldots,m\}$, from the previous analysis, there exists
exactly one subset of 
$C$ when $i\in\{1,3\}$ or exactly one subset of
$C\cup C'$ when $i\in\{2,4\}$ such that the players with the
corresponding weights together with the players whose weights
correspond to $\vec{d}$ form a coalition of weight~$q_i$. Therefore,
for each value assignment satisfying~$\phi$, there exists a unique set
of players from
$A\cup M \cup C$ (respectively, $A\cup M \cup C \cup C'$)
with total weight~$q_i$.

Conversely, let $S\subseteq M \cup A \cup C$ for $i\in\{1,3\}$, and
$S\subseteq M \cup A \cup C \cup C'$ for $i\in\{2,4\}$, be a coalition
of players whose total weight is~$q_i$.  From the previous analysis,
$S$ can contain exactly one player with weight from $\{a_j, b_j\}$ for
$j\in\{1,\ldots,n\}$, and for $S\cap (M\cup A)$, there exists exactly
one subset of $C$ for $i\in\{1,3\}$, and exactly one subset of $C\cup
C'$ for $i\in\{2,4\}$, which creates with the former a coalition of
players with total weight~$q_i$, i.e., there exist no two different
sets $S$ and $S'$ both with $w_S = w_{S'} = q_i$ such that
$S\cap(M\cup A) = S' \cap (M\cup A)$.

For the set $S\cap (M\cup A)$ with the weight vector
$(d_1,\ldots,d_n)$, set
\begin{equation}\label{addingplayers-increasePBI:def:d_i}
	x_\ell = \left\{
	\begin{array}{ll}
		1 & \textrm{ if } d_\ell = a_\ell \\
		0 & \textrm{ if } d_\ell = b_\ell
	\end{array}
	\right. 
\end{equation}
for $\ell\in\{1,\ldots,n\}$.
For each clause $j\in \{1,\ldots,m\}$,
there exists some $d_\ell$ corresponding to the 
player whose weight's 
part is equal to $10^{tj}$; and if the
weight
is~$a_\ell$, clause $j$ contains~$x_\ell$, so assigning
\texttt{true} to $x_\ell$ makes clause $j$ true; otherwise, the 
player's weight
is $b_\ell$ and the clause $j$ contains~$\neg x_\ell$, so assigning
\texttt{false} to $x_\ell$ makes $j$ true.  Hence, this is a unique value
assignment to the variables $x_1,\ldots,x_n$ that satisfies~$\phi$ and
is obtained by the described transformation from the set~$S$.~\eproofof{Lemma~\ref{lem:correspondence-assignments-weights}}

We now prove the four statements of Theorem~\ref{restofresults}.
\medskip

\sproofof{Theorem~\ref{restofresults}(a)}
We modify the reduction from the proof of
Theorem~\ref{adding-PBI-increase}. The only change we make is that
the game $\mathcal{G}'$ in our current reduction has one player more
than the game $\mathcal{G}$ does, i.e., $\mathcal{G}'$ has two
players (instead of one) with weight $q-1$
in group~$Y$.
Let $N'$ with $\|N'\|=\|N\|+1$
be the corresponding player set of~$\mathcal{G}'$.

Therefore,
in the new game $\mathcal{G}'$, we now have
\begin{align*}
	\PenroseBanzhaf(\mathcal{G}',p) & =\frac{2+\sum_{i=1}^{n-k-1}{n-k-1 \choose i}}{2^{\|N'\|-1}} \\ 
	& =\frac{1+\sum_{i=0}^{n-k-1}{n-k-1 \choose i}}{2^{\|N'\|-1}}\\
	&=\frac{2^{n-k-1}+1}{2^{\|N'\|-1}}.
\end{align*}

We now prove the correctness of the reduction: $(\phi, k)$ is a
yes-instance of $\textsc{E-MajSAT}$ if and only if
$(\mathcal{G}', M, p, k)$ is a yes-instance of
\textsc{Control-by-Adding-Players-to-Nondecrease-$\PenroseBanzhaf$}.

\proofonlyif
Suppose that $(\phi, k)$ is a yes-instance of $\textsc{E-MajSAT}$,
i.e., there exists an assignment to $x_1,\ldots,x_k$ such that a
majority of assignments to the remaining $n-k$ variables
yields a satisfying assignment for the boolean formula~$\phi$.
Let us fix one of these
satisfying assignments to $x_1,\ldots,x_n$.
From this fixed assignment, we get the
vector $\vec{d} = (d_1,\ldots,d_n)$ as defined in the proof of
Lemma~\ref{lem:correspondence-assignments-weights}, where the first
$k$ positions correspond to the players $M'\subseteq M$, $\|M'\|=k$,
which we add to the game~$\mathcal{G}'$.

Since there are more than $2^{n-k-1}$ assignments to
$x_{n-k},\ldots,x_n$ which---together with the fixed assignments to
$x_1,\ldots,x_k$---satisfy~$\phi$, by
Lemma~\ref{lem:correspondence-assignments-weights} there are at least $2^{n-k-1}+1$ subsets of $A\cup C \cup M'$ such that the players' weights
in each subset sum up to~$q_1$. 
Each of these subsets with total weight $q_1$   
can form 
coalitions of weight $q-1$
with each player from $W$ having weight
$q-q_1-(\ell+1)$, $\ell\in\{1,\ldots,k\}$, and $\ell$ weight-$1$
players from $X$---and there are ${k \choose \ell}$ such coalitions.
Therefore, we have
\begin{align*}
	\PenroseBanzhaf(\mathcal{G}_{\cup M'}',p) & 
	\ge\frac{2^{n-k-1}+1+(2^{n-k-1}+1) \sum_{\ell=1}^{k}{k \choose \ell}}{2^{\|N'\|+k-1}} \\
	& = \frac{2^{n-k-1}+1+(2^{k}-1) (2^{n-k-1}+1)}{2^{\|N'\|+k-1}} \\
	& = \frac{2^{k} (2^{n-k-1}+1)}{2^{\|N'\|+k-1}} \\
	& = \frac{2^{n-k-1}+1}{2^{\|N'\|-1}}=\PenroseBanzhaf(\mathcal{G}',p),
\end{align*}
so player~$p$'s Penrose--Banzhaf index is not smaller in the new
game~$\mathcal{G}_{\cup M'}'$ than in the old game~$\mathcal{G}'$,
i.e., we have constructed a yes-instance of our control problem.

\proofif
Assume now that $(\phi, k)$ is a no-instance of the problem
$\textsc{E-MajSAT}$, i.e., for each assignment to $x_1,\ldots,x_k$,
there exist
at most $2^{n-k-1}$ assignments to $x_{k+1},\ldots,x_n$
which satisfy~$\phi$.
Analogously to the proof of Theorem~\ref{adding-PBI-increase}, we consider subsets $M'\subseteq M$ of players that uniquely
correspond to the assignments to $x_1,\ldots,x_k$ according
to Lemma~\ref{lem:correspondence-assignments-weights}, because any
other possible subset will not be enough to form new coalitions for
which player~$p$ could be pivotal in the new game, i.e., $p$'s
Penrose--Banzhaf index only decreases in those cases.

Now let $M'\subseteq M$ be any subset of players that corresponds to
some assignment to $x_1,\ldots,x_k$.  By
Lemma~\ref{lem:correspondence-assignments-weights} and our assumption,
there are
at most $2^{n-k-1}$ subsets of $A\cup C \cup M'$ such that
the players' weights in each subset sum up to~$q_1$.  As in the proof
of the ``Only if'' direction, for each $\ell\in\{1,\ldots,k\}$, each
of these subsets of $A\cup C \cup M'$ forms coalitions of weight $q-1$
with a player in $W$ having weight $q-q_1-(\ell+1)$ and $\ell$ players
in $X$---and there are ${k \choose \ell}$ of them.  Therefore, 
\begin{align*}
	\PenroseBanzhaf(\mathcal{G}_{\cup M'}',p)
	& <\frac{2^{n-k-1}+1+(2^{k}-1)\cdot (2^{n-k-1}+1)}{2^{\|N'\|+k-1}}\\
	& =\frac{2^{k}\cdot (2^{n-k-1}+1)}{2^{\|N'\|+k-1}} \\
	& =\frac{2^{n-k-1}+1}{2^{\|N'\|-1}}
	=\PenroseBanzhaf(\mathcal{G}',p),
\end{align*}
which means that the Penrose--Banzhaf index of player~$p$ decreases
also in this case.  Thus the Penrose--Banzhaf index of player~$p$
cannot nondecrease by adding 
at least one and
up to $k$ players from $M$ to the
game~$\mathcal{G}'$, and we have a no-instance of our control
problem.~\eproofof{Theorem~\ref{restofresults}(a)}

\sproofof{Theorem~\ref{restofresults}(b) and (c)}
First, let $\gamma=\PenroseBanzhaf$. 	We will prove $\np^{\PP}$-hardness by providing a reduction from \textsc{E-MinSAT}.
Let $(\phi,k)$ be a given instance of \textsc{E-MinSAT}, where $\phi$ is a boolean formula in CNF with variables $x_1,\dots,x_n$ and $m$ clauses and $2\le k<n$. 

Define
$h'=2k^2$,
$h= (k+1)h'$,
$z=(2n-2k)h$, 
and $e=(2n-k+1)z$, 
and let us choose $t\in\mathbb{N}$ such that 
\begin{equation*}
	10^t > \max\Big\{2^{\ceil{\log_{2} n}+1}, (n+1)e\Big\},
\end{equation*}
and for that $t$, given $\phi$ and~$k$, let $q_3$, $W_A$, $W_C$, and $W_M$ be defined by Set~3 in Definition~\ref{def:prereduction2}.

From the given instance of \textsc{E-MinSAT} we contruct one and the same instance of our two control problems
as follows: Let $k$ be the limit for the number of players that can
be added, let $M$ be the set of $2k$ players that can be added with the list of
weights~$W_M$.  Further, define the quota of the WVG $\mathcal{G}$ by
\begin{align*}
	q = ~ 2\cdot \Big( & w_A + w_M + w_C  \\ 
	&  + 9\cdot\left(\sum_{i=1}^{k}10^{t(m+1)+2i}\right)  + 10^t \Big) +1,
\end{align*}
\OMIT{
	\begin{align*}
		q = & ~ 2\cdot \left(\sum_{i=1}^{n}(a_i+b_i) +  
		9\cdot\left(\sum_{i=1}^{k}10^{t(m+1)+2i}\right)\right.\\
		& + \left. \left(\sum_{j=1}^{m}\sum_{i=0}^{r}c_{j,i} \right)
		+ 10^t
		+ 1\right) +1,
	\end{align*}
} %
and let $N $ be the set of 
$14n+4nk-3k^2 -3k +m(r+1)+1$ players in
$\mathcal{G}$, subdivided into the following $14$ groups with the
following weights:  
\begin{itemize}
	\item player~$1$ with weight $1$ will be our distinguished
	player,
	
	\item group $A$ contains $2(n-k)$ players with weight list~$W_A$, 
	
	\item group $C$ contains $m(r+1)$ players with weight list~$W_C$, 
	
	\item group $D$ contains $k$ players with weight list
	\begin{align*}
		W_{D} = & ~\left(q-10^{t(m+1)+2}-w_C-2,\ldots,\right.\\
		& \hspace*{1.0mm}\left. q-10^{t(m+1)+2k}-w_C-2\right),
	\end{align*}
	\OMIT{
		\begin{align*}
			W_{D} = & ~\left(q-10^{t(m+1)+2}-\left(2^{\ceil{\log_{2} n}}-1\right)\left(\sum_{j=1}^{m} 10^{tj}\right)-2,\ldots,\right.\\
			& \hspace*{1.0mm}\left. q-10^{t(m+1)+2k}-\left(2^{\ceil{\log_{2} n}}-1\right)\left(\sum_{j=1}^{m} 10^{tj}\right)-2\right),
		\end{align*}
	} %
	
	\item group $E$ contains $n$ players, each with weight~$e$,
	
	\item group $F$ contains $n+1$ players with weights 
	$W_{F} = \left( q-q_3-1,q-q_3-e-1,\ldots,q-q_3-ne-1\right)$,
	
	\item group $G$ contains
	$(k+1)(2n-2k-1)$ players whose weights are
	of the form
	\begin{align*}
		q & -10^{t(m+1)+2}-\cdots-10^{t(m+1)+2k}-i_1 h' \\
		& -\left(2^{\ceil{\log_{2} n}}-1\right)\left(\sum_{j=1}^{m} 10^{tj}\right)
		-k-i_2 h-1
	\end{align*}
	for $i_1 \in \{0,\ldots,k\}$ and
	$i_2 \in \{1,\ldots,2n-2k-1\}$,
	
	\item group $H$ contains $2n-2k-1$ players, each with weight~$h$,
	
	\item group $H'$ contains $k$ players, each with weight~$h'$,
	
	\item group $U$ contains $2n-k+1$ players whose weights are of the form 
	\begin{align}
		q & -4\cdot 10^{t(m+1)+2}-\cdots-4\cdot 10^{t(m+1)+2k} \nonumber \\
		& - \left(2^{\ceil{\log_{2} n}}-1\right)\left(\sum_{j=1}^{m} 10^{tj}\right)-(2k+1)-iz-1
		\label{equ:case4}
	\end{align}  
	for $i\in\{0,\ldots,2n-k\}$,
	
	\item group $V$
	contains $4k-1$ players with weight list
	\begin{align*}
		W_V=\Big(
		& k+2,k+3,\ldots,2k, \\
		& 4\cdot 10^{t(m+1)+2}, \ldots, 4\cdot 10^{t(m+1)+2k}, \\
		& 3\cdot 10^{t(m+1)+2}, \ldots, 3\cdot 10^{t(m+1)+2k}, \\
		& 2\cdot 10^{t(m+1)+2}, \ldots, 2\cdot 10^{t(m+1)+2k}\Big), 
	\end{align*}
	
	\item group $X$ contains $k(2n-k+1)$ players whose weights are
	of the form
	\begin{align*}
		\qquad
		q & -5\cdot 10^{t(m+1)+2i_1}
		-w_C-2-i_2 z-1
	\end{align*}
	\OMIT{
		\begin{align*}
			\qquad
			q & -5\cdot 10^{t(m+1)+2i_1}
			-\left(2^{\ceil{\log_{2} n}}-1\right)\left(\sum_{j=1}^{m} 10^{tj}\right)-2-i_2 z-1
		\end{align*}
	} %
	for $i_1 \in\{1,\ldots,k\}$ and $i_2 \in \{0,\ldots,2n-k\}$,
	
	\item group $Y$ contains $2n-k+1$ players with weight list
	$W_Y = \left(q-1, q-z-1, \ldots, q-(2n-k)z-1\right)$, and
	
	\item group $Z$ contains $2n-k$ players, each with weight~$z$.
\end{itemize}

Let us analyze for which coalitions player~$1$ can be pivotal,
i.e., which coalitions
of $(N \cup M) \setminus \{1\}$ can have a total
weight of $q-1$. First, note that any two players from $D \cup F
\cup G \cup U \cup X \cup Y$ together have a weight larger
than~$q$; therefore, there can be at most one player from this set
in any coalition of $(N \cup M) \setminus \{1\}$ for which $1$ can be
pivotal.  Moreover, all other players together have a total weight
smaller than $q-1$. Thus any coalition
$S \subseteq (N \cup M) \setminus \{1\}$ with weight $q-1$ has to contain
exactly one of the players from $D \cup F \cup G \cup U \cup X \cup
Y$, and which other players can take part in forming such a
coalition $S$ depends on which of these groups this player belongs
to.  Accordingly, we distinguish the following six cases:
\begin{description}
	\item[Case 1:] If $S$ contains a player from~$D$ (i.e., a player
	whose weight is of the form
	\[
	q-10^{t(m+1)+2i}-\left(2^{\ceil{\log_{2} n}}-1\right)\left(\sum_{j=1}^{m} 10^{tj}\right)-2,
	\]
	for some
	$i\in\{1,\ldots,k\}$), then $S$ also has to contain exactly one
	player added from $M$ and players from some subset of~$C$, which
	is uniquely determined for each 
	player from $M$.
	
	\item[Case 2:] If $S$ contains a player from~$F$, i.e., a player
	whose weight is of the form 
	\[
	q-q_3-je-1 
	\]
	for some
	$j\in\{0,\ldots,n\}$, then $S$ also has to contain the players
	from $A \cup C \cup M$ whose weights sum up to $q_2$ and some $j$
	players with weight~$e$. 
	\OMIT{
		; the players of weights~$h$, $h'$, and $z$ and the first $k-1$ players from $V$
		have a total weight smaller than $e$ and, therefore, also smaller
		than the weight of any player in  
		$A \cup C \cup M$
	} %
	
	\item[Case 3:] If $S$ contains a player from~$G$, 
	i.e., a player whose weight is of the form
	\[
	q-\sum_{i=1}^k 10^{t(m+1)+2i} - w_C - i_1 h' - i_2 h -1
	\] 
	\OMIT{
		\[
		q-\sum_{i=1}^k 10^{t(m+1)+2i} - \left(2^{\ceil{\log_{2} n}}-1\right)\left(\sum_{j=1}^{m} 10^{tj}\right) - i_1 h' - i_2 h -1
		\] 
	} %
	for $i_1\in\{0,\ldots,k\}$ and $i_2\in\{1,\ldots,2n-2k-1\}$, then $S$ also has
	to contain $k$ players from~$M$, some players from~$C$, $H$, and
	possibly from~$H'$.
	
	\item[Case 4:] If $S$ contains a player from~$U$, 
	i.e., a player whose weight is of the form 
	\[
	q-4\sum_{i=1}^k 10^{t(m+1)+2i} - w_C - (2k+1) - i' z -1
	\]
	\OMIT{
		\[
		q-4\sum_{i=1}^k 10^{t(m+1)+2i} - \left(2^{\ceil{\log_{2} n}}-1\right)\left(\sum_{j=1}^{m} 10^{tj}\right) - (2k+1) - i' z -1
		\]
	} %
	for $i'\in\{0,\ldots,2n-k\}$, then $S$ also has
	to contain some players from $V$ and~$C$,
	at least one but at most $k-1$ players
	added from~$M$,   
	and possibly some players from~$Z$.
	
	\item[Case 5:] If $S$ contains a player from~$X$,
	i.e., a player whose weight is of the form
	\[
	q-5\cdot 10^{t(m+1)+2i}  - \left(2^{\ceil{\log_{2} n}}-1\right)\left(\sum_{j=1}^{m} 10^{tj}\right) - j z - 3
	\]
	for $i\in\{1,\ldots,k\}$ and $j\in\{0,\ldots,2n-k\}$,
	then $S$ also has
	to contain 
	the 
	pair of players of weights $a_i$ and $b_i$, 
	the 
	player from $V$ having a weight of   
	$3\cdot 10^{t(m+1)+2i}$, and     
	possibly some players from~$Z$.
	
	\item[Case 6:] If $S$ contains a player from~$Y$, i.e., a player
	whose weight is of the form 
	\[
	q-jz-1 
	\]
	for some
	$j\in\{0,\ldots,2n-k\}$, then $S$ either already achieves the
	targeted weight (namely, in case $S$ contains the player with
	weight $q-1$ from~$Y$, for $j=0$), or (if $j>0$) $S$ also has to
	contain $j$ of the players from~$Z$.
\end{description}

Note that, by the definition of these weight values, there exist no
other (than those listed above) combinations of players who could
form coalitions for which player~$1$ would be pivotal.  For example,
in Case~4, 
all other players except the player from $U$ have to have a total weight of~$4\sum_{i=1}^k 10^{t(m+1)+2i} + w_C + (2k+1) + i' z$ (for $i'$ defined above in this case).	
Each player in $A$ has too large a weight to form such a
coalition $S$ with 
the
player from $U$ (their total weight would be
greater than $q-1$). All of the players with a weight smaller than
any of those in $M$ 
have a total weight smaller than $10^{t(m+1)+1}$;
therefore, the players in 
$M\cup U \cup V$ are needed:
Each missing part of the form~$4\cdot 10^{t(m+1)+2i}$ can be achieved only by players with weights $a_i$, $b_i$, $2\cdot 10^{t(m+1)+2i}$, $3\cdot 10^{t(m+1)+2i}$, or $4\cdot 10^{t(m+1)+2i}$ because all players with smaller weights together have a total weight smaller than $10^{t(m+1)+2i}$ while 
the value~$10^{t(m+1)+2i+2}$ is too large for this part and all smaller missing parts together 
(which was shown in the proof of Lemma~\ref{lem:correspondence-assignments-weights}).  
The fact that there
has to be at least one player from $M$ is enforced by the
``$-(2k+1)$'' part
in~(\ref{equ:case4}): The smallest weights
of players in $(N \cup M) \setminus \{1\}$ are $k+2,\ldots,2k$, and to get
exactly the weight $2k+1$
(to compensate for the missing weight of $-(2k+1)$
in~(\ref{equ:case4})), we indeed need a player from~$M$.  
Analogously, $S$ cannot contain more than $k-1$ players from $M$, since it would give the nearest possible value $2k+2>2k+1$.	
Finally, by
the same argumentation, we cannot replace any of the players from $C$
or~$Z$, since any player with a larger weight alone is heavier than
all players in $C\cup Z$ together, and all players with a smaller
weight than any player from $C$ (respectively, from~$Z$) together are
lighter than that player.  The situation and the argumentation in all
other cases is analogous.

Since there are no players with weights $a_i$ or $b_i$ for
$i\in\{1,\ldots,k\}$ in the game~$\mathcal{G}$, player~$1$ can be
pivotal only for the coalitions described in 
Case~6, and therefore,
\[
\PenroseBanzhaf(\mathcal{G},1) =
\frac{\sum_{i=0}^{2n-k}{2n-k \choose i}}{2^{\|N\|-1}} =
\frac{2^{2n-k}}{2^{\|N\|-1}}.
\]

To show the correctness of the presented reductions (which obviously
can be computed in polynomial time), we need to show that the
following three statements are pairwise equivalent:
\begin{enumerate}
	\item[(1)] $(\phi,k)$ is a yes-instance of \textsc{E-MinSAT}.
	\item[(2)] $(\mathcal{G},M,1,k)$ is a yes-instance of
	\textsc{Control-by-Add\-ing-Players-to-Decrease-$\PenroseBanzhaf$}.
	\item[(3)] $(\mathcal{G},M,1,k)$ is a yes-instance of
	\textsc{Control-by-Add\-ing-Players-to-Nonincrease-$\PenroseBanzhaf$}.
\end{enumerate}

{\bf (1) $\mathbf{\Rightarrow}$ (2):}
Let us assume that $(\phi,k) \in \textsc{E-MinSAT}$, i.e.,
there exists a truth assignment to $x_1,\ldots,x_k$ such that
at most half of the assignments to the remaining $n-k$ variables yields
a satisfying assignment for the boolean formula~$\phi$.
By Lemma~\ref{lem:correspondence-assignments-weights}, these assignments correspond uniqely to subsets of $M\cup A\cup C$ with total weight~$q_2$, whereas a partial assignment to the first $k$ variables corresponds to a subset $M'$ of $M$ with $\|M'\|=k$; this is the set of new players that are added to~$\mathcal{G}$, creating a new game $\mathcal{G}_{\cup M'}$.
Therefore, there are at most $2^{n-k-1}$ assignments to $x_{n-k},\ldots,x_n$ which, together with the truth assignment to $x_1,\ldots,x_k$, satisfy~$\phi$, so there are at most $2^{n-k-1}$ subsets of players in $A\cup C \cup M'$ with total weight~$q_3$.
Now, with the players from $E \cup F$, each of these subsets can form $2^{n}$ coalitions for which player~$1$ is pivotal in $\mathcal{G}_{\cup M'}$. Moreover, $1$ becomes also pivotal for coalitions with players from $G\cup M' \cup H \cup H'$, for coalitions with players from 
$C \cup M'\cup U \cup V \cup Z$, and for $k$ coalitions with players from $M' \cup D\cup C$.  Therefore, we have
\begin{align*}
	& \PenroseBanzhaf( \mathcal{G}_{\cup M'},1) \\
	& \le \frac{2^{2n-k}+2^{2n-k}\sum_{i=1}^{k-1}{k \choose i} + 2^{n}\cdot 2^{n-k-1}}{2^{\|N\|+k-1}}  \\
	& \ \ \ \ \ \ + \frac{\sum_{i=0}^{k}{k \choose i}\sum_{j=1}^{2n-2k-1}{2n-2k-1 \choose j} + k}{2^{\|N\|+k-1}}  \\ 
	& =\frac{2^{2n-k} + 2^{2n-k}(2^k -2) + 2^{2n-k-1}+2^{k}(2^{2n-2k-1}-1)+k}{2^{\|N\|+k-1}} \\ 
	& =\frac{2^{2n-k} + 2^{k}2^{2n-k} - 2^{2n-k+1} + 2^{2n-k-1} + 2^{2n-k-1} - 2^{k} + k}{2^{\|N\|+k-1}} \\ 
	& =\frac{2^{2n-k}}{2^{\|N\|-1}} + \frac{2^{2n-k} - 2^{2n-k+1} + 2^{2n-k}-2^{k}+k}{2^{\|N\|+k-1}} \\ 
	& < \frac{2^{2n-k}}{2^{\|N\|-1}}  =  \PenroseBanzhaf(\mathcal{G},1),
\end{align*}
which means that the new Penrose--Banzhaf index of player~$1$ is
stricly smaller than the old one, so $(\mathcal{G},M,1,k)$ is a
yes-instance of
\textsc{Control-by-Adding-Players-to-Decrease-$\PenroseBanzhaf$}.

{\bf (2) $\mathbf{\Rightarrow}$ (3):} is trivially true.

{\bf (3) $\mathbf{\Rightarrow}$ (1):}
We show the contrapositive: If $(\phi,k)$ is a no-instance of
\textsc{E-MinSAT} then $(\mathcal{G},M,1,k)$ is a no-instance of
\textsc{Control-by-Adding-Players-to-Nonincrease-$\PenroseBanzhaf$}.

Let us assume now that $(\phi,k)\notin \textsc{E-MinSAT}$.
This means that there does not exist any truth assignment to
$x_1,\ldots,x_k$ such that
at most half of the assignments to the remaining $n-k$ variables yields
a satisfying assignment for the boolean formula~$\phi$, i.e., for each assignment to $x_1,\ldots,x_k$, there exist at least $2^{n-k-1}+1$ assignments to $x_{k+1},\ldots,x_n$ which satisfy~$\phi$. Let us consider possible sets of new players $M'\subseteq M$, creating after adding them the new game~$\mathcal{G}_{\cup M'}$:  
\begin{description}
	\item[Case 1:] If $\|M'\|<k$,
	then there exists some $i\in\{1,\ldots,k\}$ such that the new game
	$\mathcal{G}_{\cup M'}$
	contains none of the players with weights $a_i$ and~$b_i$, 
	so there is no coalition of weight~$q-1$ formed by players from $G\cup M'\cup C \cup H \cup H'$ and
	it is impossible to find a subset of players
	with a total weight of $q_3$ and, therefore, there is no new
	coalition for which player~$1$ can be pivotal with players from $E
	\cup F$.   
	However, there are still new coalitions for which player~$1$ can
	be pivotal, namely for each nonempty
	subset of $M'$ with players from   
	$U \cup C \cup V \cup Z$, with some players from
	$D\cup C$, and possibly with players  
	from~$X \cup C \cup Z$.
	Hence, for $k'=\|M'\|$, we have
	\begin{align*}
		\PenroseBanzhaf(\mathcal{G}_{\cup M'},1) & \ge \frac{2^{2n-k}+2^{2n-k}\sum_{i=1}^{k'}{k' \choose i}+k'}{2^{\|N\|+k'-1}}  \\ 
		& =\frac{2^{2n-k} + 2^{2n-k}(2^{k'} -1)+k'}{2^{\|N\|+k'-1}} \\ 
		& =\frac{2^{2n-k} + 2^{2n-k+k'} - 2^{2n-k} +k'}{2^{\|N\|+k'-1}} \\ 
		& >\frac{2^{2n-k}}{2^{\|N\|-1}} =\PenroseBanzhaf(\mathcal{G},1).
	\end{align*}
	
	\item[Case 2:] If $\|M'\|=k$ and $M'$ contains both players with
	weights $a_j$ and $b_j$ for some $j\in\{1,\ldots,k\}$, then
	player~$1$ is pivotal for coalitions analogously as in the
	previous case, but now we know that there are at least $2^{2n-k}$
	new coalitions with~$a_j$, $b_j$, and players from   
	$X\cup C \cup Z$, so
	\begin{align*}
		\PenroseBanzhaf( & \mathcal{G}_{\cup M'},1)\\
		& \ge \frac{2^{2n-k}+2^{2n-k}\sum_{i=1}^{k-1}{k \choose i} + 2^{2n-k}+k}{2^{\|N\|+k-1}}  \\ 
		& =\frac{2^{2n-k} + 2^{2n-k}(2^{k} -2) + 2^{2n-k}+k}{2^{\|N\|+k-1}} \\ 
		& =\frac{2^{2n-k} + 2^k 2^{2n-k} - 2\cdot 2^{2n-k} + 2^{2n-k}+k}{2^{\|N\|+k-1}} \\ 
		& >\frac{2^{2n-k}}{2^{\|N\|-1}} = \PenroseBanzhaf(\mathcal{G},1).
	\end{align*}
	
	\item[Case 3:] If $\|M'\|=k$ and $M'$ contains exactly one player with weight of each pair $\{a_i,b_i\}$ for $i\in\{1,\ldots,k\}$, then analogously to the previous implication,
	\begin{align*}
		& \quad\ \ \PenroseBanzhaf( \mathcal{G}_{\cup M'},1) \\
		& \ge \frac{2^{2n-k}+2^{2n-k}\sum_{i=1}^{k-1}{k \choose i} + 2^{n}(2^{n-k-1}+1)}{2^{\|N\|+k-1}}  \\ 
		& \ \ \ \ \ \ + \frac{ \sum_{i=0}^{k}{k \choose i}\sum_{j=1}^{2n-2k-1}{2n-2k-1 \choose j} +k}{2^{\|N\|+k-1}}  \\ 
		& =\frac{2^{2n-k}}{2^{\|N\|-1}} \\
		& + \frac{2^{2n-k} - 2^{2n-k+1} + 2^{2n-k-1}+2^{n}+2^{2n-k-1}-2^{k}+k}{2^{\|N\|+k-1}} \\
		& > \frac{2^{2n-2k}}{2^{\|N\|-1}}  =  \PenroseBanzhaf(\mathcal{G},1).
	\end{align*}
\end{description}
That means that if $(\phi,k)\notin\textsc{E-MinSAT}$, then the
Penrose--Banzhaf index of player~$1$ increases, so
$(\mathcal{G},M,1,k)$ is a no-instance of
\textsc{Control-by-Adding-Players-to-Non\-increase-$\PenroseBanzhaf$}.

Now, let $\gamma=\ShapleyShubik$.  We will prove $\np^{\PP}$-hardness
of both control
problems, \textsc{Control-by-Adding-Players-to-Decrease-$\ShapleyShubik$}
and \textsc{Control-by-Adding-Players-to-Nonincrease-$\ShapleyShubik$},
again using one and the same reduction from the $\np^{\PP}$-complete
problem $\textsc{E-MinSAT}$.

Let $(\phi, k)$ be a given instance of $\textsc{E-MinSAT}$, where
$\phi$ is a boolean formula in CNF with variables $x_1,\dots,x_n$ and
$m$ clauses, and let $k\ge 3$.

For $r= \ceil{\log_2 n}-1$, let
\begin{align*}
	P' = & ~ 8nk^3 - 4k^4 + 4n^2 k + 12nk^2 - 6k^3 \\
	& + 53n^2  -7nk + 12k^2  + 10n +11k \\
	& + (2n-k+2)m(r+1) -1, \\
	\delta = & \left\lceil\frac{5}{4}P' - 9n - \frac{9}{4}m(r+1)\right\rceil,
\end{align*}
and for these values, let
$$P=P' + \delta,$$
which is the number of players in our game.
Let
$$s=4n+m(r+1)+\delta$$
be the size of all coalitions that will be relevant in our proof, i.e., which will be counted for computing the Shapley--Shubik indices,
and let
$$k'=\frac{(P+1)\cdots (P+k)}{(P-s)\cdots (P+k-1-s)}.$$

We will now show the following bounds for $k'$ that will be used
later on in our proof:
\begin{equation}
	\label{eq:upper-lower-bound-k-prime}
	9\cdot 2^{k-3}<k'< 2^{2k}.
\end{equation}
Indeed, for some $\varepsilon \ge 0$, we have
\begin{align*}
	s = & ~ 4n + m(r+1) + \frac{5}{4}P' - 9n - \frac{9}{4}m(r+1) + \varepsilon \\
	= & ~ \frac{5}{4}P' - 5n - \frac{5}{4}m(r+1) + \frac{5}{4}\delta - \frac{5}{4}\delta + \varepsilon \\
	= & ~ \frac{4}{9}\frac{5}{4}P + \frac{5}{9}\frac{5}{4}P- 5n - \frac{5}{4}m(r+1) - \frac{5}{4}\delta + \varepsilon \\
	= & ~ \frac{5}{9}P + \frac{5}{4}\Big(\frac{5}{9}P- 4n - m(r+1) - \delta + \frac{4}{5}\varepsilon\Big)\\
	= & ~ \frac{5}{9}P + \frac{5}{4}\Big(\frac{5}{9}P'- 4n - m(r+1) - \frac{4}{9}\delta 
	+ \frac{4}{5}\varepsilon\Big)\\
	= & ~ \frac{5}{9}P + \frac{5}{4}\Big(\frac{5}{9}P'- 4n - m(r+1) - \frac{5}{9}P' \\
	& + 4n + m(r+1)-\frac{4}{9}\varepsilon 
	+ \frac{4}{5}\varepsilon\Big)\\
	\ge & ~ \frac{5}{9}P,
\end{align*}
which gives us
$$\frac{P+1}{P-s}\ge \frac{P+1}{\frac{4}{9}P} = \frac{9}{4}\Big(1+\frac{1}{P}\Big)\ge \frac{9}{4},$$
and since clearly $P\ge 9k$, it follows that
$$\frac{P+k}{P+k-1-s}\ge\frac{P+k}{\frac{4}{9}P+k-1}= 1+\frac{\frac{4}{9}P + \frac{1}{9}P+1}{\frac{4}{9}P+k-1}>2,$$
and therefore, $k' > 9\cdot2^{k-3}$, which gives the lower bound of $k'$ stated in~(\ref{eq:upper-lower-bound-k-prime}).

We prove the upper bound of $k'$ stated in~(\ref{eq:upper-lower-bound-k-prime}) as follows: 
\begin{align*}
	s = & ~ 4n + m(r+1) + \frac{5}{4}P' - 9n - \frac{9}{4}m(r+1) + \varepsilon \\
	= & ~ \frac{2}{3}P' + \frac{7}{12}P' - 5n -\frac{5}{4}m(r+1)+\varepsilon \\
	= & ~ \frac{2}{3}(P' + \delta) + \frac{7}{12}P' -\frac{2}{3}\delta - 5n -\frac{5}{4}m(r+1)+\varepsilon \\
	= & ~ \frac{2}{3}P + \frac{7}{12}P' -\frac{10}{12}P'  + 6n + \frac{3}{2}m(r+1) - \frac{2}{3}\varepsilon \\
	& - 5n -\frac{5}{4}m(r+1)+\varepsilon \\
	= & ~ \frac{2}{3}P -\frac{1}{4}P' + n + \frac{1}{4}m(r+1)  +\frac{1}{3}\varepsilon \\
	\le & ~ \frac{2}{3}P - 1,
\end{align*}
and therefore,
$$\frac{P+1}{P-s}\le \frac{P+1}{\frac{1}{3}P+1} =3\frac{P+1}{P+3}< 3,$$
which gives us $k'<3^k < 2^{2k}$, as desired.

Next, let
\begin{align*}
	y &= k' - 2^{k} > 1,\\
	\gamma_1 &= \ceil{k'-1} = k' - \varepsilon_1
	\quad\text{ for $\varepsilon_1>0$},\\
	\gamma_2 &= \floor{2^{2n-2k-1}y+1} = 2^{2n-2k-1}y + \varepsilon_2
	\quad\text{ for $\varepsilon_2>0$},\\
	\gamma_3 &= \ceil{2^{2n-2k-1}\varepsilon_1} = 2^{2n-2k-1}\varepsilon_1 + \varepsilon_3 
	\quad\text{ for $\varepsilon_3 \in [0,1)$},\\
	\gamma_4 &= \left\lceil 2^{n-k+1}y+2^n - \frac{k\varepsilon_2 + (2^k -k -2)\varepsilon_3}{2^{n-k-1}} -1 \right\rceil \\
	&= 2^{n-k+1}y+2^n - \frac{k\varepsilon_2 + (2^k -k -2)\varepsilon_3}{2^{n-k-1}} - \varepsilon_4 \\
	& \hspace*{5.4cm}\text{ for $\varepsilon_4>0$, and}\\
	\gamma_5 &= \ceil{y}.
\end{align*}

Now, for each $\gamma_i$, $i\in\{1,2,3,4,5\}$, defined above, let $\beta_i, \alpha_{i,1},\ldots,\alpha_{i,\beta_i} \in
\mathbb{N}$ be such that $\alpha_{i,1}>\cdots>\alpha_{i,\beta_i}$ and
$$\gamma_i = 2^{\alpha_{i,1}} + \cdots + 2^{\alpha_{i,\beta_i}}.$$
From the upper bound of the value $k'$ stated in~(\ref{eq:upper-lower-bound-k-prime}), we have that
\begin{align*}
	\alpha_{1,1}, \beta_1 & < 2k,\\ 
	y & = k' -2^k < 2^{2k} - 2^k,\quad\text{ and}\\
	\alpha_{5,1},\beta_5 & \le 2k.
\end{align*}
Next,
\begin{align*}
	\alpha_{2,1},\beta_2 & \le 2n,\\
	\alpha_{3,1},\beta_3 & \le 2n-2k-1,\quad\text{ and}\\
	\alpha_{4,1},\beta_4 & \le n+k+2.
\end{align*}

Now we are ready to define the groups of players, subdivided into categories, with their numbers and weights in Table~\ref{tab:adding-decrease-nonincrease-SSI}.

\onecolumn
\begin{center}
	\renewcommand{\arraystretch}{1.5}
	\begin{longtable}{c|c|c|c}
	\caption{\label{tab:adding-decrease-nonincrease-SSI}
		Groups of players in the proof of
		Theorem~\ref{restofresults}(b) and (c), with their categories,
		numbers, and weights
	} \\
		\toprule
		\textbf{Category} & \textbf{Group} & \textbf{Number of Players} & \textbf{Weights} \\
		\midrule
		& distinguished player~$p$ & $1$ & $1$ \\
		\hline
		(ms) & $A$ & $2n-2k$ & $W_A$ \\
		\hline
		(ms) & $C$ & $m(r+1)$ & $W_C$ \\
		\hline 
		(ms) & $C'$ & $m(r+1)$ & $W_{C'}$ \\
		\hline
		(size) & $D$ & $\delta$ & $d = \frac{5}{2}k^2 - \frac{7}{2}k - 2$ \\
		\hline
		(num) & \makecell{$E_i$ for \\ $i \in \{1,\ldots,\beta_4\}$} & $\alpha_{4,i}$ & $e_i = 1 + (\delta+1) d + \sum_{j=1}^{i-1}\alpha_{4,j}e_j$ \\ 
		\hline
		(size) & \makecell{$E_{i}^{*}$ for \\ $i\in\{0,\ldots,\alpha_{4,1}\}$} & $3n-1-i$ & $e_{i}^{*}=1 +(\alpha_{4,\beta_4}+1)e_{\beta_4} + \sum_{j=0}^{i-1}(3n-1-j)e_{j}^{*}$  \\
		\hline
		(num) &
		$R$ & $\sum_{i=1}^{\beta_4}(\alpha_{4,i} + 1)$ & \makecell{$q-q_{4} - j_i e_{i} - (3n-1-j_i)e_{j_i}^{*}-\delta d -1$ \\ for $i\in\{1,\ldots,\beta_4\}$ and $j_i \in\{0,\ldots,\alpha_{4,i}\}$} \\
		\hline
		(num) & $S$ &  \makecell{$(2n-2k)$\\$\cdot\sum_{i=1}^{\beta_1}(\alpha_{1,i}+1)$} & \makecell{$q -4\cdot 10^{t(m+1)+2}-\cdots-4\cdot 10^{t(m+1)+2k}-w_C$ \\
			$ -t^{*}-(3k+1)-i_1 t'' 
			- j_{i_2} t_{i_2} $ \\ $-(4n-2k-i_1-j_{i_2})t^{**}_{i_1+j_{i_2}}-\delta d-1$ \\ 
			for $i_1\in\{0,1,\ldots,2n-2k-1\}$, $i_2 \in\{1,\ldots,\beta_1\}$ \\ and $j_{i_2}\in\{0,1,\ldots,\alpha_{1,i_2}\}$} \\
		\hline
		(num) & $S'$ & $4k-2$ & \makecell{$\Big(t^{*}+2k+3,t^{*}-t^{*}_1 + 2k+4,$ \\ $\ldots,$  $t^{*} - (k-3)t^{*}_{k-3}+3k,$ \\ $ 4\cdot 10^{t(m+1)+2}, \ldots, 4\cdot 10^{t(m+1)+2k}, $ \\ $ 3\cdot 10^{t(m+1)+2}, \ldots, 3\cdot 10^{t(m+1)+2k},$ \\ $ 2\cdot 10^{t(m+1)+2}, \ldots, 2\cdot 10^{t(m+1)+2k}\Big)$} \\
		\hline
		(num) & $T$ & $2n-2k-1$ & $t'' = (3n-\alpha_{4,1})e^{*}_{\alpha_{4,1}}$ \\
		\hline
		(num) & \makecell{$T_i$ for \\ $i\in\{1,\ldots,\beta_1\}$} & $\alpha_{1,i}$ & $t_i = 1 + (2n-2k)t'' + \sum_{j=1}^{i-1} \alpha_{1,j}t_j$ \\
		\hline
		(size) & \makecell{$T^{*}_i$ for \\ $i\in\{1,\ldots,k-1\}$} & $i$ & $t_{i}^{*} =  1 +  (\alpha_{1,\beta_1}+1)t_{\beta_1} + \sum_{j=1}^{i-1} jt^{*}_j$ \\
		\hline
		(size) & \makecell{$T^{**}_i$ for \\ $i\in\{0,\ldots, $ \\ $\max\{2n-2k-1$ \\ $ +\alpha_{1,1},\alpha_{3,1}\}\}$} & $4n-2k-i$ & $t_{i}^{**} =  1 + kt^{*}_{k-1} + \sum_{j=0}^{i-1} (4n-2-j)t^{**}_j$ \\
		\hline
		(num) & $U$ & $k\sum_{i=1}^{\beta_2}(\alpha_{2,i}+1)$ & \makecell{$q -10^{t(m+1)+2i_1}-1-w_C$ \\ 
			$-j_{i_2}u_{i_2}-(4n-2-j_{i_2})u^{*}_{j_{i_2}}-\delta d-1$
			\\ for $i_1\in\{1,\ldots,k\}$, $i_2\in\{1,\ldots,\beta_2\}$ \\ and $j_{i_2}\in\{0,1,\ldots,\alpha_{2,i_2}\}$} \\
		\hline
		(num) & \makecell{$U_i$ for \\ $i\in\{1,\ldots,\beta_2\}$} & $\alpha_{2,i}$ & \makecell{$u_i = 1 + \sum_{j=1}^{i-1} \alpha_{2,j}u_j $ \\ $+
		(4n-2k-\max\{2n-2k-1+\alpha_{1,1},\alpha_{3,1}\}+1) $ \\$\cdot t^{**}_{\max\{2n-2k-1+\alpha_{1,1},\alpha_{3,1}\}}$} \\
	\hline
	(size) & \makecell{$U_{i}^{*}$ for \\ $i\in\{0,\ldots,\alpha_{2,1}\}$} & $4n-2-i$ & $u_{i}^{*} =  1 + (\alpha_{2,\beta_2}+1)u_{\beta_2} + \sum_{j=0}^{i-1} (4n-2-j)u^{*}_j$ \\
	\hline
	(num) & $V$ & $\sum_{i=1}^{\beta_3}(\alpha_{3,i}+1)$ & \makecell{$ q -4\cdot 10^{t(m+1)+2}-\cdots-4\cdot 10^{t(m+1)+2k}-w_C$ \\ 
		$%
		-t^{*}-(3k+1) 
		- j_{i} v_{i}-(4n-2k-j_i)t^{**}_{j_i}-\delta d-1$
		\\
		for $i \in\{1,\ldots,\beta_3\}$ and $j_{i}\in\{0,1,\ldots,\alpha_{3,i}\}$} \\
	\hline
	(num) & \makecell{$V_i$  for \\ $i\in\{1\ldots,\beta_3\}$} & $\alpha_{3,i}$ & $v_i = 1 + (4n-1-\alpha_{2,1})u^{*}_{\alpha_{2,1}} + \sum_{j=1}^{i-1} \alpha_{3,j}v_j$ \\
	\hline
	(num) & $X$ & $2n-k$ & \makecell{$ q-4\cdot 10^{t(m+1)+2}-\cdots-4\cdot 10^{t(m+1)+2k}-w_C$ \\ 
		$-(k-1) - iz - (4n-2k
		-i)x^{*}_i-\delta d-1$ \\
		for $i\in\{0,1,\ldots,2n-k-1\}$} \\
	\hline
	(size) & \makecell{$X_{i}^{*}$  for \\ $i\in\{0,\ldots,2n-k-1\}$} & $4n-2k-i$ & $x_{i}^{*} = 1 + (\alpha_{3,\beta_3}+1)v_{\beta_3} + \sum_{j=0}^{i-1} (4n-2k-j)x^{*}_j$ \\
	\hline
	(num) & $Y$ & \makecell{$k(2n-k+1)$ \\ $\cdot\sum_{i=1}^{\beta_5}(\alpha_{5,i}+1)$} & \makecell{ $ q -5\cdot 10^{t(m+1)+2i_1}-2 - w_C$ \\ 
		$-i_2 y'- j_{i_3}y_{i_3} -(4n-4-i_2 - j_{i_3})y^{*}_{i_2+j_{i_3}}-\delta d-1$
		\\
		for $i_1\in\{1,\ldots,k\}$, $i_2\in\{0,1,\ldots,2n-k\}$, \\ $i_3\in\{1,\ldots,\beta_5\}$ and $j_{i_3}\in\{0,1,\ldots,\alpha_{5,i_3}\}$} \\
	\hline
	(num) & $Y'$ & $2n-k$ & $y' = (2n-k+1)x^{*}_{2n-k-1}$ \\
	\hline
	(num) & \makecell{$Y_{i}$ for \\ $i\in\{1,\ldots,\beta_5\}$} & $\alpha_{5,i}$ & $y_{i} = 1 + (2n-k+1)y' + \sum_{j=1}^{i-1} \alpha_{5,j}y_j$ \\
	\hline
	(size) & \makecell{$Y_{i}^{*}$ for \\ $i\in\{0,\ldots,2n-k+\alpha_{5,1}\}$} & $4n-4-i$ & $y_{i}^{*} = 1 + (\alpha_{5,\beta_5}+1)y_{\beta_5} + \sum_{j=0}^{i-1} (4n-4-j)y^{*}_j$ \\
	\hline
	(num) & $Z$ & $2n-k$ & $q-iz-(4n+m(r+1)-1-i)z_{i}^{*}-\delta d -1$ \\
	\hline
	(num) & $Z'$ & $2n-k-1$ & $z=(2n+k-3-\alpha_{5,1})y^{*}_{2n-k+\alpha_{5,1}}$ \\
	\hline
	(size) & \makecell{$Z_{i}^{*}$ for \\ $i\in\{0,\ldots,2n-k-1\}$} & $4n+m(r+1)-1-i$ & $z_{i}^{*}= 1 + (2n-k)z + \sum_{j=0}^{i-1} (4n+m(r+1)-1-j)z^{*}_j$ \\
	\hline
	& remaining players & remaining players & $q$ \\
	\bottomrule
\end{longtable}
\end{center}
\twocolumn

Let
\[
t^{*}=(2n+m(r+1)+k+1)z^{*}_{2n-k-1},
\]
for $z^{*}_{2n-k-1}$ as defined in Table~\ref{tab:adding-decrease-nonincrease-SSI}, and let $t, t' \in \mathbb{N}$ be such that
\begin{eqnarray*}
10^{t'} & > & \max\Big\{2^{\ceil{\log_{2} n}+1}, (k-1)(t^{*}+2k+3) \Big\} \textrm{ and }\\
10^t   & > & 10^{t'} + 2^{\ceil{\log_{2} n}+1} \sum_{i=1}^{m} 10^{it'}.
\end{eqnarray*}
For that $t'$ and $t$, given $\phi$ and~$k$, let $q_4$, $W_A$, $W_C$, and $W_M$ be defined by Set~4 in Definition~\ref{def:prereduction2}.

Now, we are ready to construct the instance of our two control problems by adding players to decrease or to nonincrease a given player's Shapley--Shubik power index as follows:
\begin{itemize}
\item Let $k$ be the limit for the number of players that can be added,
\item let $M$ be the set of $2k$ players that can be added and let $W_M$ be the list of their weights,
\item let 
\begin{align*}
	q=2\cdot & \left(\sum_{i=1}^{n}(a_i+b_i) + 9\sum_{i=1}^{k}10^{t(m+1)+2i}\right. \\
	& \left. + \sum_{j=1}^{m}\sum_{s=0}^{r}(c_{j,s}) + 10^{t} + 1\right)
\end{align*}
be the quota of~$\mathcal{G}$, and
\item let $N$ be the set of $P$ players in game~$\mathcal{G}$, divided into groups with players' weights presented in Table~\ref{tab:adding-decrease-nonincrease-SSI}.
\end{itemize}
As in the proof of Theorem~\ref{thm:adding-increase-nondecrease-SSI}, each group of players in Table~\ref{tab:adding-decrease-nonincrease-SSI} (except the distinguished player~$p$
and the remaining players with weight~$q$) belongs to some category.
Among the players with category (ms) and the players in~$M$, we will focus on those coalitions whose total weight is $q_{4}$.
The main purpose of the players from the groups marked (num) is to specify the number of coalitions for which player~$p$ can be pivotal.
The players from groups with category (size) are used to make all these coalitions of equal size
and to ensure that all players with the same weight will be part of the same coalitions.
Now, we will discuss the coalitions counted in our proof in detail.

Let us first discuss which coalitions player~$p$ can be pivotal for
in any of the games $\mathcal{G}_{\cup M'}$ for some
$M' \subseteq M$.\footnote{This also includes the case of the
unchanged game~$\mathcal{G}$ itself, namely for
$M' = \emptyset$.}
Player~$p$ is pivotal for those coalitions of players in
$(N\setminus \{p\}) \cup M'$ whose total weight is $q-1$.
First, note that any two players from $F=R \cup S \cup U \cup V \cup X \cup Y\cup Z$
together have a
weight larger than $q$.
Therefore, at most one of these players
can be in any coalition player~$p$ is pivotal for.
Moreover,
by the definition of the quota, all players from $N\setminus F$ with weights different than $q$
together have a total weight smaller than $q-1$.
That means that any coalition $K \subseteq (N\setminus \{p\}) \cup
M'$ with a total weight of $q-1$ has to contain \emph{exactly} one of
the players in $F$.
Therefore, we consider the following case distinction.

\begin{description}
\item[Case 1:] If $K$ contains a player from~$R$ (with weight, say, $q-q_{4} -j_i e_{i}-(3n-1-j_i)e^{*}_{j_i}-\delta d-1$ for some~$i$, $1 \leq i \leq \beta_4$, and some~$j_i$, $0\leq j_i \leq \alpha_{4,i}$), $K$ also has to contain those players from $M \cup A \cup C \cup C'$ whose weights sum up to $q_{4}$, $j_i$ players from~$E_i$, $3n-1-j_i$ players from $E^{*}_{j_i}$, and $\delta$ players from~$D$.

\item[Case 2:] If $K$ contains a player from~$S$, then it has to contain at least one player and at most $k-2$ players from~$M$, some players from $C \cup C'$, some players from~$S'$, $i_1$ players from~$T$, $j_{i_2}$ players from~$T_{i_2}$, $4n-2k-i_1 - j_{i_2}$ players from $T^{**}_{i_1 + j_{i_2}}$, and all players from $D$, for $i_1$, $i_2$, and $j_{i_2}$ as defined for set $S$ in Table~\ref{tab:adding-decrease-nonincrease-SSI}.

\item[Case 3:] If $K$ contains a player from~$U$, it has to contain exactly one player from~$M$, some players from $C\cup C'$, $j_{i_{2}}$ players from~$U_{i_2}$, $4n-2-j_{i_2}$ players from~$U^{*}_{j_{i_2}}$, and all players from~$D$, for $i_2$ and $j_{i_2}$ as defined for set $U$ in Table~\ref{tab:adding-decrease-nonincrease-SSI}.

\item[Case 4:] If $K$ contains a player from~$V$, then $K$ also contains at least one player but at most $k-2$ players from~$M$, some players from $C \cup C'$, some players from~$S'$, $j_i$ players from~$V_i$, $4n-2k-j_i$ players from~$T^{**}_{j_i}$, and $\delta$ players from~$D$, for $i$ and $j_i$ as defined for set $V$ in Table~\ref{tab:adding-decrease-nonincrease-SSI}.

\item[Case 5:] If $K$ contains a player from~$X$, it has to contains exactly $k-1$ players from~$M$, some players from $C \cup C'$, some players from~$S'$, $i$ players from~$Z'$, and $4n-2k-2-i$ players from $X^{*}_i$ for $i\in\{0,1,\ldots,2n-k-1\}$, and all players from~$D$.

\item[Case 6:] If $K$ contains a player from~$Y$, $K$ also contains the pair $a_{i_1}$ and $b_{i_1}$, some players from $C\cup C'$, $i_2$ players from $Y'$, $j_{i_3}$ players from~$Y_{i_3}$, $4n-4-i_2-j_{i_3}$ players from~$Y^{*}_{i_2 + j_{i_3}}$, and all players from~$D$, for~$i_1$, $i_2$, $i_3$, and $j_{i_3}$ as defined for $Y$ in Table~\ref{tab:adding-decrease-nonincrease-SSI}. 

\item[Case 7:] If $K$ contains a player from $Z$ (with weight, say, $q-iz-(4n+m(r+1)-1-i)z^{*}_i-\delta d-1$ for some~$i$, $0 \leq i \leq 2n-k-1$, and some~$j_i$, $0\leq j_i \leq \alpha_{4,i}$), $K$ also has to contain $i$ players from $Z'$, $4n+m(r+1)-1-i$ players from~$Z^{*}_{i}$, and $\delta$ players from~$D$.
\end{description} 
Note that each of the coalitions described above has the same size~$s$.
Also note that there are no other combinations of coalitions with weight $q-1$ than described in the cases above due to how the players' weights were defined.
Let us analyze shortly Case~$1$ as an example.
To get weight $q-1$, $K$ has to contain (next to some player from~$R$) players with total weight $q_{4}+j_i e_i + (3n-1-j_i)e_{j_i}^{*}+\delta d$.
The part $q_{4}$ can be achieved only by the players from $M \cup A \cup C \cup C'$, since all other players from 
$N\setminus F$ with weight not greater than $t^{*}+2k+3$ have total weight smaller than $10^{t'}$ and the rest of players from $S'$ are either to large or to small to satisfy parts of $q_{4}$ (also combined with the players from $M \cup A \cup C \cup C'$). Moreover, the players from $M \cup A \cup C \cup C'$ can satisfy only $q_{4}$-part because any possible subset of that set have its weight divisible by $10^{t'}$.
For the same reason, any player from $M \cup A \cup C \cup C'$ also has a weight too big to be a part of a combination for $j_i e_i + (3n-1-j_i)e_{j_i}^{*}+\delta d$.
The value $(3n-1-j_i)e_{j_i}^{*}$ can be achieved only by the players from $E_{j_i}^{*}$ since all players from 
$D\cup \bigcup_{l=1}^{\beta_4} E_l$ and players from $E_{l}^{*}$ with smaller weight than $e_{j_i}^{*}$ together have weight smaller than any $e_{j_i}^{*}$ and the rest of players are heavier than all players from $E_{j_i}^{*}$  with all players with smaller weights together.
The same argumentation is used for the remaining value $j_i e_i + \delta d$.

Since there are no players with weights $a_i$ or $b_i$ for
$i\in\{1,\ldots,k\}$ in game~$\mathcal{G}$, player~$p$ can be
pivotal only for the coalitions described in the last case above,
i.e., in Case~7, and therefore,
\begin{align*}
\ShapleyShubik(\mathcal{G},p) = & ~ \sum_{i=0}^{2n-k-1}{2n-k-1 \choose i}\frac{s!(P-1-s)!}{P!} \\
= & ~ 2^{2n-k-1}\frac{s!(P-1-s)!}{P!}.
\end{align*}

To prove the correctness of the reduction, we show that the following
statements are pairwise equivalent:
\begin{enumerate}
\item[(1)] $(\phi, k)$ is a yes-instance of \textsc{E-MinSAT};
\item[(2)] $(\mathcal{G}, M, p, k)$ is a yes-instance of 
\textsc{Control-by-Add\-ing-Players-to-Decrease-}$\ShapleyShubik$;
\item[(3)] $(\mathcal{G}, M, p, k)$ is a yes-instance of 
\textsc{Control-by-Add\-ing-Players-to-Non\-in\-crease-}$\ShapleyShubik$.
\end{enumerate}

{\bf (1) $\mathbf{\Rightarrow}$ (2) and (1) $\mathbf{\Rightarrow}$ (3):}
Assume that $(\phi,k)$ is a yes-instance of \textsc{E-MinSAT}.
Let $M'\subseteq M$ be the set of players corresponding to some solution of $(\phi,k)$ defined according to the proof of Lemma~\ref{lem:correspondence-assignments-weights}, and let us add these players to~$\mathcal{G}$, thus creating a new game~$\mathcal{G}_{\cup M'}$.
Then there exist at most $2^{n-k-1}$ subsets of $M' \cup A \cup C \cup C'$ with total weight~$q_{4}$.
In the new game~$\mathcal{G}_{\cup M'}$, player~$p$ is still pivotal for $2^{2n-k-1}$ coalitions from Case~7 and it becomes pivotal for
\begin{itemize}
\item at most $2^{n-k-1}(2^{\alpha_{4,1}} + \cdots + 2^{\alpha_{4,\beta_4}})$ coalitions from Case~1.
\item $2^{2n-2k-1}(2^{k}-k-2)(2^{\alpha_{1,1}} + \cdots + 2^{\alpha_{1,\beta_1}})$ coalitions from Case~2,
\item $k(2^{\alpha_{2,1}} + \cdots + 2^{\alpha_{2,\beta_2}})$ coalitions from Case~3,
\item $(2^{k}-k-2)(2^{\alpha_{3,1}} + \cdots + 2^{\alpha_{3,\beta_3}})$ coalitions from Case~4, and
\item $k2^{2n-k-1}$ coalitions from Case~5,
\end{itemize}
Therefore,  
\begin{align*}
\ShapleyShubik( & \mathcal{G}_{\cup M'},p)  \\  \le & ~\Big(2^{2n-k-1}+2^{2n-2k-1}(2^{k}-k-2)\gamma_1 + k\gamma_2  \\ 
& + (2^k-k-2)\gamma_3+k2^{2n-k-1} + 2^{n-k-1}\gamma_4 \Big) \\ 
& \cdot\frac{s!(P+k-1-s)!}{(P+k)!} \\ 
= & ~ \Big(2^{2n-k-1}+2^{2n-2k-1}(2^{k}-k-2)\ceil{k'-1} \\
& + k\floor{2^{2n-2k-1}y+1}  + (2^k-k-2)\ceil{2^{2n-2k-1}\varepsilon_1}  \\
& +k2^{2n-k-1} + 2^{n-k-1}\left\lceil 2^{n-k+1}y + 2^n \right. \\
& - \frac{k\varepsilon_2 + (2^k - k -2)\varepsilon_3}{2^{n-k-1}} - 1 \rceil \Big) \frac{1}{k'}\frac{s!(P-1-s)!}{P!} \\
= & ~ \Big(2^{2n-k-1}+2^{2n-k-1}k'-2^{2n-2k-1}(k+2)(2^k+y) \\ 
& -2^{2n-2k-1}(2^k-k-2)\varepsilon_1 + k2^{2n-2k-1}y +k\varepsilon_2 \\ 
& + 2^{2n-2k-1}(2^k-k-2)\varepsilon_1 + (2^k -k -2)\varepsilon_3 +k2^{2n-k-1} \\
& + 2^{2n-2k}y + 2^{2n-k-1} - k\varepsilon_2  - (2^k - k -2)\varepsilon_3\\
& - 2^{n-k-1}\varepsilon_4 \Big) \cdot \frac{1}{k'} \cdot \frac{s!(P-1-s)!}{P!} \\
= & ~ \ShapleyShubik(\mathcal{G},p) - 2^{n-k-1}\varepsilon_4 \cdot \frac{1}{k'} \cdot \frac{s!(P-1-s)!}{P!}  \\
< & ~ \ShapleyShubik(\mathcal{G},p),
\end{align*}
so player~$p$'s Shapley--Shubik power index is strictly smaller in the new
game~$\mathcal{G}_{\cup M'}$ than in the old game~$\mathcal{G}$,
i.e., we have constructed a yes-instance of both our control problems.

{\bf (2) $\mathbf{\Rightarrow}$ (1) and (3) $\mathbf{\Rightarrow}$ (1):}
Conversely, assume that $(\phi,k)$ is a no-instance of \textsc{E-MinSAT},
i.e., 
for each value assignment for the first $k$ variables there exist at least $2^{n-k-1}+1$ value assignments for the rest variables such that together they satisfy~$\phi$.
For the set $M'\subseteq M$ corresponding to any of the solutions, we get analogously to the other implication that  
\begin{align*}
\ShapleyShubik( & \mathcal{G}_{\cup M'},p) \\
\ge & ~ \Big(2^{2n-k-1}+2^{2n-2k-1}(2^{k}-k-2)\gamma_1 + k\gamma_2 \\ 
& + (2^k-k-2)\gamma_3 +k2^{2n-k-1} + (2^{n-k-1}+1)\gamma_4 \Big) \\
& \cdot\frac{s!(P+k-1-s)!}{(P+k)!} \\ 
= & ~ \Big(2^{2n-k-1}+2^{2n-2k-1}(2^{k}-k-2)\ceil{k'-1} \\
& + k\floor{2^{2n-2k-1}y+1}  + (2^k-k-2)\ceil{2^{2n-2k-1}\varepsilon_1} \\
& +k2^{2n-k-1}  + (2^{n-k-1}+1)\ceil{2^{n-k+1}y + 2^n \\
	& - \frac{k\varepsilon_2 + (2^k - k -2)\varepsilon_3}{2^{n-k-1}} - 1} \Big) \frac{1}{k'}\frac{s!(P-1-s)!}{P!} 
\end{align*}
\begin{align*}
= & ~ \Big(2^{2n-k-1}+2^{2n-k-1}k'-2^{2n-2k-1}(k+2)(2^k+y) \\
& -2^{2n-2k-1}(2^k-k-2)\varepsilon_1 + k2^{2n-2k-1}y +k\varepsilon_2 \\
& + 2^{2n-2k-1}(2^k-k-2)\varepsilon_1 + (2^k -k -2)\varepsilon_3 +k2^{2n-k-1} \\
& + 2^{2n-2k}y + 2^{2n-k-1} - k\varepsilon_2 - (2^k - k -2)\varepsilon_3- 2^{n-k-1}\varepsilon_4 \\
& +2^{n-k+1}y+2^n - \frac{k\varepsilon_2 + (2^k - k -2)\varepsilon_3}{2^{n-k-1}} - \varepsilon_4\Big)\\
& \cdot\frac{1}{k'}\frac{s!(P-1-s)!}{P!} \\
= & ~ \ShapleyShubik(\mathcal{G},p) + \Big(- 2^{n-k-1}\varepsilon_4 +2^{n-k+1}y+2^n \\
& - \frac{k\varepsilon_2 + (2^k - k -2)\varepsilon_3}{2^{n-k-1}} - \varepsilon_4\Big)\frac{1}{k'}\frac{s!(P-1-s)!}{P!}  \\
> & ~ \ShapleyShubik(\mathcal{G},p) + \Big(- 2^{n-k-1} +2^{n-k} + 2^{n-k}+2^n \\
& - \frac{k + 2^k - k -2}{2^{n-k-1}} - \varepsilon_4\Big)\frac{1}{k'}\frac{s!(P-1-s)!}{P!}  \\
> & ~ \ShapleyShubik(\mathcal{G},p) + \Big( 2^{n-k} + \frac{2^{2n-k-1} - k - 2^k + k +2}{2^{n-k-1}} - \varepsilon_4\Big) \\
& \cdot\frac{1}{k'}\frac{s!(P-1-s)!}{P!}  \\
\ge & ~ \ShapleyShubik(\mathcal{G},p) + \Big(2^{n-k} + \frac{2^{2n-k-1}  - 2^{n-1} +2}{2^{n-k-1}} - \varepsilon_4\Big)\frac{1}{k'} \\ 
& \cdot\frac{s!(P-1-s)!}{P!}  \\
> & ~ \ShapleyShubik(\mathcal{G},p).
\end{align*}
Next, for any $M'\subseteq M$ such that $0<\|M'\|<k-1$, let
\begin{align*}
k^{*} &= \|M'\| \quad\text{and}\\
k^{**} &= \frac{P+1}{P-s}\cdots\frac{P+k^{*}}{P+k^{*}-1-s}
\end{align*}
(note that $k'\ge 2^{k-k^{*}}k^{**}$).
Then, by Cases~$2$, $3$, $4$, $7$, and possibly~$6$, we have
\begin{align*}
\ShapleyShubik( & \mathcal{G}_{\cup M'},p) \\
\ge  & ~ \Big(2^{2n-k-1} + 2^{2n-2k-1}(2^{k^{*}}-1)\gamma_1 + k^{*}\gamma_2 + (2^{k^{*}}-1)\gamma_3\Big) \\
& \cdot\frac{s!(P+k^{*}-1-s)!}{(P+k^{*})!} \\
= & ~ \Big(2^{2n-k-1} + 2^{2n-2k+k^{*}-1}k' - 2^{2n-2k-1}k'  \\
& - 2^{2n-2k-1}(2^{*}-1)\varepsilon_1 + 2^{2n-2k-1}k^{*}y+k^{*}\varepsilon_2  \\
& + (2^{k^{*}}-1)2^{2n-2k-1}\varepsilon_1 + (2^{k^{*}}-1)\varepsilon_3 \Big) \\
& \cdot \frac{1}{k^{**}}\frac{s!(P-1-s)!}{P!} \\
\ge & ~ \Big(2^{2n-k-1} + 2^{2n-2k+k^{*}-1}2^{k-k^{*}}k^{**} - 2^{2n-2k-1}y  \\
& - 2^{2n-k-1} + 2^{2n-2k-1}k^{*}y+k^{*}\varepsilon_2  + (2^{k^{*}}-1)\varepsilon_3 \Big) \\
& \cdot\frac{1}{k^{**}}\frac{s!(P-1-s)!}{P!} 
\end{align*}
\begin{align*}
\ge & ~ \Big( 2^{2n-k-1}k^{**} +k^{*}\varepsilon_2  + (2^{k^{*}}-1)\varepsilon_3 \Big)\frac{1}{k^{**}}\frac{s!(P-1-s)!}{P!} \\
> & ~ \ShapleyShubik(\mathcal{G},p).
\end{align*}
If $\|M'\|=k-1$, let $$k^{''} = \frac{P+1}{P-s}\cdots\frac{P+k-1}{P+k-2-s}$$
and then, by all the cases except for Case~$1$, we have
\begin{align*}
\ShapleyShubik( & \mathcal{G}_{\cup M'},p) \\ \ge & ~ \Big(2^{2n-k-1} + 2^{2n-2k-1}(2^{k-1}-2)\gamma_1 + (k-1)\gamma_2 \\
& + (2^{k-1}-2)\gamma_3 + 2^{2n-k-1} \Big)\frac{s!(P+k-2-s)!}{(P+k-1)!}\\
= & ~ \Big(2^{2n-k} + 2^{2n-k-2}k'- 2^{2n-2k}k'  \\
& - 2^{2n-2k-1}(2^{k-1}-2)\varepsilon_1 + 2^{2n-2k-1}(k-1)y   \\
& + (k-1)\varepsilon_2 + 2^{2n-2k-1}(2^{k-1}-2)\varepsilon_1 + (2^{k-1}-2)\varepsilon_3\Big) \\
& \cdot\frac{1}{k''}\frac{s!(P-1-s)!}{P!}\\
\ge & ~ \Big(2^{2n-k} + 2^{2n-k-1}k''- 2^{2n-2k}y - 2^{2n-k}  + (k-1)\varepsilon_2 \\
& + 2^{2n-2k-1}(k-1)y  + (2^{k-1}-2)\varepsilon_3 \Big)\frac{1}{k''}\frac{s!(P-1-s)!}{P!}\\
= & ~ \ShapleyShubik(\mathcal{G},p) + \Big( 2^{2n-2k-1}(k-3)y + (k-1)\varepsilon_2 \\
& + (2^{k-1}-2)\varepsilon_3\Big)\frac{1}{k''}\frac{s!(P-1-s)!}{P!}\\
\ge & ~ \ShapleyShubik(\mathcal{G},p) + \Big((k-1)\varepsilon_2 + (2^{k-1}-2)\varepsilon_3\Big)\frac{1}{k''}\frac{s!(P-1-s)!}{P!}\\
> & ~ \ShapleyShubik(\mathcal{G},p).\\
\end{align*}
Finally, for the remaining possible $M'$ with $\|M'\|=k$ (i.e., $M'$ contains a pair $a_i$ and $b_i$ for some $i\in\{1,\ldots,k\}$ and $k-2$ other players from~$M$), we have 
\begin{align*}
\ShapleyShubik( & \mathcal{G}_{\cup M'},p) \\  \ge & ~ \Big(2^{2n-k-1} + 2^{2n-2k-1}(2^{k}-k-2)\gamma_1 + k\gamma_2  \\
& + (2^{k}-k-2)\gamma_3 + k2^{2n-k-1} + 2^{2n-k}\gamma_5 \Big) \\
& \cdot\frac{s!(P+k-1-s)!}{(P+k)!}\\
= & ~ \Big(2^{2n-k-1}+2^{2n-k-1}k'-2^{2n-2k-1}(k+2)(2^k+y) \\
& -2^{2n-2k-1}(2^k-k-2)\varepsilon_1 + k2^{2n-2k-1}y  \\
& +k\varepsilon_2 + 2^{2n-2k-1}(2^k-k-2)\varepsilon_1 + (2^k -k -2)\varepsilon_3 \\
& +k2^{2n-k-1}+ 2^{2n-k}\ceil{y} \Big)\frac{1}{k'}\frac{s!(P-1-s)!}{P!} 
\end{align*}
\begin{align*}
= & ~ \ShapleyShubik(\mathcal{G},p) + \Big(-2^{2n-2k-1}(k+2)y + k2^{2n-2k-1}y  \\
& +k\varepsilon_2 + (2^k -k -2)\varepsilon_3 -2^{2n-k-1}+ 2^{2n-k}\ceil{y} \Big) \\
& \cdot\frac{1}{k'}\frac{s!(P-1-s)!}{P!} \\
= & ~ \ShapleyShubik(\mathcal{G},p) + \Big(-2^{2n-2k}y +k\varepsilon_2 + (2^k -k -2)\varepsilon_3 \\
& -2^{2n-k-1}+ 2^{2n-k}\ceil{y} \Big)\frac{1}{k'}\frac{s!(P-1-s)!}{P!} \\
\ge & ~ \ShapleyShubik(\mathcal{G},p) + \Big(-2^{2n-2k}y +k\varepsilon_2 -2^{2n-k-1}+ 2^{2n-k-1}y \\
& + 2^{2n-k-1}y \Big)\frac{1}{k'}\frac{s!(P-1-s)!}{P!} \\
> & ~ \ShapleyShubik(\mathcal{G},p).\\
\end{align*}
In each case, the Shapley--Shubik index of player~$p$ has decreased by
adding players, so we have constructed a no-instance of both our
control problems.  This completes the
proof.~\eproofof{Theorem~\ref{restofresults}(b) and (c)}

\sproofof{Theorem~\ref{restofresults}(d)}
Let $\gamma=\PenroseBanzhaf$.  	We will prove $\np^{\PP}$-hardness by providing a reduction from
\textsc{E-ExaSAT}.  Let
$(\phi, k, \ell)$ be a given instance of \textsc{E-ExaSAT},
where $\phi$ is a boolean formula in CNF with
variables $x_1,\dots,x_n$ and $m$ clauses,
$1\le k\le n$, and $\ell$ is an integer.

First, we need to define some values we will use in our reduction. 
For some $h\in\mathbb{N}$, let $\ell_1,\ldots,\ell_h \in \mathbb{N}$, $\ell_1 > \cdots > \ell_h$, be such that
\[
\ell=2^{\ell_1}+\cdots + 2^{\ell_h} \le 2^n
\]
(with $h\le n$ and $\ell_1\le n$).
Moreover, let $z_1=k+1$ and for $i\in\{2,\ldots,h\}$, let
\[
z_i = k + 1 + \sum_{j=1}^{i-1} \ell_j z_j.
\]
Let $t \in \mathbb{N}$ be such that 
\begin{equation}
\label{eq:assumption-for-t}
10^t > \max\Big\{2^{\ceil{\log_{2} n}+1}, k+\sum_{j=1}^{h} \ell_j z_j\Big\},
\end{equation}
and for this~$t$, given $\phi$ and~$k$, let $q_1$, $W_A$, $W_M$, and
$W_C$ be defined as in Set~1 of Definition~\ref{def:prereduction}.

Now, we contruct from $(\phi, k, \ell)$ an instance of our control
problem, \textsc{Control-by-Adding-Players-to-Maintain-$\PenroseBanzhaf$}.
Let $k$ be the limit for the number of players that can be added,
let $M$ be the set of $2k$ players that can be added with the list of weights~$W_M$, and let
\begin{align*}
q ~=~ &2 \cdot \left(w_A + w_M  + w_C  + \left(\sum_{i=1}^{h}\ell_i z_i\right) + k + 1\right) + 1 \\
~=~ &2 \cdot \left(\sum_{i=1}^{n}(a_i+b_i) + \left(\sum_{j=1}^{m}\sum_{i=0}^{r}c_{j,i}\right) + \right.\\
& \left. k + \left(\sum_{i=1}^{h}\ell_i z_i\right) + 1\right) + 1
\end{align*}
be the quota of WVG~$\mathcal{G}$.
Further, let $N$ be the set of
\[
2n+m(r+1)+2\ell_1+\cdots+2\ell_h + h+1
\]
players in $\mathcal{G}$ with the following list of weights:
\begin{align*}
W_N=( & 1,a_{k+1},\ldots,a_n, b_{k+1},\ldots,b_n, \\
& c_{1,0},\ldots,c_{1,r},\ldots,c_{m,0},\ldots,c_{m,r}, \\ 
& q-q_1-2,\ldots, q-q_1-k -1, \underbrace{1,\ldots,1}_{k}, \\
& q-1, q-z_1-1, \ldots, q-\ell_1 z_1-1, \underbrace{z_1,\ldots,z_1}_{\ell_1},
\ldots, \\
& q-1, q-z_h-1, \ldots, q-\ell_h z_h-1, \underbrace{z_h,\ldots,z_h}_{\ell_h}),
\end{align*}
which can be subdivided into the following $2h+5$ groups:
\begin{itemize}
\item player~$1$ with weight $1$ will be our distinguished player,
\item group $A$ contains $2(n-k)$ players with weight list~$W_A$,
\item group $C$ contains $m(r+1)$ players with weight list~$W_C$,
\item group $W$ contains $k$ players whose weights are of the
form $q-q_1-j-1$ for $j\in\{1,\ldots,k\}$,
\item group $X$ contains $k$ players with weight~$1$ each,
\item for each~$i\in\{1,\ldots,h\}$, there is a group $Y_i$ that
contains the players whose weights are of the form $q-j z_i - 1$
for $j \in \{0,1,\ldots,\ell_i\}$, and
\item for each~$i\in\{1,\ldots,h\}$, there is a group $Z_i$ that
contains $\ell_i$ players with weight~$z_i$.
\end{itemize}

Player~$1$ is pivotal for the coalitions
in $(N \cup M) \setminus \{1\}$ with weight $q-1$.
First, note that any two players from $W\cup Y_1 \cup \cdots \cup Y_h$ together have a total weight larger than~$q$; therefore, there can be at most one player from this set in any coalition of $S \subseteq (N \cup M) \setminus \{1\}$ for which $1$ can be pivotal.
Moreover, all players from 
$A \cup C \cup M \cup X \cup \bigcup_{i=1}^h Z_i$ together have a total weight smaller than $q-1$ (recall the definition of~$q$).
This means that any coalition $S \subseteq (N \cup M) \setminus \{1\}$ with a total weight of $q-1$ has to contain \emph{exactly} one of
the players in $W\cup Y_1 \cup \cdots \cup Y_h$.
Now, whether this player is in $W$, $Y_1$, $\ldots$, $Y_{h-1}$, or $Y_h$
has consequences as to which other players will also be in such a
weight-$(q-1)$
coalition~$S$:

\begin{description}
\item[Case 1:] If $S$ contains a player from~$W$ (with weight, say,
$q-q_1-j-1$ for some~$j$, $1 \leq j \leq k$), $S$ also has to
contain those players from $A\cup 
C\cup
M$ whose weights sum up to $q_1$
and $j$ players from $X$ with weight~$1$, but no players from~$Z_i$, for any $i\in\{1,\ldots,h\}$.
Indeed, a player of weight $z_i > k$ is too heavy to replace the
players from~$X$, and by assumption~(\ref{eq:assumption-for-t})
for~$t$, the players from $X \cup \bigcup_{i=1}^{h} Z_i$
cannot achieve the weight of any of the players from $A\cup 
C\cup
M$, so a
total weight of $q_1$ can be achieved only by the players in
$A\cup 
C \cup
M$  
(but not $q_1+j$ because any value achieved by the players is divisible by $10^t>j$).
Also, recall that $q_1$ can be achieved only by a set
of players whose weights take exactly one of the values from
$\{a_i,b_i\}$ for each $i\in\{1,\ldots,n\}$,
so $S$ must contain exactly $n-k$ players from   
$A$ that
already are in~$\mathcal{G}$ (either $a_i$ or $b_i$, for $k+1 \leq
i \leq n$) and exactly $k$ players from $M$ (either $a_i$ or
$b_i$, for $1 \leq i \leq k$); these $k$ players must have been
added to the game, i.e., $\|M'\|=k$.

\item[Case 2:] If $S$ contains a player from some
$Y_i$ 
for any $i\in\{1,\ldots,h\}$
(with weight, say, $q-1-j z_i$
for some~$j$, $0 \leq j \leq \ell_i$), then either $S$ already
achieves the weight $q-1$ for $j=0$, or $S$ has to contain $j>0$
players from~$Z_i$. The players
from 
$X\cup \bigcup_{i'=1}^{i-1} Z_{i'}$
(assuming that a sum from $1$ to $0$ is equal to~$0$)
are not heavy enough due to 
$z_i > k + \sum_{i'=1}^{i-1}\ell_{i'}z_{i'}$ and since each player from
$A\cup C\cup M$     
and each player from $Z_{l}$, $i<l\le h$, has a weight larger than $\ell_i z_i$  
together with all other $\ell_{i'}z_{i'}$, $1\le i' \le h$, $i' \neq i$,
and all players from~$X$.
\end{description}

Since there are no players with weights $a_i$ or $b_i$ for
$i\in\{1,\ldots,k\}$ in the game $\mathcal{G}$, player~$1$ can be
pivotal only for the coalitions described in the second case above,
and therefore,
\begin{eqnarray*}
\PenroseBanzhaf(\mathcal{G},1) & = &
\frac{\sum_{i=0}^{\ell_1}{\ell_1 \choose i} + \cdots + \sum_{i=0}^{\ell_h}{\ell_h \choose i}}{2^{\|N\|-1}} \\
& = & \frac{2^{\ell_1} + \cdots + 2^{\ell_h}}{2^{\|N\|-1}} \\
& = & \frac{\ell}{2^{\|N\|-1}}.
\end{eqnarray*}

We now prove the correctness of our reduction: $(\phi,k,\ell)$ is a
yes-instance of \textsc{E-ExaSAT} if and only if
$(\mathcal{G},M,1,k)$ is a yes-instance of
\textsc{Control-by-Adding-Players-to-Maintain-$\PenroseBanzhaf$}.

\proofonlyif
Assuming that $(\phi,k,\ell)$ is a yes-instance of \textsc{E-ExaSAT},
there exists an assignment to $x_1,\ldots,x_k$ such that exactly $\ell$
of the assignments to the remaining $n-k$ variables
yields a satisfying assignment for the boolean formula~$\phi$.
Let $M'\subseteq M$ be chosen as in Lemma~\ref{lem:correspondence-assignments-weights}, $\|M'\|=k$, and let $\mathcal{G}_{\cup M'}$ be the new game after adding the players to our game $\mathcal{G}$.
Since there are exactly $\ell$ truth assignments to $x_{k+1},\ldots,x_n$ for a fixed assignment to the first $k$ variables which together satisfy $\phi$, there are exactly $\ell$ subsets
of $A \cup C \cup M'$ whose elements sum up to $q_1$.
Now, with the players from $W\cup X$, each of these subsets can form $2^{k}-1$ coalitions for which player~$1$ is pivotal in~$\mathcal{G}_{\cup M'}$.
Therefore,
\begin{eqnarray*}
\PenroseBanzhaf(\mathcal{G}_{\cup M'},1)
& = & \frac{\ell +\left(2^{k}-1\right)\ell}{2^{\|N\|+k-1}}
\\ & = &
\frac{\ell 2^{k}}{2^{\|N\|+k-1}} \\
& = & \frac{\ell}{2^{\|N\|-1}} \\
& = &
\PenroseBanzhaf(\mathcal{G},1),
\end{eqnarray*}
so
the new Penrose--Banzhaf index of player~$1$ remains unchanged.

\proofif Assume now that there does not exist any assignment to
$x_1,\ldots,x_k$ such that exactly $\ell$ assignments to the
remaining $n-k$ variables satisfy the boolean formula~$\phi$, i.e.,
for each assignment to $x_1,\ldots,x_k$, there exist either fewer or
more than $\ell$ assignments to $x_{k+1},\ldots,x_n$ such that $\phi$
is satisfied.
The only possible way to maintain the Penrose--Banzhaf
power index of player~$1$ is to add to the game the new players from
$M'\subseteq M$ that uniquely correspond to the assignments to
$x_1,\ldots,x_k$ as defined in the proof of
Lemma~\ref{lem:correspondence-assignments-weights} (recall that we assume in the problem definition that at least one player must be added).  This can be seen
as follows:
\begin{itemize}
\item If $\|M'\|<k$, there exists some $i\in\{1,\ldots,k\}$
such that the new game $\mathcal{G}_{\cup M'}$ does not
contain any player with weight $a_i$ or $b_i$, so it is
impossible to find a subset of players with weight $q_1$ and
therefore there is no new coalition for which player~$1$ can
be pivotal.

\item If $\|M'\|=k$ and $M'$ contains both players with
weights $a_j$ and $b_j$ for some $j\in\{1,\ldots,k\}$, then we
get the same situation as in the previous case, because there
has to exist some $i'\in\{1,\ldots,k\}$ such that neither the
player with weight~$a_{i'}$ nor the player with
weight~$b_{i'}$ was added.
\end{itemize}
Consequently, the Penrose--Banzhaf power index of player~$1$ decreases
when $\ell\ge 1$, because the denominator increases.

Now let $M'\subseteq M$ be any subset of players that corresponds to
some assignment to $x_1,\ldots,x_k$.  By
Lemma~\ref{lem:correspondence-assignments-weights} and 
our assumption,
there are fewer or more than $\ell$ subsets of $A\cup C \cup M'$ such that
the players' weights in each subset sum up to~$q_1$.  As in the proof
of the ``Only if'' direction, for each
$j\in\{1,\ldots,k\}$, each
of these subsets of $A\cup C \cup M'$ forms a coalition of weight $q-1$
with a player in $W$ having weight $q-q_1-(j+1)$ and $j$ players
in~$X$; and there are ${k \choose j}$ of them.  Therefore, again
recalling from Case~2 above that 
$\bigcup_{i=1}^{h} (Y_i \cup Z_i)$ already contains
$\ell$ coalitions of weight $q-1$,
either
\begin{align*}
\PenroseBanzhaf(\mathcal{G}_{\cup M'},1)
&> \frac{\ell+(2^{k}-1)\ell}{2^{\|N\|+k-1}} \\
&= \frac{\ell2^{k}}{2^{\|N\|+k-1}}
=\frac{\ell}{2^{\|N\|-1}}
=\PenroseBanzhaf(\mathcal{G},1)
\end{align*}
or
\begin{align*}
\PenroseBanzhaf(\mathcal{G}_{\cup M'},1)
&< \frac{\ell+(2^{k}-1)\ell}{2^{\|N\|+k-1}} \\
&= \frac{\ell2^{k}}{2^{\|N\|+k-1}}
=\frac{\ell}{2^{\|N\|-1}}
=\PenroseBanzhaf(\mathcal{G},1),
\end{align*}
which means that the value of the Penrose--Banzhaf index of player~$1$
has changed.

Now, let $\gamma=\ShapleyShubik$. We will again prove
$\np^{\PP}$-hardness by using a reduction from
$\textsc{E-ExaSAT}$.  Let ($\phi$,$k$,$\ell$) be an instance of
$\textsc{E-ExaSAT}$, where $\phi$ is a boolean formula in CNF with
variables $x_1,\dots,x_n$ and $m$ clauses, and $\ell\ge 1$.

First, we need to define some values we will use in our reduction. For some $h\in\mathbb{N}$, let $\ell_1,\ldots,\ell_h \in \mathbb{N}$, $\ell_1 > \cdots > \ell_h$, be such that
$$\ell=2^{\ell_1}+\cdots + 2^{\ell_h}\le 2^n$$
(so, $h, \ell_1 \le n$).
Let 
$$\alpha=n^4+ 2n^3 + 13n^2 + 8n + (3n+3)m(r+1)+2$$
with $\alpha\ge 256$ (note that then $\alpha \ge 4\log_{2}^{2}\alpha$
and this holds for $n\ge 3$), and define:
\begin{align*}
P   &= \alpha^2 - k,\\
z^{*} &= 2k\floor{\log_2 \alpha} + \ell_1,\\
s    &= n+m(r+1)+z^{*}+1, \text{for $r=\ceil{\log_2 n}-1$, and}\\
k'   &=\frac{(P+1)\cdots (P+k)}{(P-s)\cdots (P+k-1-s)}.
\end{align*}
Further, define
$$y=(P-s)\cdots (P+k-1-s)$$
and let $y_1,\ldots,y_{h'}\in\mathbb{N}$, $y_1 > \cdots > y_{h'}$, be such that
$$y=2^{y_1}+\cdots + 2^{y_{h'}},$$
and define
$$z=(P+1)\cdots (P+k)-y$$
and let $z_1,\ldots,z_{h''}\in\mathbb{N}$, $z_1 > \cdots > z_{h''}$, be such that
$$z=2^{z_1}+\cdots + 2^{z_{h''}}.$$
Note that $y,z < (P+k)^{k}$, and therefore,
\[
y_1, z_1, h', h'' < 2k\log_2 \alpha.
\]
\OMIT{
Moreover, let $u_1=1$ and for $i\in\{2,\ldots,h\}$, let
$$u_i = 1 + \sum_{j=1}^{i-1} z_j u_j,$$
$u_1 ' = (\ell_{h''}+1)u_{h''}$ and for $i \in\{2,\ldots,z_1 +1\}$, let
$$u_i ' = 1 + \sum_{j=1}^{i-1} (z^{*}-j+1) u_j ',$$
$v_1 = (z^{*}-z_1+1)u_{z_1 +1}'$ and for $i \in \{2,\ldots,h\}$, let
$$v_i = 1 + \sum_{j=1}^{i-1} \ell_j v_j,$$
$w_1 = (\ell_h + 1)v_h$ and for $i\in\{2,\ldots,h'\}$, let
$$w_i  = 1 + \sum_{j=1}^{i-1} y_j w_j ,$$
$w_1 ' = (y_{h'}+1)w_{h'}$ and for $i\in\{2,\ldots, \ell_1 + y_1 +1\}$, let
$$w_i ' = 1 + \sum_{j=1}^{i-1} (s-j) w_j '.$$
} %
Let $t' \in \mathbb{N}$ be such that 
\begin{equation*}
10^{t'} > \max\left\{2^{\ceil{\log_{2} n}+1}, (\ell_1+ y_1 + 2)w'_{\ell_1 + y_1 + 1}\right\},
\end{equation*}
for $w'_{\ell_1 + y_1 + 1}$ as defined in Table~\ref{tab:adding-maintain-SSI},
and for this~$t'$, given $\phi$ and~$k$, we define the values of~$t$, $q_2$
$W_A$, $W_M$, $W_C$, and $W_{C'}$ as in Set~2 of
Definition~\ref{def:prereduction}.

\begin{table*}
	\caption{\label{tab:adding-maintain-SSI}
		Groups of players in the proof of
		Theorem~\ref{restofresults}(d), with their categories,
		numbers, and weights
	}
\begin{center}
	\renewcommand{\arraystretch}{1.5}
	\begin{tabular}{c|c|c|c}
		\toprule
		\textbf{Category} & \textbf{Group} & \textbf{Number of Players} & \textbf{Weights} \\
		\midrule
		& distinguished player~$1$ & $1$ & $1$ \\
		\midrule
		(sat) & $A$ & $2n-2k$ & $W_A$ \\
		\midrule
		(sat) & $C$ & $m(r+1)$ & $W_C$ \\
		\midrule 
		(sat) & $C'$ & $m(r+1)$ & $W_{C'}$ \\
		\midrule
		(num) & \makecell{$T_i$ \\ for $i\in\{1,\ldots,h''\}$} & $z_i + 1$ & \makecell{$q-q_2-j u_i - (z^{*}-j)u_{j+1}'-1$ \\ for $j\in\{0,\ldots,z_i\}$} \\
		\midrule
		(num) & \makecell{$U_i$ \\ for $i\in\{1,\ldots,h''\}$} & $z_i$ &
		$u_i = 1 + \sum_{j=1}^{i-1} z_j u_j$ \\
		\midrule
		(size) & \makecell{$U_i '$ \\ for $i\in\{0,\ldots,z_1\}$} & $z^{*}-i$ &
		$u_{i+1} ' = (z_{h''}+1)u_{h''} +  \sum_{j=1}^{i-1} (z^{*}-j+1) u_j '$ \\
		\midrule
		(num) & \makecell{$V_i$ \\ for $i\in\{1,\ldots,h\}$} & $\ell_i$ & $v_i = (z^{*}-z_1+1)u_{z_1 +1}' + \sum_{j=1}^{i-1} \ell_j v_j$ \\
		\midrule
		(num) & \makecell{$W_i$ \\ for $i\in\{1,\ldots,h'\}$} & $y_i$ & $w_i = (\ell_h + 1)v_h + \sum_{j=1}^{i-1} y_j w_j$ \\
		\midrule
		(size) & \makecell{$W_i '$ \\ for $i\in\{1,\ldots,\ell_1 + y_1 + 1\}$} & $s-i$ & $w_i ' = (y_{h'}+1)w_{h'} + \sum_{j=1}^{i-1} (s-j) w_j '$ \\
		\midrule
		(num) & \makecell{$X_{i,j}$ \\ for $i \in\{1,\ldots,h\}$, $j \in\{1,\ldots,h'\}$} & 
		$(\ell_i +1)(y_j +1)$ & \makecell{$q-\ell_{i}' v_{i} - \ell_{j}' w_{j} - (s-1-\ell_{i}'-\ell_{j}') w_{\ell_i ' +\ell_j ' +1}' - 1$ \\    
			for $\ell_{i}' \in\{0,\ldots,\ell_{i}\}$, $\ell_{j}'\in\{0,\ldots,y_{j}\}$} \\
		\midrule
		& $Y$ & remaining players & $q$ \\
		\bottomrule
	\end{tabular}
\end{center}
\end{table*}

Now, we construct the instance of our control problem,
\textsc{Control-by-Adding-Players-to-Maintain-$\ShapleyShubik$}.
Let $k$ be the limit for the number of players that can be added, and
let $M$ be the set of $2k$ players that can be added with the list of
weights~$W_M$.
Let $N$ be the set of $P$ players in the  game $\mathcal{G}$, subdivided
into the following groups with their categories, numbers, and weights
as presented in Table~\ref{tab:adding-maintain-SSI}.
Among the players from $A\cup M\cup C \cup C'$, we will focus on those subsets whose total weight is~$q_2$.
The players from 
$\bigcup_{i=1}^{h''}(T_i \cup U_i)$
and from 
$\bigcup_{i=1}^{h'}\left(W_i \cup \bigcup_{j=1}^{h} X_{i,j}\right) \cup \bigcup_{i=1}^h V_i$
define the number of coalitions for which the distinguished
player~$1$ can be pivotal, and the players from sets
$U_i '$, $i\in\{0,\ldots,z_1\}$, and $W_i '$, $i\in\{1,\ldots,\ell_1 + y_1 + 1\}$, 
make all these coalitions equally large.
In the following, we will discuss these coalitions in detail.

\OMIT{
Let us check the number of all players without those from~$Y$:
\begin{align*}
	& 1 + 2n-2k + 2m(r+1) + \sum_{i=1}^{h''}(z_i + 1) + \sum_{i=1}^{h''} z_i + \sum_{i=0}^{z_1} (z^{*}-i) \\ 
	& + \sum_{i=1}^{h} \ell_i + \sum_{i=1}^{h'} y_i + \sum_{i=1}^{\ell_1+y_1+1}(s-i) + \Big(\sum_{j_1=1}^{h}(\ell_{j_1}+1)\Big)\Big(\sum_{j_2=1}^{h'}(y_{j_2}+1)\Big) \\
	\le &~ 1 + 2n-2k + 2m(r+1) + z_1 (z_1 + 1) + z_{1}^{2} + (z_1 +1)z^{*} - \frac{1}{2}z_1 (z_1 + 1) \\
	& + \ell_{1}^{2} + y_{1}^{2} + (n+m(r+1)+1+z^{*})(\ell_1+y_1 + 1) \\
	& - \frac{1}{2}(\ell_1+y_1 + 1)(\ell_1 + y_1 + 2) + \ell_1 y_1 (\ell_1 +1) (y_1 + 1) \\
	\le &~ 1 + 2n-2k + 2m(r+1) + 2z_{1}^{2} \\
	& + z_1   + (z_1 +1)\cdot 2k\log_2 \alpha + \ell_1 z_1 + \ell_1 - \frac{1}{2}z_{1}^{2} - \frac{1}{2}z_1 \\
	& + \ell_{1}^{2} + 4k^2 \log_{2}^{2}\alpha + (n+m(r+1)+1+2k\log_2 \alpha + \ell_1)(\ell_1+y_1 + 1) \\
	& - \frac{1}{2}(\ell_1+y_1 + 1)^2 + \ell_1 y_1 (\ell_1 +1) (y_1 + 1) 
\end{align*}
\begin{align*}
	< &~ 1 + 2n-2k + 2m(r+1) + 8k^{2} \log_{2}^{2} \alpha + 2k\log_2 \alpha   + 4k^2 \log_{2}^{2}\alpha \\
	& + 2k\log_2 \alpha  + \ell_1 \cdot 2k\log_2 \alpha + \ell_1 
	+ \ell_{1}^{2} \\
	&  + 4k^2 \log_{2}^{2}\alpha + (n+m(r+1)+1+2k\log_2 \alpha + \ell_1)(\ell_1+2k \log_2 \alpha + 1) \\
	& 
	+ \ell_{1}^2 \cdot 4k^2 \log_{2}^{2} \alpha + \ell_1 \cdot 4k^2 \log_{2}^{2}\alpha \\
	& + \ell_{1}^{2} \cdot 2k\log_2 \alpha + \ell_1 \cdot 2k \log_2 \alpha  \\
	\le &~ 4\log_{2}^{2}\alpha \Big(1+(\ell_1 + k +3)n + (\ell_1 + k +3)m(r+1) + 3k^2 + k  \\
	& + \ell_1 k + \ell_1 
	+ \ell_{1}^{2} + k^2 
	+ \frac{1}{2}(\ell_1 + 2)^2 \\
	& + \ell_{1}^2 k^2 + \ell_1 k^2 + \ell_{1}^2 k + \ell_1 k \Big) \\
	\le &~ \alpha \Big(1 + 2n^2 + 3n + (2n +3)m(r+1) + 2n^2  + n^2 + 2n + 3n^2 \\
	& + \frac{1}{2}(n+2)^2 + n^4 + 2n^3 + n^2 \Big) \\
	\le & ~ \alpha \Big(n^4 + 2n^3 + 10n^2 + 8n + (2n+3)m(r+1) + 2 - k \Big) \\
	\le & ~ \alpha (\alpha - k) \le P.
\end{align*}
} %

Finally, let $q^{*}$ be the total weight of all players from 
\[
\left(N\setminus \left(\bigcup_{i=1}^{h''}T_i \cup \bigcup_{i=1}^{h}\bigcup_{j=1}^{h'}\left(X_{i,j} \cup Y\right)\right)\right)\cup M
\]
and define the quota of $\mathcal{G}$ by
\[
q=2q^{*}+1.
\]

Let us first discuss which coalitions player~$1$ can be pivotal for
in any of the games $\mathcal{G}_{\cup M'}$ for some
$M' \subseteq M$.\footnote{This also includes the case of the
	unchanged game~$\mathcal{G}$ itself, namely for
	$M' = \emptyset$.}
Player~$1$ is pivotal for those coalitions of players in
$(N\setminus \{1\}) \cup M'$ whose total weight is $q-1$.
First, note that any two players from 
$\left(\bigcup_{i=1}^{h''}T_i\right) \cup \left(\bigcup_{i=1}^{h}\bigcup_{j=1}^{h'}X_{i,j}\right)$
together have a
weight larger than~$q$.  Therefore, at most one player from 
$\left(\bigcup_{i=1}^{h''}T_i\right) \cup \left(\bigcup_{i=1}^{h}\bigcup_{j=1}^{h'}X_{i,j}\right)$
can be in any coalition player~$1$ is pivotal for.  Moreover,
by the definition of our quota, all players from 
\begin{align*}
	& A \cup C \cup C' \cup M \cup \\
	& \left(\bigcup_{i=1}^{h''}U_i\right) \cup
	\left(\bigcup_{i=0}^{z_1} U_i '\right) \cup
	\left(\bigcup_{i=1}^{h} V_i\right) \\
	& \cup
	\left(\bigcup_{i=1}^{h'}W_i\right)  \cup
	\left(\bigcup_{i=1}^{\ell_1 + y_1 + 1} W_i '\right)
\end{align*}
together have a total weight smaller than $q-1$. 
That means that any coalition $S \subseteq (N\setminus \{1\}) \cup
M'$ with a total weight of $q-1$ has to contain \emph{exactly} one of
the players in 
$\left(\bigcup_{i=1}^{h''}T_i\right) \cup
\left(\bigcup_{i=1}^{h}\bigcup_{j=1}^{h'}X_{i,j}\right)$.
Now, whether this player is in $T_i$ or in~$X_{j_1,j_2}$
for some $i$, $j_1$, and~$j_2$ with
$1\le i\le h''$, $1\le j_1\le h$, and $1\le j_2 \le h'$,
has consequences as to which other players will also be in such a
weight-$(q-1)$ coalition~$S$:

\begin{description}
	\item[Case 1:] If $S$ contains a player from~$T_i$ (with weight, say, 
	$q-q_2-j_i u_i-\left(z^{*}-j_i\right)u_{j_i +1}'-1$ 
	for some~
	$j_i$, $0 \leq j_i \leq z_i$), $S$ also has to
	contain those players from $A\cup 
	C\cup C' \cup M$ whose weights sum up to $q_2$,
	all players from $U_{j_i +1}'$,
	and $j_i$ players from 
	$U_i$
	with weight~$u_i$, but no players 
	from~$\bigcup_{i=1}^h V_i$, $\bigcup_{i=1}^h W_i$,
	or $\bigcup_{i=1}^{\ell_1 + y_1 + 1} W_i '$.
	Also, recall that $q_2$ can be achieved only by a set
	of players whose weights take exactly one of the values from
	$\{a_i,b_i\}$ for each $i\in\{1,\ldots,n\}$,
	so $S$ must contain exactly $n-k$ players from   
	$A$ that
	already are in~$\mathcal{G}$ (either $a_i$ or $b_i$, for $k+1 \leq
	i \leq n$) and exactly $k$ players from $M$ (either $a_i$ or
	$b_i$, for $1 \leq i \leq k$); these $k$ players must have been
	added to the game, i.e., $\|M'\|=k$.

	\item[Case 2:] If $S$ contains a player from $X_{j_1,j_2}$
	(with weight, say, $q-\ell_{j_1}' v_{j_1} - \ell_{j_2}' w_{j_2} - (s-1-\ell_{j_{1}}'-\ell_{j_{2}}') w_{\ell_{j_{1}}+\ell_{j_2} +1}' - 1$
	for some~$\ell_{j_1}'$, $0 \leq \ell_{j_1}' \leq \ell_{j_1}$, and some~$\ell_{j_2}'$, $0 \leq \ell_{j_2}' \leq y_{j_2}$), then either $S$ already
	achieves weight $q-1$ for $\ell_{j_1}'=\ell_{j_2}'=0$, or $S$ has to contain $\ell_{j_1}'$ players from~$V_{j_1}$ and
	$\ell_{j_2}'$
	players from~$W_{j_2}$,
	and $s-1-\ell_{j_1}'-\ell_{j_2}'$ players from~$W_{\ell_{j_1}'+\ell_{j_2}'+1}'$.
\end{description}
Note that all coalitions described above have the same size of~$s$.

Since there are no players with weights $a_i$ or $b_i$ for
$i\in\{1,\ldots,k\}$ in the game $\mathcal{G}$, player~$1$ can be
pivotal only for the coalitions described in the second case above,
and therefore,
\begin{align*}
	\ShapleyShubik(\mathcal{G},1) & =\Big(2^{y_1}+\cdots+2^{y_{h'}}\Big)\Big(2^{\ell_1}+\cdots + 2^{\ell_h}\Big)\frac{s!(P-1-s)!}{P!} \\
	& =y\cdot \ell\cdot \frac{s!(P-1-s)!}{P!}>0.
\end{align*}

We now show the correctness of our reduction: $(\phi, k, \ell)$ is a
yes-instance of $\textsc{E-ExaSAT}$ if and only if
$(\mathcal{G}, M, 1, k)$ as defined above is a yes-instance of
\textsc{Control-by-Adding-Players-to-Maintain-$\ShapleyShubik$}.

\proofonlyif
Suppose that $(\phi, k, \ell)$ is a yes-instance of $\textsc{E-ExaSAT}$,
i.e., there exists an assignment to $x_1,\ldots,x_k$ such that exactly
$\ell$ assignments to the remaining $n-k$ variables 
yields a satisfying assignment for the boolean formula~$\phi$.
Let us fix one of these
satisfying assignments.  From this fixed assignment, we get the
vector $\vec{d} = (d_1,\ldots,d_n)$ as defined in the proof of
the analogue of Lemma~\ref{lem:correspondence-assignments-weights}
for Set~2 and $q_2$
from Definition~\ref{def:prereduction2}, where the first
$k$ positions correspond to the players $M'\subseteq M$, $\|M'\|=k$,
which we add to the game~$\mathcal{G}$.

Since there are exactly $\ell$ assignments to
$x_{n-k},\ldots,x_n$ which---together with the fixed assignments to
$x_1,\ldots,x_k$---satisfy~$\phi$, by
the analogue of Lemma~\ref{lem:correspondence-assignments-weights}
for Set~2 and $q_2$
from Definition~\ref{def:prereduction2},
there are exactly $\ell$ subsets of
$A\cup C \cup C' \cup M'$
such that the players' weights
in each subset sum up 
to~$q_2$.
Each of these subsets with total weight
$q_2$
can form 
coalitions of weight $q-1$
(i.e., coalitions player~$1$ is pivotal for in the new
game~$\mathcal{G}_{\cup M'}$) with each player from 
$\bigcup_{i=1}^{h''}T_i$---and there are $2^{z_1}+\cdots +2^{z_{h''}}=z$ such coalitions.
Therefore, recalling from Case~2 above that 
player~$1$ is 
already pivotal for
$y\cdot \ell$ coalitions of weight $q-1$, we have
\begin{align*}
	\ShapleyShubik(\mathcal{G}_{\cup M'},1) 
	= & ~ \Big(y\cdot \ell + z\cdot \ell\Big)\frac{s!(P+k-1-s)!}{(P+k)!} \\
	= & ~ \Big(y\cdot \ell + ((P+1)\cdots (P+k)-y)\cdot \ell\Big) \\
	& \cdot \frac{s!(P-1-s)!}{P!}\frac{(P-s)\cdots (P+k-1-s)}{(P+1)\cdots (P+k)} \\
	= & ~ \Big(y\frac{(P+1)\cdots (P+k)}{y}\cdot \ell\Big)\frac{1}{k'}\frac{s!(P-1-s)!}{P!} \\
	= & ~ \Big(y\cdot k' \cdot \ell\Big)\frac{1}{k'}\frac{s!(P-1-s)!}{P!} = \ShapleyShubik(\mathcal{G},1), 
\end{align*}
so player~$1$'s Shapley--Shubik index remains unchanged,
i.e., we have constructed a yes-instance of our control problem.

\proofif
Assume now that $(\phi, k, \ell)$ is a no-instance of $\textsc{E-ExaSAT}$,
i.e., there does not exist any
assignment to the variables $x_1,\ldots,x_k$ such that exactly $\ell$ assignments to the remaining $n-k$ variables 
yields a satisfying assignment for the boolean
formula~$\phi$.  In other words, for each assignment to $x_1,\ldots,x_k$,
there exist either fewer or more than $\ell$ assignments to $x_{k+1},\ldots,x_n$
which satisfy~$\phi$.
Again, we consider subsets $M'\subseteq M$ of players that uniquely
correspond to the assignments to $x_1,\ldots,x_k$ according
to~(\ref{addingplayers-increasePBI:def:d_i}).
Note that
\begin{itemize}
	\item if $\|M'\|<k$, then there exists some $i\in\{1,\ldots,k\}$ such
	that the new game $\mathcal{G}_{\cup M'}$ does not contain any
	player of weight $a_i$ or~$b_i$, so it is impossible to find a
	subset of players with total weight
	$q_2$ and, therefore, there is
	no new coalition player~$1$ may be pivotal for;
	\item if $\|M'\|=k$ and $M'$ contains both the player of weight $a_j$
	and the player of weight $b_j$ for some $j \in \{1,\ldots,k\}$, then
	we get the same situation as in the previous case: There is no new
	coalition player~$1$ may be pivotal for
	because there is some $j' \in \{1,\ldots,k\}$ such that neither the player with weight~$a_{j'}$ nor the player with weight~$b_{j'}$ has been added to~$\mathcal{G}$.
\end{itemize}
In both cases above, the Shapley--Shubik index of player~$1$  decreases.

Now let $M'\subseteq M$ be any subset of players that corresponds to
some assignment to $x_1,\ldots,x_k$.  By
the analogue of Lemma~\ref{lem:correspondence-assignments-weights} for
Set~2 and~$q_2$ and our assumption,
there are either fewer or more than $\ell$ subsets of $A\cup C \cup C' \cup M'$ such that
the players' weights in each subset sum up
to~$q_2$.  As in the proof
of the ``Only if'' direction,
each of these subsets of $A\cup C \cup C' \cup M'$ forms
a coalition of weight $q-1$ with a player in $\bigcup_{i=1}^{h''}T_i$ and some players from $\left(\bigcup_{i=1}^{h''} U_i\right)\cup\left(\bigcup_{i=1}^{z_1}U_i '\right)$---and there are
$z$ of them.
Therefore, again
recalling from Case~2 above that
player~$1$ is already pivotal for
$y\cdot \ell$ coalitions, 
we have either
\begin{align*}
	\ShapleyShubik(\mathcal{G}_{\cup M'},1) &
	> \Big(y\cdot \ell + z\cdot \ell\Big)\frac{s!(P+k-1-s)!}{(P+k)!} \\
	& = \Big(y\cdot k' \cdot \ell\Big)\frac{1}{k'}\frac{s!(P-1-s)!}{P!} = \ShapleyShubik(\mathcal{G},1)
\end{align*}
or
\begin{align*}
	\ShapleyShubik(\mathcal{G}_{\cup M'},1) &
	< \Big(y\cdot \ell + z\cdot \ell\Big)\frac{s!(P+k-1-s)!}{(P+k)!} \\
	& = \Big(y\cdot k' \cdot \ell\Big)\frac{1}{k'}\frac{s!(P-1-s)!}{P!} = \ShapleyShubik(\mathcal{G},1).
\end{align*}
Thus, also in this case, the Shapley--Shubik index of player~$1$
cannot stay unchanged by adding up to $k$ players from $M$ to the
game~$\mathcal{G}$, and we have a no-instance of our control
problem.~\eproofof{Theorem~\ref{restofresults}(d)}

\end{document}